\pgfplotsset{width=6cm,compat=1.9}
\newcommand*{\textlabel}[2]{%
  \edef\@currentlabel{#1}% Set target label
  \phantomsection% Correct hyper reference link
  #1\label{#2}% Print and store label
}
\newenvironment{proof}[1][Proof]{\noindent\textbf{#1.} }{\ \rule{0.5em}{0.5em}}
\newcolumntype{L}[1]{>{\raggedright\let\newline\\arraybackslash\hspace{0pt}}m{#1}}
\newcolumntype{C}[1]{>{\centering\let\newline\\arraybackslash\hspace{0pt}}m{#1}}
\newcolumntype{R}[1]{>{\raggedleft\let\newline\\arraybackslash\hspace{0pt}}m{#1}}
\newtheorem{theorem}{Theorem}
\newtheorem{proposition}{Proposition}
\newtheorem{observation}{Observation}
\newtheorem{remark}{Remark}
\newtheorem{lemma}{Lemma}
\newtheorem{claim}{Claim}
\newtheorem{definition}{Definition}
\title{Random Double Auction: A Robust Bilateral Trading Mechanism}
\author{Wanchang Zhang\thanks{Department of Economics, University of California, San Diego. Email: waz024@ucsd.edu. I am indebted to  Songzi Du and Joel Sobel for stimulating discussions. I thank Sarah Auster, Snehal Banerjee,  Yi-Chun Chen, Jeffrey Ely,  Simone Galperti, Xiangqian Yang and Haoxiang Zhu for helpful comments. The paper was previously circulated with the title of ``Robust Bilateral Trade Mechanisms with Known Expectations''.  }}
\date{First Draft: February 2021. Current Draft: April 2022}
\begin{document}

\maketitle
\begin{abstract}
   I construct a novel \textit{random double auction} as a robust bilateral trading mechanism for a profit-maximizing intermediary who facilitates trade between a buyer and a seller. It works as follows. The intermediary publicly commits to charging a \textit{fixed commission fee} and \textit{randomly} drawing a \textit{spread} from a uniform distribution. Then the buyer submits a bid price and the seller submits an ask price simultaneously. If the difference between the bid price and the ask price is greater than the realized spread, then the asset is transacted at the \textit{midpoint price}, and each pays the intermediary half of the fixed commission fee. Otherwise,  no trade takes place, and no one pays or receives anything. I show that the random double auction maximizes the worst-case expected profit across all \textit{dominant-strategy incentive compatible} and \textit{ex-post individually rational } mechanisms for the symmetric case. I also construct a robust trading mechanism with similar properties for the asymmetric case.\\
\indent\textbf{Keywords:} Robust mechanism design, bilateral trade, double auction, profit maximization, information design,  correlated private values, max-min, worst-case, dominant-strategy incentive compatible, ex-post individually rational, randomization.\\
\indent\textbf{JEL Codes:} C72, D44, D82, D83.

\end{abstract}
\newpage

\section{Introduction}\label{s1}
At every moment, a huge amount of trades are facilitated  by intermediaries   charging fees for their  intermediary services in matching buyers with sellers. For example, stocks are  transacted through  a trading platform that  typically gets compensation by means of  commissions;   cars are  transacted through  an automobile  dealer who  charges dealer fees;  many bonds, commodities and derivatives are transacted in the over-the-counter market (OTC) where a market maker earns profits through the bid-ask spread.\\
\indent In reality, there are many situations in which the uncertainty of the value of the asset being  traded is huge, e.g., a newly public stock, and Tesla's new model. Intermediaries may then know little of the concerned parties' willingness to trade and only have an overall estimate about it. Given the huge uncertainty towards the two-sided market, it is natural for the intermediary to seek for a trading mechanism that \textit{guarantees} a good profit.   How should a profit-maximizing intermediary design trading rules in such situations? Would the intermediary still be able to guarantee a positive profit and thus have strict incentives to offer intermediary services? \\
\indent To answer these questions, I  study the design of profit-maximizing bilateral trading mechanisms  when the intermediary has limited  information about the value distribution of the buyer and the seller. Specifically,  I assume that the intermediary knows only  the expectations of the private values of the buyer and the seller, but does not know  the joint distribution of the private values\footnote{That is, the intermediary knows neither the marginal distributions nor the correlation structure except for the expectations of the private values.}. While there are many different ways to model the intermediary's limited information and the results in this paper can be extended to models where the intermediary knows, for example,  higher moments of the  value distribution, I focus attention on the model where the intermediary only knows the expectations for two reasons. First,  the knowledge of the expectations is arguably  the minimal amount of information under which, as I will show,  one obtains a non-trivial answer. Therefore, this model can be viewed as a natural benchmark. Second, in practice,  the expectations are   often known with a higher degree of confidence compared with, for example,  higher moments. I refer to any joint distribution consistent with the known expectations as a \textit{feasible value distribution}.   The intermediary evaluates any trading mechanism by the expected profit in the worst case over all feasible value distributions, referred to as the \textit{profit guarantee}. The objective of the intermediary is to find a trading mechanism, referred to as a \textit{maxmin trading mechanism},  that maximizes the profit guarantee across all dominant-strategy
incentive compatible (DSIC) and ex-post individually
rational (EPIR) mechanisms. At a high level, this study follows the ``Wilson doctrine''  (\cite{wilson1987game}) that motivated the search for economic institutions not sensitive to unrealistic assumptions about the information structure.\\
\indent The main finding is that a \textit{random double auction} is a maxmin trading mechanism in  the symmetric\footnote{Note that the higher the seller's value, the lower his willingness to trade. Thus it is plausible to regard the highest-value seller as the lowest-type seller. When the known expectations sum up to 1, the expectation of the buyer's value and the expectation of the seller's value  have the same distance from the  lowest-type buyer and the lowest-type seller respectively. Therefore I refer to this case as the symmetric case. The  symmetric case captures situations in which  both parties have similar willingness to trade.} case in which  the known expectations of the values of the buyer and the seller sum up to the upper bound of the values, which is normalized to 1. This mechanism essentially works as follows.\\
\textbf{Step 0}: \textit{fixed commission fee}.  The intermediary publicly commits to charging a fixed commission fee $r>0$.\footnote{$r$ is characterized by the known expectations, details of which are given in Section \ref{s51}} \\
\textbf{Step 1}: \textit{uniformly random spread}. The intermediary publicly commits to randomly drawing a spread $s$ uniformly on $[r, 1]$. Then a random spread is drawn whose realization is not observed by either the buyer or the seller. The buyer and the seller both know  $r$ and the uniform distribution on $[r,1]$ from which the random spread is drawn.  \\
\textbf{Step 2}: \textit{midpoint transaction price}.  The buyer submits a bid price $b$, and the seller submits an ask price $a$, simultaneously.  If the difference between the bid price and the ask price is greater than the random spread, or $b - a > s$, then the seller sells the asset to the buyer at the midpoint price $\frac{b+a}{2}$, and each pays the intermediary half of the fixed commission fee $\frac{r}{2}$.  Otherwise, no trade takes place, and no one pays or receives anything. \\
\indent Under this mechanism, the uniformly random spread determines whether the transaction is successful or not for a bid-ask pair $(b,a)$.  That is, trade takes place randomly. Simple calculation renders that trade takes place with a probability $\frac{b-a-r}{1-r}$ when $b-a>r$ and  0 otherwise. Conditional on trading, the transaction price is the midpoint of the bid and the ask; in addition,   the intermediary earns $r$ as a total commission from both parties. Note that conditional on trading, the mechanism is a double auction, which is not strategy-proof (\cite{chatterjee1983bargaining}). The uniformly random spread makes  the mechanism strategy-proof, i.e.,  it is a dominant strategy for the buyer (resp, the seller) to submit a bid (resp, an ask) equal to his private value.\footnote{In a random double auction, the buyer's  payoff from submitting a bid $b$ when his true value is $v_B$ is $\frac{b-a-r}{1-r}\cdot (v_B-\frac{b+a+r}{2})$. It is straightforward that $b=v_B$ maximizes his  payoff regardless of the seller's submitted ask $a$. Similarly, truthful-telling maximizes the  payoff for the seller regardless of the buyer's submitted  bid $b$.} Furthermore, the uniformly random spread hedges against uncertainty about the joint distribution of values. Indeed, the uniformly random spread together with the fixed commission fee makes the intermediary indifferent to any feasible value distribution supported on value profiles where the difference between the buyer's value and the seller's value is higher than $r$, which renders the random double auction a good candidate for a maxmin trading mechanism.  To see this, note that the  profit collected from the bid-ask pair $(b,a)$ if $b-a>r$ is \[\frac{b-a-r}{1-r}\cdot r, \tag{1}\label{1}\]
where the first term is the trading probability and the second term is the fixed commission fee conditional on trading. Note that the profit is linear with respect to both the bid and the ask. Recall that the bid (resp, the ask) is equal to the true value of the buyer (resp, the seller) because the mechanism is strategy-proof. Together with the knowledge of the expectations of the values, the random double auction thus yields the same expected profit for any aforementioned joint distribution.\\ \indent Notably, the features of the  random double auction are familiar in the real world. First, there is extensive evidence that the bid-ask spread of a given financial asset is not constant, but varies over time\footnote{There are alternative explanations on the variations of the bid-ask spread.  For example, the bid-ask spread is affected by the inventory risk (\cite{ho1983dynamics}),  the trading volume (\cite{lee1993spreads}), and the asymmetric information (\cite{hasbrouck1988trades}).}, e.g., the New York Stock Exchange (NYSE) stocks and Chicago Board Options Exchange (CBOE) options. Second,    brokerage firms often adopt the  fixed-commission practice, e.g.,   Interactive Brokers offers fixed-commission plans  for many financial assets\footnote{Interactive Brokers is a brokerage firm. From its official website (interactivebrokers.com), it offers a fixed-commission plan that  charges \$0.005 per share for stocks in US; it also offers a fixed-commission plan that charges \$ 0.065 per contract for NANOS Options on CBOE.}; E*TRADE charges a fixed commission per contract for futures contracts\footnote{E*TRADE is also a brokerage firm. From its official website (us.etrade.com), it   charges \$1.5 per contract  for futures contracts. }. Third,   a double auction is widely used in stock exchanges as well as in dark pools\footnote{A dark pool is a privately organized financial forum or exchange for trading securities that are not accessible by the investing public. Dark pools came about primarily to facilitate block trading involving a huge number of securities.}, e.g.,  the New York Stock
Exchange (NYSE) and the Tokyo Stock Exchange (TSE) use a double auction to determine the opening prices;  block-trading dark pools such as Liquidnet or POSIT typically match orders at
the midpoint of the prevailing bid-ask prices (\cite{duffie2017size}). To my knowledge, the random double auction is a \textit{novel} trading mechanism that  combines the above three commonly observed features.  \\
\indent To show that the random double auction is a maxmin trading mechanism, I  reformulate  the intermediary's  problem into a zero-sum game between the intermediary and 
adversarial nature who chooses a feasible  value distribution to minimize the expected profit, and then I construct a  feasible value distribution, referred to as a \textit{worst-case value distribution}, such that the random double auction and the worst-case value distribution form a Nash equilibrium  of the zero-sum game. Precisely,  (i) the worst-case value distribution minimizes the expected profit over all feasible value distributions  under the random double auction, and (ii) the random double auction maximizes  the expected profit over all DSIC and EPIR trading mechanisms under the worst-case value distribution. Indeed, (i) implies that the random double auction's profit guarantee is the expected profit when adversarial natures chooses the worst-case value distribution, and (ii) implies that no other DSIC and EPIR trading mechanism can yield a strictly higher expected profit under the worst-case value distribution. Hence, (i) and (ii) together imply the random double auction is a maxmin trading mechanism.  \\
\indent The worst-case value distribution is a \textit{symmetric triangular value distribution} that  can be described as follows. The support is a symmetric triangular subset in the set of joint valuations,  which is the same as the trading region\footnote{I refer to the set of value profiles in which trade takes place with a positive probability as the trading region.} of the random double auction. The marginal distribution for the buyer is a combination of a  uniform distribution on $(r,1)$ and an atom on 1, while for the seller is a combination of a  uniform distribution on $(0,1-r)$ and an atom on 0.  The conditional distribution is some truncated generalized Pareto distribution with an atom on 1 (resp, 0) for the buyer (resp, the seller). As I will show, the symmetric triangular value distribution and the   random double auction form a saddle point (Theorem \ref{t1}).\\
\indent Let me briefly illustrate the construction of the symmetric triangular value distribution. As the intermediary is \textit{indifferent} between trading and no trading for any value profile in the trading region of the random double auction except for the value profile where the value of the buyer (resp, the seller) is 1 (resp, 0),  I impose a condition on the joint distribution requiring that  the  ``virtual value''\footnote{In this setting, the ``virtual value'' of a value profile is  that the difference between the buyer's value and the seller's value  minus the sum of their information rents that are pinned down by  dominant-strategy incentive compatibility and the binding ex-post participation
constraints of zero-value buyer and one-value seller. This is  a straightforward adaption of Myerson's virtual value. The details are given in Section \ref{s44}.  } be \textit{zero} for any one of these value profiles, from which I obtain the symmetric triangular value distribution. \\
\indent Indeed, this condition guarantees that the intermediary is indifferent to  any DSIC and EPIR trading mechanism in which 1) ex-post participation constraints are binding for zero-value buyer and one-value seller, and 2) trade does not take place if the value profile lies outside  the support and trade takes place with probability one when the value of the buyer (resp, the seller) is 1 (resp, 0). In addition,  such a trading mechanism is an optimal trading mechanism  given the symmetric triangular value distribution. Interestingly, the symmetric triangular value distribution exhibits positive correlation: if the buyer's value is higher, then the seller's value is more likely to be higher as well. Intuitively, positive correlation levels the maximal gain from trade across value profiles and therefore limits the intermediary's incentive to discriminate across value profiles. Indeed, this particular value distribution renders the intermediary indifferent across all value profiles in the support but one in which  the  value of the buyer (resp, the seller) is 1 (resp, 0).\\
\indent To  show that the symmetric triangular value distribution is a worst-case value distribution under the random double auction, I use the linear programming duality theorem. I consider  adversarial nature's problem of seeking for a worst-case value distribution and derive its dual program. It turns out that the complementary slackness condition holds under the random double auction by \eqref{1}.\\
\indent I extend my analysis to constructing a maxmin trading mechanism for the asymmetric case in which the known expectations of the values of the buyer and the seller sum up to a number other than 1 (Theorem \ref{t2}). I find that the maxmin trading mechanism  shares similar properties to the random double auction in that 1) trade is restricted:  trade does not take place if the difference between a \textit{weighted} bid price and a \textit{weighted} ask price  falls below some threshold; 2)  trade takes place randomly otherwise, albeit with a different randomization device\footnote{The exact trading rule of the  maxmin trading mechanism for the asymmetric case is given in Section \ref{s52}.}.  I follow the saddle point approach for the result. More specifically, I construct a saddle point of the zero-sum game by solving  a similar complementary slackness condition and  a similar 0-virtual-value condition.  \\ 
\indent Randomized trading is a salient property of the constructed mechanisms, for both the symmetric case and the asymmetric case. This requires the intermediary to have  \textit{full commitment power}, which is a standard assumption in the mechanism design literature (e.g., \cite{myerson1981optimal}). However, in practice, it is hard for  traders to check whether the randomization is done according to the specified trading rule. Traders then may not trust the specified randomization. This motivates the search for a  trading mechanism  that maximizes the profit guarantee across all \textit{deterministic} DSIC and EPIR trading mechanisms. Such a trading mechanism is referred to as a   \textit{maxmin deterministic trading mechanism}.   I characterize   the  class of maxmin deterministic trading mechanisms for any pair of expectations with a non-trivial profit guarantee (Theorem \ref{t3}). Examples of  maxmin deterministic trading mechanisms include a \textit{linear trading mechanism}, in which trade takes place with probability one if and only if the difference between a weighted bid price and a weighted ask price exceeds a threshold, and a \textit{double posted-price trading mechanism}, in which trade takes place with probability one if and only if the bid price exceeds a threshold and the ask price falls short of a threshold.\\
\indent In addition, I extend my result to a more general model in which the intermediary can hold the asset. That is, the sum of the two traders' allocations is only required to be weakly less than 1. I show that the random double auction (resp, the constructed trading mechanism in Theorem \ref{t2}) remains a maxmin trading mechanism for the symmetric case (resp, the asymmetric case) (Theorem \ref{t4}). Finally, I  apply my result to an information design problem in which a financial regulator can choose a probability distribution of the  value profile of the buyer and the seller to maximize their welfare. The intermediary,
after observing the choice of the distribution but not the realized joint valuations, designs a
profit-maximizing trading mechanism across DSIC and EPIR trading mechanism. I show that a special symmetric triangular value distribution is a solution to this financial regulator's information design problem (Theorem \ref{t6}).\\ 
\indent The remainder of  the introduction discusses the related literature.  Section \ref{s3} presents the model.  Section \ref{s4} illustrates the methodology.  Section \ref{s5} characterizes the main  results. Section \ref{s6} characterizes the class of maxmin deterministic trading mechanisms. Section \ref{s7} extends and discusses the main results.  Section \ref{s8} is a conclusion. All proofs are in the Appendix. 
\subsection{Related Literature}\label{s2}
This paper contributes to the literature of robust mechanism design.  \cite{carrasco2018optimal}  study the one-dimensional profit-maximizing selling mechanisms when the seller   faced with a single buyer only knows the first $N$ moments of the distribution of the buyer's value ($N$ can be any positive integer), and solve the problem in which the seller only 
knows the expectation of the buyer's value as a special case. In contrast,  this paper studies  multi-dimensional trading  mechanisms when the intermediary knows the expectations of the traders' values.   Similarly,  both papers employ duality theory  and  construct worst-case value distributions to proceed the analysis. Notably, the worst-case value distribution in this paper has a rather intricate correlation structure exhibiting a particular positive correlation, while the worst-case value distribution in that paper is single-dimensional. \\
\indent \cite{koccyiugit2020distributionally} consider a  model of auction design in which adversarial nature chooses the worst-case value distribution subject to symmetric expectations of bidders' values. They find, among other results, that a highest-bidder lottery mechanism is optimal within the DSIC and EPIR mechanisms in which only the highest bidder can be allocated the good. In contrast, I consider a model of bilateral trade and do not place any  restriction on DSIC and EPIR mechanisms or any restriction on the known expectations for the main results. \cite{suzdaltsev2020optimal}  considers a model of auction design in which the auctioneer  only knows the expectation of 
each bidder’s value, but focuses on  deterministic  mechanisms. He shows that a linear version of Myerson's optimal auction is a maxmin mechanism across all deterministic DSIC and EPIR mechanisms. Although the class of deterministic mechanisms is not the main focus of this paper, I characterize maxmin deterministic DSIC and EPIR mechanisms  in a model of bilateral trade (Theorem \ref{t3}).\\
\indent \cite{zhang2021correlation} considers a model of auction design in which the auctioneer  knows the marginal distribution of each bidder's value but does not know the correlation structure. He finds, among other results, that the second-price auction with the uniformly distributed random reserve is a maxmin auction across DSIC and EPIR mechanisms under certain regularity conditions for the two-bidder case. In contrast, this paper considers a model of bilateral trade and assumes that the intermediary only knows the expectations of the traders' values. Methodologically, both papers construct  worst-case value distributions to proceed the analysis. However, as will be clear later, the construction of the worst-case value distribution is more involved in this paper: it requires me to solve a partial integral equation in addition to ordinary differential equations. In addition, in contrast to that paper,  this paper offers a complete characterization for all primitives. \\
\indent \cite{brooks2021maxmin} consider   informationally robust  auction design in the interdependent value environment. They assume that the auctioneer only knows the expectation of each bidder's value, but does not know the correlation structure or  (common-prior) beliefs. They construct a maxmin mechanism across Bayesian mechanisms that  maximizes minimum profit across all correlation and
information structures and all equilibria.  In contrast, my framework assumes that values are known to the agents, and restricts attention to dominant-strategy mechanisms, ruling out issues brought by higher order beliefs. More specifically, dominant-strategy mechanisms are robust to misspecification of agents'  beliefs, and are thus more appropriate for situations in which one can not say much about agents' beliefs. \\
\indent There are other papers seeking robustness to value distributions, e.g.,  \cite{carrasco2019robust}, \cite{auster2018robust}, \cite{bergemann2011robust}, \cite{bergemann2008pricing}.
\cite{carroll2017robustness}, \cite{giannakopoulos2020robust} and \cite{chen2019distribution} focus on  the problem
of selling multiple goods to a single buyer when the value distributions are unknown. A separate strand of papers  focuses on the case in which the designer does not have reliable
information about the agents’ hierarchies of beliefs about each other while assuming
the knowledge of the payoff environment, e.g., \cite{bergemann2005robust}, \cite{chung2007foundations}, \cite{chen2018revisiting}, 
\cite{bergemann2016informationally,bergemann2017first,bergemann2019revenue}, \cite{du2018robust}, 
\cite{brooks2021optimal}, \cite{libgober2021informational}, \cite{yamashita2018foundations}.   \cite{carroll2019robustness} provides an elaborate survey on  various notions of robustness studied in the  literature, e.g., robustness to preferences, robustness to strategic behavior and robustness to interaction
among agents.\\
\indent There are other papers studying robust bilateral trading mechanisms. \cite{wolitzky2016mechanism}  studies optimal bilateral trading mechanisms in terms of efficiency when agents are maxmin expected utility maximizers, with similar ambiguity sets (that is, the buyer knows only the expectation of the seller’s value, and
vice versa). \cite{bodoh2012ambiguous} also assumes that the agents are maxmin expected utility maximizers, but focuses on profit-maximizing bilateral trading mechanisms in the applications. The main difference from these two papers is that my paper addresses robustness concerns on the part of mechanism designer instead of on the part of economic agents.  \cite{carroll2016informationally} studies bilateral trading mechanisms within the informationally robust framework with a focus on the expected surplus. In contrast, my paper considers a private-value environment and focuses on profit-maximizing mechanism design.\\
\indent There are papers in the computer science literature that study the performance of simple bilateral trading mechanisms. \cite{deng2021approximately} show that  the better of two simple posted price mechanisms guarantees at least 10\% of the first-best gains-from-trade in the independent private value environment. \cite{dutting2021efficient} show, among others, that an adjusted VCG mechanism using a single sample from the seller's distribution  guarantees 50\% of the first-best welfare in the independent private value environment. 
\section{Preliminaries}\label{s3}
I consider an environment where an asset is traded between two risk-neutral traders through an intermediary. One of the traders is the seller ($S$), who holds the asset initially, while the other one is the buyer ($B$), who does not hold the asset initially.  I denote by $I = \{S,B\}$ the set of traders and $i\in I$ is a trader. Each trader $i$ has private information about his
value for the asset, which is modeled as a random variable $v_i$ with cumulative distribution
function $F_i$.\footnote{I do not make any assumption on the distributions of these random variables. That is,  the distributions could be continuous, discrete, or any mixtures. In addition,  $F_S$ could be different from $F_B$. } I use $f_i(v_i)$ to denote
the density of $v_i$ in the distribution $F_i$ when $F_i$ is differentiable at $v_i$; I use $Pr_i(v_i)$ to denote
the probability of $v_i$ in the distribution $F_i$ when $F_i$ has a probability mass at $v_i$.  I denote  by $V_i$ the set of possible values of trader $i$.  Throughout, I assume $V_S=V_B$. I assume that $V_i$ is bounded.  As a normalization, I assume that $V_i = [0,1]$.  The set of possible value profiles 
is denoted by $V = [0,1]^2$ with a typical value profile $v$. The joint distribution of the value profile is denoted by $F$. The set of all joint distributions on  $V$ is denoted by $\Delta V$.\\
\indent  The intermediary only knows the expectations $M_B$ and $M_S$  of the private values of $B$ and $S$ respectively as well as the set of possible value profiles $V$,  but does not know the joint distribution $F$. Formally,  I denote by
$$\Pi(M_B,M_S) = \{
\pi \in \Delta V : \int v_B \pi(v)dv = M_B,\int v_S \pi(v)dv = M_S\}$$
the collection of joint distributions that are consistent with the known expectations. I refer to any $\pi \in \Pi(M_B, M_S)$ as a \textit{feasible value distribution}.\\
\indent The
intermediary  seeks a  dominant-strategy
incentive compatible (DSIC) and ex-post individually
rational (EPIR) trading mechanism.   A direct trading mechanism\footnote{As I focus on dominant-strategy trading mechanisms, the revelation principle holds and it is without loss of generality to restrict attention to direct trading mechanisms.} $(q,t_B,t_S)$ consists of a trading rule $q : V \to [0, 1]$, a payment rule $t_B : V\to \mathbb{R}$ and a transfer rule $t_S: V\to \mathbb{R}$.\footnote{$q$ is the probability that the buyer obtains the asset when the asset is indivisible. I allow randomization, which will play a crucial role in my analysis.  $q$ can be interpreted as the trading quantity when  the asset is divisible. } The buyer submits a bid price $b$ and the seller submits an ask price $a$ simultaneously to the intermediary. Upon receiving the bid-ask pair $(b,a)$, the buyer obtains the asset with probability $q(b,a)$ and pays $t_B(b,a)$ to the intermediary, while the seller holds the good with the remaining probability $1-q(b,a)$ and receives  $t_S(b,a)$ from the intermediary. With slight abuse of notation, I sometimes use the true value profile $v=(v_B,v_S)$ to represent the submitted bid-ask pair because each trader truthfully reports his value in the dominant-strategy equilibrium. The set of all DSIC and EPIR trading mechanisms is denoted by $\mathcal{D}$.  Formally,
$(q,t_B,t_S)\in \mathcal{D}$ if 
\[v_Bq(v)-t_B(v)\ge v_Bq(v_B',v_S)-t_B(v_B',v_S),  \quad \forall  v\in V,v_B'\in V_B ;\tag{$DSIC_B$}\]
\[v_Bq(v)-t_B(v)\ge 0, \quad \forall v\in V;  \tag{$EPIR_B$}\]
\[v_S(1-q(v))+t_S(v)\ge v_S(1-q(v_B,v_S'))+t_S(v_B,v_S'),\quad \forall v\in V,v_S'\in V_S; \tag{$DSIC_S$}\]
\[v_S(1-q(v))+t_S(v)\ge v_S,\quad \forall v\in V. \tag{$EPIR_S$}\]

\indent I am interested in the intermediary's \textit{expected profit} in the dominant-strategy equilibrium in which each trader truthfully reports his value of the asset.  The expected profit of a DSIC and EPIR trading mechanism $(q,t_B,t_S)$ under the joint distribution $\pi$ is $U((q,t_B,t_S),\pi)= \int_{v\in V}t(v) d\pi(v)$ where $t(v)=t_B(v)-t_S(v)$, referred to as the \textit{ex-post profit}. The intermediary evaluates a trading mechanism by its worst-case expected profit over all feasible value distributions. Formally, the intermediary evaluates a trading mechanism $(q,t_B,t_S)$ by its \textit{profit guarantee} $PG((q,t_B,t_S))$, defined as \[\inf_{\pi\in \Pi(M_B,M_S)}U((q,t_B,t_S),\pi). \tag{PG}\label{PG}\]   The intermediary aims to find a trading mechanism $(q^*,t_B^*,t_S^*)$, referred to as a \textit{maxmin trading mechanism}, that maximizes the profit guarantee. Formally, the intermediary solves \[
\sup_{(q,t_B,t_S)\in \mathcal{D}}PG((q,t_B,t_S)).\tag{MTM}\label{mtm}\]

\section{Preparations for Main Results}\label{s4}
\subsection{Zero-Sum Game}
 I observe that the intermediary's maxmin optimization problem (\ref{mtm}) can be interpreted as a two-player sequential \textit{zero-sum
game}. The two players are the intermediary and  adversarial nature. The intermediary first chooses
a trading mechanism $(q,t_B,t_S)\in \mathcal{D}$. After observing the intermediary’s choice of
the trading mechanism, adversarial nature chooses a feasible value distribution $\pi\in \Pi(M_B,M_S)$. The intermediary’s
payoff is $U((q,t_B,t_S),\pi)$, and adversarial nature’s payoff is $-U((q,t_B,t_S),\pi)$. Instead
of solving directly for such a subgame perfect equilibrium,  I can solve for a Nash equilibrium 
$((q^*,t_B^*,t_S^*),\pi^*)$ of the simultaneous-move version of this zero-sum game, which corresponds to a saddle point of the payoff functional $U$. I refer to $\pi^*$ as a \textit{worst-case value distribution}.  Formally, for  any $(q,t_B,t_S)\in \mathcal{D}$ and any $\pi\in \Pi(M_B,M_S)$,\[U((q^*,t_B^*,t_S^*), \pi) \ge U((q^*,t_B^*,t_S^*),\pi^*) \ge U((q,t_B,t_S),\pi^*).\tag{SP}\] As discussed in the introduction, the properties of a saddle point imply that the intermediary’s
equilibrium strategy in the simultaneous-move game, $(q^*,t_B^*,t_S^*)$, is also her maxmin strategy (i.e. her
equilibrium strategy in the subgame perfect equilibrium of the sequential game). I will construct a saddle point for either of the main results.
\subsection{Revenue Equivalence}
To facilitate the analysis, it will be useful  to further simplify the problem.  I will use the following proposition: its proof is standard
but included in Appendix \ref{aa} for completeness.
\begin{proposition}[Revenue Equivalence]\label{p1}
When searching for a maxmin trading mechanism, it is without loss to restrict attention to trading mechanisms satisfying the  following properties:\\(i). $q(v)$ is non-decreasing in $v_B$ and non-increasing in $v_S$.\\(ii). $t_B(v)=v_Bq(v)-\int_0^{v_B}q(x,v_S)dx$.\\(iii).
    $t_S(v)=v_Sq(v)+\int_{v_S}^1q(v_B,x)dx$.\\(iv).
    $t(v)=(v_B-v_S)q(v)-\int_0^{v_B}q(x,v_S)dx-\int_{v_S}^1q(v_B,x)dx$.
\end{proposition}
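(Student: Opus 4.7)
The plan is to apply standard Myersonian envelope / payment-equivalence logic, carefully tracking both directions: the buyer's type is ordered in the usual way (higher value, higher willingness to trade) while the seller's type is reversed (higher value, lower willingness to trade), so the revenue-maximizing EPIR constraint binds at opposite ends of the type space. I would first derive the monotonicity condition (i) from DSIC by the classical two-inequality trick: summing the buyer's truth-telling inequalities at $(v_B, v_S)$ and $(v_B', v_S)$ gives $(v_B - v_B')(q(v_B, v_S) - q(v_B', v_S)) \ge 0$, and the analogous computation on the seller side gives $q$ non-increasing in $v_S$.

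Next, I would define the buyer's rent $U_B(v_B, v_S) := v_B q(v) - t_B(v)$ and the seller's rent over outside option $U_S(v_B, v_S) := v_S(1-q(v)) + t_S(v) - v_S$, so that $EPIR_B$ and $EPIR_S$ become $U_B \ge 0$ and $U_S \ge 0$. A coordinate-wise Milgrom--Segal envelope argument would yield
\[U_B(v_B, v_S) = U_B(0, v_S) + \int_0^{v_B} q(x, v_S)\,dx, \quad U_S(v_B, v_S) = U_S(v_B, 1) + \int_{v_S}^1 q(v_B, x)\,dx,\]
which, after rearrangement, expresses $t_B$ and $t_S$ as in (ii)--(iii) up to the nonnegative constants $U_B(0, v_S)$ and $U_S(v_B, 1)$ (both of which may depend on the other agent's type, since the envelope is taken coordinate-wise). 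Given any $(q, t_B, t_S) \in \mathcal{D}$, I would then define $(q, \tilde t_B, \tilde t_S)$ by forcing $U_B(0, v_S) = 0$ and $U_S(v_B, 1) = 0$, obtaining exactly formulas (ii)--(iii). Pointwise $\tilde t_B \ge t_B$ and $\tilde t_S \le t_S$, so $\tilde t := \tilde t_B - \tilde t_S \ge t$ pointwise, and integrating against any $\pi \in \Pi(M_B, M_S)$ shows the expected profit, and hence the profit guarantee, weakly improves.

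Finally, I would verify that the modified mechanism lies in $\mathcal{D}$: monotonicity of $q$ from the first step, together with the Myersonian form of $(\tilde t_B, \tilde t_S)$, yields DSIC by the standard single-crossing check, and EPIR follows because the induced rents $\int_0^{v_B} q(x, v_S)\,dx$ and $\int_{v_S}^1 q(v_B, x)\,dx$ are manifestly nonnegative. Subtracting (iii) from (ii) then gives (iv). The only real obstacle is sign bookkeeping on the seller side, whose EPIR-binding type at the revenue-maximizing optimum is $v_S = 1$ rather than $v_S = 0$; this is why the integral in (iii) runs from $v_S$ up to $1$. No substantive difficulty arises, since the result is a routine adaptation of textbook revenue equivalence to the bilateral-trade setting.
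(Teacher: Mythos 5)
Your proposal follows essentially the same route as the paper's proof: derive monotonicity of $q$ from the two DSIC inequalities, obtain the envelope representations of $t_B$ and $t_S$ along each coordinate, observe that driving the buyer's rent at $v_B=0$ and the seller's rent-over-outside-option at $v_S=1$ to zero weakly raises the transfer $t=t_B-t_S$ pointwise (hence raises the profit guarantee), and then subtract to get (iv). The only differences are cosmetic---you normalize the seller's rent against the outside option and explicitly flag that the integration constants might a priori depend on the other agent's type before being zeroed out, and you add the explicit re-verification that the modified mechanism stays in $\mathcal{D}$---but the underlying argument is the same.
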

That is,  the trading rule $q(v)$ is monotone;   the payment rule $t_B$ and the transfer rule $t_S$ admit an envelope representation. In addition,  the ex-post participation constraints for zero-value buyer and one-value seller are binding.  Proposition \ref{p1} is standard in the mechanism design literature, and it simplifies the problem because the ex-post profit can be expressed using the trading rule and thus I can focus attention on the trading rule for my analysis. 
\subsection{Complementary Slackness Condition}
Consider adversarial nature's problem (\ref{PG}). That is, fixing an arbitrary trading mechanism $(q,t_B,t_S)\in\mathcal{D}$,  adversarial nature seeks for a feasible value distribution $\pi$ that minimizes the expected profit under $(q,t_B,t_S)$. I observe that the problem (\ref{PG}) is a semi-infinite dimensional linear program. I derive its dual program.  By Theorem 3.12 in \cite{anderson1987linear}, I establish strong duality (details can be found in Appendix \ref{aa}), which implies the following lemma.
\begin{lemma}\label{l1}
 Given a trading  mechanism $(q,t_B,t_S)\in \mathcal{D}$, if $\pi$ minimizes the expected profit over $\Pi(M_B,M_S)$,  then there exist real numbers $\lambda_B$, $\lambda_S$ and $\mu$ such that 
\[
\lambda_Bv_B+\lambda_Sv_S+\mu = t(v),\quad \forall v \in supp(\pi).\tag{CS}\label{cs}\]
\end{lemma}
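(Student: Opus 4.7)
The plan is to cast adversarial nature's minimization as a semi-infinite linear program over the cone of nonnegative measures on $V$ and read off the complementary slackness relation from LP duality. The primal minimizes $\int_V t(v)\,d\pi(v)$ over $\pi \in \Delta V$ subject to three equality constraints: $\int v_B\,d\pi = M_B$, $\int v_S\,d\pi = M_S$, and $\int d\pi = 1$. The nonnegativity of $\pi$ plays the role of the sign constraint in an ordinary LP.

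To derive the dual, I would attach three real dual variables $\lambda_B$, $\lambda_S$, and $\mu$ to the three equality constraints (hence unrestricted in sign) and form the Lagrangian. Rearranging the Lagrangian by grouping terms that multiply $\pi$ yields the dual program: maximize $\lambda_B M_B + \lambda_S M_S + \mu$ subject to the pointwise inequality $\lambda_B v_B + \lambda_S v_S + \mu \le t(v)$ for every $v \in V$. The ex-post profit $t$ is bounded on $V = [0,1]^2$ by the envelope representation in Proposition \ref{p1}, so the dual feasible set is nonempty (any sufficiently negative $\mu$ works), and the primal feasible set is nonempty (e.g.\ the product of two-point distributions with the right means). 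Strong duality is then obtained from Theorem 3.12 of \cite{anderson1987linear}, whose topological hypotheses are the content of the appendix.

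Given strong duality, the complementary slackness claim is a one-line calculation. If $\pi$ is primal optimal and $(\lambda_B,\lambda_S,\mu)$ is dual optimal, then using primal feasibility to rewrite the moment integrals and strong duality to equate the two optimal values,
\[ \int_V \bigl[t(v) - \lambda_B v_B - \lambda_S v_S - \mu\bigr]\,d\pi(v) \;=\; U((q,t_B,t_S),\pi) - \bigl(\lambda_B M_B + \lambda_S M_S + \mu\bigr) \;=\; 0. \]
The integrand is pointwise nonnegative by dual feasibility, so it must vanish $\pi$-almost everywhere, giving $\lambda_B v_B + \lambda_S v_S + \mu = t(v)$ for every $v \in \mathrm{supp}(\pi)$ as required.

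The main obstacle is not the bookkeeping above but the verification of the technical hypotheses of the Anderson--Nash no-duality-gap theorem in this infinite-dimensional setting: pairing the space of signed Borel measures on $V$ with an appropriate space of test functions containing $\{1,v_B,v_S,t\}$, checking that the image of the moment map contains an interior point (so that a constraint qualification holds), and handling the fact that $t$ inherits only the regularity of the monotone rule $q$ rather than continuity. Once these are in place, strong duality is immediate and the lemma follows from the elementary argument above; this is precisely why the paper defers those details to Appendix \ref{aa} and states only the resulting complementary slackness condition here.
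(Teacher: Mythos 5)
Your proposal matches the paper's proof almost exactly: both formulate adversarial nature's problem as a semi-infinite LP over measures, write down the same dual with unrestricted multipliers $(\lambda_B,\lambda_S,\mu)$ for the three moment equalities, invoke Theorem 3.12 of Anderson--Nash to close the duality gap, and read off (CS) as complementary slackness. The paper verifies the hypotheses by noting that the primal objective is bounded above by $1$ and that the point mass at $(M_B,M_S)$ is interior to the primal cone, which is the same constraint-qualification check you describe; your extra remark about passing from ``vanishes $\pi$-a.e.''\ to ``equals on $\mathrm{supp}(\pi)$'' flags a regularity subtlety the paper also leaves implicit, so the two arguments are at the same level of rigor.
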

(\ref{cs}) is  a \textit{complementary slackness condition}, stating that the ex-post profit is a linear function of the true values  for any value profile in the support of a worst-case value distribution. We will see that this complementary slackness condition plays a crucial role in the construction of a maxmin trading mechanism. 
\subsection{Weighted Virtual Value}\label{s44}
Now consider the problem that  fixing any joint distribution $\pi$\footnote{For exposition, I assume that $\pi$ is differentiable everywhere when deriving the weighted virtual values. It can be easily extended to joint distributions which admits an atom on the value profile $(1,0)$.}, the intermediary seeks for  a trading mechanism $(q,t_B,t_S)\in \mathcal{D}$ that maximizes the expected profit under $\pi$. I denote, with slight abuse of notation,  the joint density function of $\pi$ by $\pi(v_i,v_j)$,  the marginal density function by $\pi_i(v_i)= \int_{v_j}\pi(v_i,v_j)dv_j$ , the density function of $v_i$ conditional on $v_j$ by $\pi_i(v_i| v_j)=\frac{\pi(v_i,v_j)}{\pi_j(v_j)}$ and the cumulative  distribution function of $v_i$ conditional on $v_j$ by $\Pi_i(v_i|v_j)= \int_0^{v_i} \pi_i(x| v_j)dx$. I define $\Pi_i(v_i,v_j):= \pi_j(v_j)\Pi_i(v_i|v_j)=\int_0^{v_i} \pi(x,v_j)dx$. A direct implication of Proposition \ref{p1} is that the expected profit of $(q,t)$ under the joint distribution $\pi$ is \[E[t(v)]=\int_vq(v)\Phi(v)dv,\]
where \[\Phi(v)= 
\pi(v)(v_B-v_S)-(\pi_S(v_S)-\Pi_B(v_B,v_S))- \Pi_S(v_S,v_B). \]
Here $\Phi(v)$ is referred to as the \textit{weighted  virtual value}, which is the ``virtual value'' \footnote{Note that $\Phi(v)=\pi(v)(v_B-\frac{1-\Pi_B(v_B|v_S)}{\pi_B(v_B|v_S)} -(v_S+\frac{\Pi_S(v_S|v_B)}{\pi_S(v_S|v_B)}))$ when $\pi(v)$ is not 0. Here $\phi(v):= v_B-\frac{1-\Pi_B(v_B|v_S)}{\pi_B(v_B|v_S)} -(v_S+\frac{\Pi_S(v_S|v_B)}{\pi_S(v_S|v_B)})$ is the ``virtual value'' of the value profile $(v_B,v_S)$ in this environment, which is the difference between  the conditional virtual value of the buyer and that of  the seller. However, it turns out that the weighted virtual value is more convenient for my analysis because it is well-defined even if $\pi(v)=0$. Henceforth,  I directly work with the weighted virtual value. } weighted by the probability that the value profile is $v$.   Thus,  the problem of designing an optimal trading mechanism given a joint distribution can be viewed as maximizing the inner product of the trading rule and the weighted virtual values  subject to that the trading rule is monotone. We will see that the weighted virtual value plays a crucial role in the construction of a worst-case value distribution. 
\section{Main Results}\label{s5}
I focus on the case in which $0<M_S<M_B<1$ throughout the paper. If $M_S=0$, it is straightforward to see that the  problem \eqref{mtm} is equivalent to a monopolistic seller's problem of selling a good to a single buyer when the sellers only knows the expectation of the buyer's value, which has been studied by  \cite{carrasco2018optimal}. Likewise, if $M_B=1$, the problem \eqref{mtm} is reduced to a single-dimensional problem and the solution can be similarly characterized, details of which is relegated to Appendix \ref{af}.  If $M_B\le M_S$, then the problem \eqref{mtm} becomes trivial as adversarial nature can always choose a distribution such that the seller's value is never below the buyer's value, e.g., a joint distribution that puts all probability masses on  the value profile $(M_B,M_S)$. Thus, the profit guarantee can not be positive for any $(q,t_B,t_S)\in \mathcal{D}$, implied by EPIR. Then, trivially, the Never Trading  Mechanism (i.e., $q(v)=t_S(v)=t_B(v)=0$ for any $v\in V$) is a maxmin trading mechanism. I will call a pair of  expectations \textit{non-trivial} if $M_B>M_S$.
\subsection{Symmetric Case: $M_B+M_S=1$}\label{s51}
Recall the random double auction: given a submitted bid-ask pair $(b,a)$, if $b-a>s$ where $s$ is a random spread drawn from the uniform distribution on $[r,1]$, then trade takes place at the midpoint  price $\frac{b+a}{2}$, and each pays the intermediary $\frac{r}{2}$; otherwise,  trade does not take place, and no one pays or receives anything. The fixed commission fee $r$ is determined by the known expectations: \[r=1-\sqrt{1-(M_B-M_S)}.\] 
It is straightforward to show that the random double auction can also be expressed as follows. If $b-a>r$,  \[
q^*(b,a)=
    \frac{1}{1-r}(b-a-r),\]
    \[t^*_B(b,a)=\frac{1}{2(1-r)}[b^2-(a+r)^2],\]
    \[t^*_S(b,a)=\frac{1}{2(1-r)}[(b-r)^2-a^2].\]
If $b-a\le r$, 
\[q^*(b,a)=t^*_B(b,a)=t^*_S(b,a)=0.\] 
Interestingly,  the trading rule is  a \textit{linear} function;   the payment rule and the transfer rule are both  \textit{quadratic} functions. In addition, this mechanism satisfies the standard weak budget balance property (as in \cite{myerson1983efficient}), i.e., the intermediary never subsidize the market. \\
\indent Now let me specify the symmetric triangular value distribution. The support is a symmetric triangular subset of joint valuations $ST:=\{v\in V|v_B-v_S> r\}$.  The marginal distribution for the buyer is a combination of a uniform distribution on $(r,1)$ and an atom of size $r$ on 1:   $\pi^*_B(v_B)=1$ for $v_B\in (r,1)$ and $Pr^*_B(1)=r$. The marginal distribution for the seller is a combination of a uniform distribution on $(0,1-r)$ and an atom of size $r$ on 0:  $\pi^*_S(v_S)=1$ for $ v_S\in (0,1-r)$ and $Pr^*_S(0)=r$. The conditional distribution for the buyer is a combination of some generalized Pareto distribution on $(v_S+r,1)$ and an atom on 1:  when $v_S\in (0,1-r)$, $\pi^*_B(v_B|v_S)=\frac{2r^2}{(v_B-v_S)^3}$ for $v_B\in (v_S+r, 1)$ and $Pr_B^*(v_B=1|v_S)=\frac{r^2}{(1-v_S)^2}$; when $v_S=0$, $\pi^*_B(v_B|v_S=0)=\frac{r}{(v_B)^2}$ for $v_B\in (r, 1)$ and $Pr_B^*(v_B=1|v_S=0)=r$. The conditional distribution for the seller is a combination of some generalized Pareto distribution on $(0,v_B-r)$ and an atom on 0:  when $v_B\in (r,1)$, $\pi^*_S(v_S|v_B)=\frac{2r^2}{(v_B-v_S)^3}$ for $v_S\in (0, v_B-r)$ and $Pr_S^*(v_S=0|v_B)=\frac{r^2}{(v_B)^2}$; when $v_B=1$, $\pi^*_S(v_S|v_B=1)=\frac{r}{(1-v_S)^2}$ for $v_S\in (0, 1-r)$ and $Pr_S^*(v_S=0|v_B=1)=r$. 
Equivalently, the symmetric triangular value distribution can be described as a combination of a joint density function on $ST \backslash  \{(1,0)\}$ and an atom of size $r^2$ on the value profile $(1,0)$ as follows (See Figure \ref{fig:S1}).
\[\pi^*(v_B,v_S)=  \left\{
\begin{array}{lll}
\frac{2r^2}{(v_B-v_S)^3}     &      & {\normalfont\text{if $v_B-v_S > r$, $v_B\neq 1$ and $v_S\neq 0$},}\\
\frac{r^2}{(1-v_S)^2}    &      & {\normalfont\text{if $ v_B=1$ and $0<v_S < 1-r$},}\\
\frac{r^2}{(v_B)^2}    &      & {\normalfont\text{if $r< v_B < 1$ and $v_S=0$}.}
\end{array} \right. \]
\[Pr^*(1,0)=r^2.\]

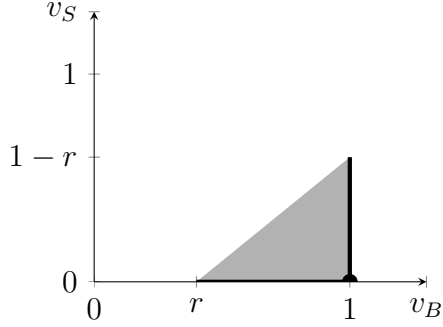
\begin{figure}
\centering
\begin{tikzpicture}
\begin{axis}[
    axis lines = left,
    xmin=0,
        xmax=1.3,
        ymin=0,
        ymax=1.3,
        xtick={0,0.4,1,1.3},
        ytick={0,0.6,1,1.3},
        xticklabels = {$0$, $r$, $1$, $v_B$},
        yticklabels = {$0$, $1-r$, $1$, $v_S$},
        legend style={at={(1.1,1)}}
]

\path[name path=axis] (axis cs:0,0) -- (axis cs:1,0);
\path[name path=A] (axis cs: 0.4,0) -- (axis cs:1,0.6);
\path[name path=B] (axis cs:0,0.4) -- (axis cs:1,0.4);
\path[name path=C] (axis cs:0,0) -- (axis cs:1,0);
\addplot[area legend, black!30] fill between[of=A and C,  soft clip={domain=0.4:1}];
\addplot[black, ultra thick] coordinates {(0.4, 0) (1, 0)};
\addplot[black, ultra thick] coordinates {(1, 0.6) (1, 0)};
\node[black,right] at (axis cs:0.97,0.04){};
\node at (axis cs:1,0) [circle, scale=0.5, draw=black,fill=black] {};
\end{axis}
\end{tikzpicture}
\caption{Symmetric Triangular Value Distribution} \label{fig:S1}
\end{figure}
\begin{definition}[Positive Correlation for Bivariate Distributions]
\normalfont  Let $Z = (X,Y)$ be
a bivariate random vector whose distribution is $F$. I say that $Z$ exhibits positive correlation for  $D_X$ and $D_Y$ if $F(X|Y=y)$ \textit{first order stochastically dominates} $F(X|Y=y')$ for any $y>y', y, y'\in D_Y$ and $F(Y|X=x)$ \textit{first order stochastically dominates} $F(Y|X=x')$ for any $x>x', x, x'\in D_X$.
\end{definition}
\begin{remark}
\normalfont The symmetric triangular value distribution exhibits positive correlation for $r< v_B<1$ and $0<v_S< 1-r$.\footnote{To see this, note that $\Pi^*_S(v_S|v_B)=\frac{r^2}{(v_B-v_S)^2}$ is decreasing w.r.t. $v_B$ for $v_B\in (r,1)$. When $v_B=1$,  $\Pi^*_S(v_S|v_B=1)=\frac{r}{1-v_S}\ge \frac{r^2}{(1-v_S)^2}$, so the positive correlation breaks when $v_B=1$. Similarly, $\Pi^*_B(v_B|v_S)=1-\frac{r^2}{(v_B-v_S)^2}$ is decreasing w.r.t. $v_S$ for $v_S\in (0,1-r)$. When $v_S=0$, $\Pi^*_B(v_B|v_S=0)=1-\frac{r}{v_B}\le 1-\frac{r^2}{(v_B)^2}$, so the positive correlation breaks when $v_S=0$.}
\end{remark}
\begin{theorem}\label{t1}
For the symmetric case, the random double auction is a maxmin trading mechanism with a profit guarantee of $[1-\sqrt{1-(M_B-M_S)}]^2$, and the symmetric triangular value distribution is a worst-case value distribution.
\end{theorem}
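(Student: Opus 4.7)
The plan is to follow the saddle-point approach of Section \ref{s4}: I will show that $((q^*,t_B^*,t_S^*),\pi^*)$ is a Nash equilibrium of the zero-sum game between the intermediary and adversarial nature. First I would verify feasibility by integrating the pieces of the symmetric triangular distribution to confirm that $E_{\pi^*}[v_B]=M_B$ and $E_{\pi^*}[v_S]=M_S$, which pins down $r=1-\sqrt{1-(M_B-M_S)}$ and yields the identity $M_B-M_S=r(2-r)$ that will be used repeatedly.

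For the nature direction, I would invoke the complementary slackness condition of Lemma \ref{l1}. A direct computation gives the ex-post profit of the random double auction as $t^*(v)=\frac{r}{1-r}(v_B-v_S-r)$ on $\{v_B-v_S>r\}$ and $0$ elsewhere, which is linear in $(v_B,v_S)$ on the trading region. Taking $\lambda_B=r/(1-r)$, $\lambda_S=-r/(1-r)$, and $\mu=-r^2/(1-r)$, one sees that $t^*(v)=\lambda_B v_B+\lambda_S v_S+\mu$ on $\{v_B-v_S>r\}$ while $t^*(v)=0\ge \lambda_B v_B+\lambda_S v_S+\mu$ on $\{v_B-v_S\le r\}$. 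Hence for any feasible $\pi$,
\[
E_\pi[t^*]\ \ge\ \lambda_B M_B+\lambda_S M_S+\mu\ =\ \frac{r(M_B-M_S-r)}{1-r}\ =\ r^2,
\]
and since $\pi^*$ is supported in $\{v_B-v_S>r\}$ the bound is attained, so $\pi^*$ minimizes $E_\pi[t^*]$ over $\Pi(M_B,M_S)$ and the profit guarantee equals $r^2=[1-\sqrt{1-(M_B-M_S)}]^2$.

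For the mechanism direction, I would use the weighted virtual value machinery of Section \ref{s44}. By Proposition \ref{p1} it suffices to restrict attention to monotone trading rules with envelope payments, reducing $E_{\pi^*}[t]$ to an integral of $q(v)$ against $\Phi^*(v)$ plus contributions from the atom at $(1,0)$ and the two boundary line densities along $\{v_B=1\}$ and $\{v_S=0\}$. On the interior of the trading region, a direct calculation using $\pi^*(v)=2r^2/(v_B-v_S)^3$, $\pi^*_S(v_S)=1$, $\Pi^*_B(v_B,v_S)=1-r^2/(v_B-v_S)^2$, and $\Pi^*_S(v_S,v_B)=r^2/(v_B-v_S)^2$ gives $\Phi^*(v)\equiv 0$, so the interior contributes nothing and all expected profit is concentrated at the atom and boundaries. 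The remaining task is to show that this atom/boundary contribution is at most $r^2$ for any monotone $q$ and is exactly $r^2$ when $q(1,0)=1$ and $q=0$ off the trading region; the random double auction satisfies both, completing the saddle point.

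The main obstacle will be this last step: because $\pi^*$ is not a density but mixes a 2D density, two 1D line densities, and a point mass, the clean formula of Section \ref{s44} has to be extended by a careful Fubini rearrangement that integrates the envelope information-rent terms $\int_0^{v_B}q(x,v_S)\,dx$ and $\int_{v_S}^1 q(v_B,x)\,dx$ against each singular component of $\pi^*$. The particular shapes of the two boundary line densities are chosen precisely so that these singular contributions collapse to $r^2\cdot q(1,0)$ after cancellation with the envelope terms, which is what makes the symmetric triangular distribution the correct worst-case distribution.
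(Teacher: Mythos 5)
Your proposal follows the same saddle-point strategy as the paper's proof and the key calculations are correct: you verify $t^*(v)=\tfrac{r}{1-r}(v_B-v_S-r)$ on the trading region, use the linearity of $t^*$ on the support together with $M_B-M_S=r(2-r)$ to get the lower bound $r^2$ on $E_\pi[t^*]$ (a clean weak-duality argument that is a self-contained version of the paper's Step 2 via Lemma~\ref{l1}), and compute $\Phi^*(v)\equiv 0$ on the interior of $ST$ exactly as in the paper's zero-weighted-virtual-value condition \eqref{eq16}. Where you flag the remaining difficulty --- extending the weighted-virtual-value representation to the singular components (the two line densities and the atom at $(1,0)$) and showing the net contribution is $r^2\,q(1,0)$ --- the paper handles this by having imposed \eqref{eq16} separately on the lines $\{v_S=0\}$ and $\{v_B=1\}$ via the one-dimensional ODEs in its construction, so that those pieces carry zero virtual value as well and only the atom contributes; the paper's Step 1 asserts this at roughly the same level of detail as your sketch, so this is not a gap relative to the paper, though your remark correctly identifies where a fully rigorous account would need the Fubini bookkeeping. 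One small point worth adding explicitly (which the paper does state): to justify $q=0$ off the trading region as optimal you should note $\Phi^*\le 0$ there, which is immediate since $\pi^*$ vanishes so $\Phi^*=-(\pi^*_S-\Pi^*_B)-\Pi^*_S\le 0$.
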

\begin{remark}
\normalfont It is useful to compare the revenue guarantee of the random double auction with the optimal revenue across DSIC and EPIR trading mechanisms if the value distribution were known to the intermediary. One interesting case could be the following value distribution: the buyer's value follows a uniform distribution on $[2M_B-1,1]$ and the seller's value follows a uniform distribution on $[0,2M_S]$; their values are independent. By a straightforward adaptation of the revenue equivalence theorem, the revenue achievable by the optimal dominant-strategy mechanism can be computed. For example, When $M_S=\frac{1}{8}$ and $M_B=\frac{7}{8}$, the optimal revenue is $\frac{1}{2}$, whereas the revenue guarantee of the random double auction is $\frac{1}{4}$, so the ratio between the revenue guarantee and the optimal revenue is $\frac{1}{4}$. In addition, this ratio is large when $M_B-M_S$ is large and converges to 1 as $M_B-M_S\to 1$. Another interesting case could be that the value distribution is a point mass on $(M_B,M_S)$.  Then the optimal revenue is $M_B-M_S$. When $M_S=\frac{1}{8}$ and $M_B=\frac{7}{8}$, the ratio between the revenue guarantee and the optimal revenue is $\frac{1}{3}$.  In addition, this ratio is increasing in $M_B-M_S$ and converges to 1 as $M_B-M_S\to 1$. 
\end{remark}
\subsubsection{Characterization of Random Double Auction}
I  form an educated guess of the trading region in a maxmin trading mechanism: trade takes place with positive probability if and only if the difference between the bid (true value of the buyer) $v_B$ and the ask (true value of the seller) $v_S$ exceeds a threshold $r>0$, or $v_B-v_S>r$.  This threshold $r$ is analogous to the reserve price in auction (e.g., \cite{myerson1981optimal}). In the bilateral trading environment, the intermediary exercises her monopoly power by restricting trade to raise the expected profit.   In addition, the support of a worst-case value distribution coincides with the trading region (including the boundary). Together with $(iv)$ of Proposition \ref{p1}, the complementary slackness condition (\ref{cs}) can be expressed as follows: for any $(v_B,v_S)\in ST$,  \[
\lambda_B^*v_B+\lambda_S^*v_S+\mu^* = (v_B-v_S)q^*(v)-\int_{v_S+r}^{v_B}q^*(x,v_S)dx-\int_{v_S}^{v_B-r}q^*(v_B,x)dx. \tag{CS-1}\label{eq3}\] 
Now I solve for  the trading rule $q^*$.  First I  take the first order derivatives with respect to $v_B$ and $v_S$ respectively, and I obtain that for any $(v_B,v_S)\in ST$,
\[
(v_B-v_S)\frac{\partial q^*(v_B,v_S)}{\partial v_B}-\frac{\partial \int_{v_S}^{v_B-r}q^*(v_B,x)dx }{\partial v_B}=\lambda_B^*,\tag{FOC-B-1}\label{eq4}\]
\[ (v_B-v_S)\frac{\partial q^*(v_B,v_S)}{\partial v_S}-\frac{\partial \int_{v_S+r}^{v_B}q^*(x,v_S)dx }{\partial v_S}=\lambda_S^*. \tag{FOC-S-1}\label{eq5}\]
Then, I take the cross partial derivative, and  I obtain that for any $(v_B,v_S)\in ST$,
\[
(v_B-v_S)\frac{\partial q^{*}(v_B,v_S)}{\partial v_B\partial v_S}=0.\]
Thus, $q^{*}(v_B,v_S)$ is separable, which can be expressed  as the sum of two functions $f^{*}$ and $g^{*}$: for any $(v_B,v_S)\in ST$,
\[q^*(v_B,v_S)=f^*(v_B)+g^*(v_S). \tag{4.1.1}\label{eq7}\]
The separable nature of $q^{*}(v_B,v_S)$ is crucial for solving \eqref{eq3}. Plugging \eqref{eq7} into \eqref{eq4} and \eqref{eq5}, I obtain that for any $(v_B,v_S)\in ST$,
\[\label{eq8}
r(f^*)'(v_B)-(f^*(v_B)+g^*(v_B-r))=\lambda_B^*,\tag{4.1.2}
\]
\[\label{eq9}
r(g^*)'(v_S)+f^*(v_S+r)+g^*(v_S)=\lambda_S^*.\tag{4.1.3}
\]
Note that $f^*(v_B)+g^*(v_B-r)=0$ and that $f^*(v_S+r)+g^*(v_S)=0$ because trade does not take place in the boundary of the trading region, i.e.,  $q^*(v_B,v_S)=0$ if $v_B-v_S=r$.  Then it is straightforward to  solve for  $f^*(v_B)$ and $g^*(v_S)$, and I obtain that  for any $(v_B,v_S)\in ST$,
\[\label{eq10}
    f^*(v_B)=\frac{\lambda_B^*}{r}v_B+c_B^*,\tag{4.1.4}
\]
\[\label{eq11}
    g^*(v_S)=\frac{\lambda_S^*}{r}v_S+c_S^*,\tag{4.1.5}
\]
where $c_B^*$ and $c_S^*$ are some constants. Again, using $q^*(v_B,v_S)=0$ if $v_B-v_S=r$,  I obtain that 
\[\label{eq12}
   \lambda_S^*=c_B^*+c_S^*=-\lambda_B^*. \tag{4.1.6}
\]
Now plugging \eqref{eq10},\eqref{eq11} and \eqref{eq12} into \eqref{eq7}, I obtain that for any $(v_B,v_S)\in ST$,
\[\label{eq13}
    q^*(v_B,v_S)=\frac{\lambda_B^*}{r}(v_B-v_S-r).
\]
Finally, to solve for $\lambda_B^*$, I let  $q^*(1,0)$ be $1$ and obtain  that  $\lambda_B^*=\frac{r}{1-r}$. This is because the value profile $(1,0)$ is the ``highest''  joint type \footnote{In this environment, a high-value buyer and a low-value seller are more willing to trade. Thus, a value profile with a high buyer's value and a low seller's value can be interpreted as a ``high joint type'' in the traditional mechanism design literature.} in this environment in that it has the highest virtual value of 1\footnote{Indeed, trade has to take place with probability one for this value profile in any maxmin trading mechanism because the worst-case value distribution puts a probability mass on this value profile.}.  So far I have obtained the trading rule $q^*$\footnote{Plugging the trading rule $q^*$ into \eqref{eq3}, it is straightforward that $\mu^*=-\frac{r^2}{1-r}$.}. 
The payment rule $t_B^*$ (resp, the transfer rule $t_S^*$) is then characterized by $(ii)$ (resp, $(iii)$) of Proposition \ref{p1}. 
\subsubsection{Characterization of Symmetric Triangular Value Distribution}
I impose a \textit{0-weighted-virtual-value condition} on the joint distribution, stating that weighted virtual value is 0 for any value profile in the support except for the highest joint type. Formally, \[\label{eq16}
  \Phi(v)=0,  \quad \forall v\in ST\backslash\{(1,0)\}.\tag{ZWVV-1}
\]
\indent I now illustrate how I construct a feasible value distribution such that \eqref{eq16} holds.  I start from  value profiles in which either $v_B=1$ or $v_S=0$. Assume that $Pr^*(1,0)=\alpha$. Consider  value profiles $(v_B,0)$ in which $v_B\in (r,1)$. Let  $S^*(v_B,0):= \int_{(v_B,1)}\pi^*(x,0)dx+Pr^*(1,0)$ for $v_B\in (r,1)$ and $S^*(1,0):= Pr^*(1,0)$.  Note that  $\pi^*(v_B,0)=-\frac{\partial S^*(v_B,0)}{\partial v_B}$ for $v_B \in (r,1)$. By \eqref{eq16}, I have that for any $(v_B,0)$ in which $v_B\in (r,1)$, 
\[\label{eq18}
    \pi^*(v_B,0)(v_B-0)-S^*(v_B,0)=0.
\]
Note that this is a simple ordinary differential equation, to which the solution is 
\[\label{eq19}
    S^*(v_B,0)=\frac{\alpha}{v_B},\quad \pi^*(v_B,0)=\frac{\alpha}{v^2_B}, \quad \forall v_B\in (r,1).
\]
Then consider value profiles $(1,v_S)$ in which $ v_S \in (0,1-r)$. Similarly, let $S^*(1,v_S):= \int_{(0,v_S)}\pi^*(1,x)dx+Pr^*(1,0)$ for $v_S\in (0,1-r)$. Note that  $\pi^*(1,v_S)=\frac{\partial S^*(1,v_S)}{\partial v_S}$ for $v_S\in (0,1-r)$. By \eqref{eq16}, I have that for any  $(1,v_S)$ in which $ v_S \in (0,1-r)$, 
\[\label{eq20}
    \pi^*(1,v_S)(1-v_S)-S^*(1,v_S)=0.
\]
Note that this is also a  simple ordinary differential equation, to which the solution is 
\[\label{eq21}
    S^*(1,v_S)=\frac{\alpha}{1-v_S},\quad \pi^*(1,v_S)=\frac{\alpha}{(1-v_S)^2},\quad \forall v_S\in (0,1-r).
\]
Finally consider  any value profile $(v_B,v_S)$ in which $v_B-v_S > r$, $v_B\neq 1$ and $v_S\neq 0$. Let $S^*(v_B,v_S):= \int_{(v_B,1)}\pi^*(b,v_S)db+\pi^*(1,v_S)$ if  $v_B-v_S > r$, $v_B\neq 1$ and $v_S\neq 0$.  Note that $\pi^*(v_B,v_S)=-\frac{\partial S^*(v_B,v_S)}{\partial v_B}$ if $v_B-v_S > r$, $v_B\neq 1$ and $v_S\neq 0$. By \eqref{eq16}, I have that if $v_B-v_S > r$, $v_B\neq 1$ and $v_S\neq 0$,
\[\label{eq22}
    \pi^*(v_B,v_S)(v_B-v_S)-S^*(v_B,v_S)-\int_{(0,v_S)}\pi^*(v_B,s)ds-\pi^*(v_B,0)=0 \tag{PIE}
\]
Note  that \eqref{eq22} is a (second order) partial integral equation. It is straightforward to see that  $S^*(v_B,v_S)$ is not separable by taking the cross partial derivative. I take the guess-and-verify approach to solve  \eqref{eq22}. I guess that if $v_B-v_S > r$, $v_B\neq 1$ and $v_S\neq 0$,
\[\label{eq23}
    S^*(v_B,v_S)=\frac{\alpha}{(v_B-v_S)^2}.
\]
Under this guess,  the L.H.S. of \eqref{eq22} equals $\frac{2\alpha}{(v_B-v_S)^3}(v_B-v_S)-\frac{\alpha}{(v_B-v_S)^2}-
\int_{(0,v_S)}\frac{2\alpha}{(v_B-s)^3}ds-\frac{\alpha}{v^2_B}$, which can be shown to be 0 with simple algebra. Thus, I verified the guess.\\
\indent To solve for $\alpha$, I use the requirement that $\pi^*(v)$ is a distribution. Note that the marginal distribution for $S$ is $\pi^*_S(v_S)=S^*(v_S+r,v_S)=\frac{\alpha}{(v_S+r-v_S)^2}=\frac{\alpha}{r^2}$ for $0<v_S< 1-r$ and $Pr^*_S(v_S=0)=S^*(r,0)=\frac{\alpha}{r}$. Since the integration is 1, I obtain that 
\[\label{eq24}
    \frac{\alpha}{r}+\frac{\alpha}{r^2}\cdot (1-r)=1.
\]
Thus, $\alpha=r^2$.\\ 
\indent  The final step is  to make sure that the constructed joint distribution has the known expectations, which will allow me to solve for  $r$. Formally, 
\[\label{eq25}
    r\cdot 1+\int_r^1v_B dv_B=M_B,\tag{KE-B-1}
\]
\[\label{eq26}
    r\cdot 0+\int_0^{1-r}v_S dv_S=M_S.\tag{KE-S-1}
\]
Summing up \eqref{eq25} and \eqref{eq26}, I obtain that $M_B+M_S=1$, which is the symmetric case I am considering. It is straightforward to show that $r=1-\sqrt{1-(M_B-M_S)}$ is the unique solution.

\subsection{Asymmetric Case: $M_B+M_S\neq 1$}\label{s52}
I follow the same approach to study the asymmetric case. Indeed, the characterization for the symmetric case provides us with good intuitions about the solution for the asymmetric case. I will  propose a \textit{logarithmic trading mechanism}\footnote{The name of this trading mechanism is motivated by that the trading rule, the payment rule and the transfer rule all involve  logarithmic functions.} and an \textit{asymmetric triangular value distribution}, and then show that they form a saddle point. The illustration of the result is relegated to Appendix \ref{ae}. \\
\indent The logarithmic trading mechanism $(q^{**},t_B^{**},t_S^{**})$ is described as follows. If $ r_2b-(1-r_1)a> r_1r_2$, 
\[q^{**}(b,a)=\frac{1}{\ln\frac{1-r_2}{r_1}}[\ln(\frac{1-r_1-r_2}{1-r_1}b+\frac{r_1r_2}{1-r_1})-\ln(\frac{1-r_1-r_2}{r_2}a+r_1)],\]
\[t^{**}_B(b,a)=-\frac{r_1r_2}{(1-r_1-r_2)\ln\frac{1-r_2}{r_1}}[\ln(\frac{1-r_1-r_2}{1-r_1}b+\frac{r_1r_2}{1-r_1})-\ln(\frac{1-r_1-r_2}{r_2}a+r_1)]\]\[
+\frac{1}{\ln\frac{1-r_2}{r_1}}(b-\frac{1-r_1}{r_2}a-r_1),\]
\[t^{**}_S(b,a)=-\frac{r_1r_2}{(1-r_1-r_2)\ln\frac{1-r_2}{r_1}}[\ln(\frac{1-r_1-r_2}{1-r_1}b+\frac{r_1r_2}{1-r_1})-\ln(\frac{1-r_1-r_2}{r_2}a+r_1)]
\]\[+\frac{1}{\ln\frac{1-r_2}{r_1}}(\frac{r_2}{1-r_1}b-a-\frac{r_1r_2}{1-r_1}).\]
Here  $(r_1,r_2)$ in which $r_1\in (0,1)$, $r_2\in (0,1)$ and $r_1+r_2\neq 1$,  as will be shown in Lemma \ref{l2},  is a solution to the following system of equations:
\[\label{eq27}
    M_B=\int_{r_1}^1\frac{r_1(1-r_2)}{(\frac{1-r_1-r_2}{1-r_1}v_B+\frac{r_1r_2}{1-r_1})^2}v_Bdv_B+r_1:= H_1(r_1,r_2),\tag{KE-B-2}
\]
\[\label{eq28}
    M_S=\int_{0}^{r_2}\frac{r_1(1-r_2)}{(\frac{1-r_1-r_2}{r_2}v_S+r_1)^2}v_Sdv_S:= H_2(r_1,r_2).\tag{KE-S-2}
\]
If $ r_2b-(1-r_1)a\le  r_1r_2$,  \[q^{**}(b,a)=t^{**}_B(b,a)=t^{**}_S(b,a)=0.\]
\begin{remark}
\normalfont If $M_S=0$, then it is common knowledge that the seller's value is 0. Note that $q^{**}(b,0)=\frac{1}{\ln\frac{1-r_2}{r_1}}\ln(\frac{1-r_1-r_2}{r_1(1-r_1)}b+\frac{r_2}{1-r_1})$. If $r_2=0$, it is straightforward that it is reduced to the mechanism found by \cite{carrasco2018optimal} when the monopolistic seller only knows the expectation of the buyer's value.
\end{remark}
\begin{remark}
\normalfont The logarithmic trading mechanism also satisfies the standard weak budget balance property.
\end{remark}
\indent The logarithmic trading mechanism can be implemented as follows, similar to the random double auction. \\
\textbf{Step 0}: \textit{transformed bid and ask}.  The intermediary publicly commits to  transforming a bid price $b$ and an ask price $a$ as follows: $b'=\frac{1}{\ln\frac{1-r_2}{r_1}}[\ln(\frac{1-r_1-r_2}{1-r_1}b+\frac{r_1r_2}{1-r_1})], a'=\frac{1}{\ln\frac{1-r_2}{r_1}}[\ln(\frac{1-r_1-r_2}{r_2}a+r_1)]$. The buyer and the seller both know $r_1$ and $r_2$ as well as the transformations. \\
\textbf{Step 1}: \textit{uniformly random spread}. The intermediary publicly commits to randomly drawing a spread $s'$ uniformly on $[0, 1]$. Then a random spread is drawn whose realization is not observed by either the buyer or the seller. The buyer and the seller both know   the uniform distribution on $[0,1]$ from which the random spread is drawn.  \\
\textbf{Step 2}: \textit{random transaction price and random commission fee}.  The buyer submits a bid price $b$, and the seller submits an ask price $a$, simultaneously.  If the difference between the transformed bid price  and the transformed ask price is greater than the realized spread, or $b' - a' > s'$, then the seller sells the asset to the buyer at the  price $\frac{r_2}{1-r_1-r_2}[(\frac{1-r_2}{r_1})^{a'+s'}-r_1]$, and the buyer pays the intermediary the commission fee $(\frac{1-r_2}{r_1})^{a'+s'}$.  Otherwise, no trade takes place, and no one pays or receives anything. \\
\indent Now let me specify the asymmetric triangular value distribution. The support is an asymmetric triangular subset of joint valuations $AT:=\{v|r_2v_B-(1-r_1)v_S> r_1r_2\}$.   The marginal distribution for the buyer is a combination of some generalized Pareto distribution on $(r_1,1)$ and an atom of size $r_1$ on 1: $\pi^{**}_B(v_B)=\frac{r_1(1-r_2)}{(\frac{1-r_1-r_2}{1-r_1}v_B+\frac{r_1r_2}{1-r_1})^2}$ for $ v_B \in (r_1,1)$ and $Pr^{**}_B(1)=r_1$.  The marginal distribution for the buyer is a combination of some generalized Pareto distribution on $(0,r_2)$ and an atom of size $1-r_2$ on 0: $\pi^{**}_S(v_S)=\frac{r_1(1-r_2)}{(\frac{1-r_1-r_2}{r_2}v_S+r_1)^2}$ for $v_S\in (0,r_2)$ and  $Pr^{**}_S(0)=1-r_2$. The conditional distribution for the buyer is a combination of some (different) generalized Pareto distribution on $(r_1+\frac{1-r_1}{r_2}v_S, 1)$ and an atom on $1$: when $v_S\in (0,r_2)$, $\pi^{**}_B(v_B|v_S)=\frac{2(\frac{1-r_1-r_2}{r_2}v_S+r_1)^2}{(v_B-v_S)^3}$ for $v_B\in (r_1+\frac{1-r_1}{r_2}v_S, 1) $ and $Pr^{**}_B(v_B=1|v_S)=\frac{(\frac{1-r_1-r_2}{r_2}v_S+r_1)^2}{(1-v_S)^2}$; when $v_S=0$, $\pi^{**}_B(v_B|v_S=0)=\frac{r_1}{(v_B)^2}$ for $v_B\in (r_1, 1) $ and $Pr^{**}_B(v_B=1|v_S=0)=r_1$. The conditional distribution for the seller is a combination of some (different) generalized Pareto distribution on $(0, \frac{r_2(v_B-r_1)}{1-r_1})$ and an atom on 0: when $v_B\in (r_1,1)$,  $\pi^{**}_S(v_S|v_B)=\frac{2(\frac{1-r_1-r_2}{1-r_1}v_B+\frac{r_1r_2}{1-r_1})^2}{(v_B-v_S)^3}$ for $v_S\in (0, \frac{r_2(v_B-r_1)}{1-r_1})$ and $Pr^{**}_B(v_S=0|v_B)=\frac{(\frac{1-r_1-r_2}{1-r_1}v_B+\frac{r_1r_2}{1-r_1})^2}{(v_B)^2}$; when $v_B=1$, $\pi^{**}_S(v_S|v_B=1)=\frac{1-r_2}{(1-v_S)^2}$ for $v_S\in (0, r_2) $ and $Pr^{**}_S(v_S=0|v_B=1)=1-r_2$.  Equivalently, the asymmetric triangular value distribution can be described as a combination of a joint density function on $AT \backslash  \{(1,0)\}$ and an atom of size $r_1(1-r_2)$ on the value profile $(1,0)$ as follows (See Figure \ref{fig:S2}).
$$\pi^{**}(v_B,v_S)=  \left\{
\begin{array}{lll}
\frac{2r_1(1-r_2)}{(v_B-v_S)^3}     &      & {\text{if $r_2v_B-(1-r_1)v_S\ge r_1r_2$, $v_B\neq 1$ and $v_S\neq 0$},}\\
\frac{r_1(1-r_2)}{(1-v_S)^2}    &      & {\text{if $v_B=1$ and $0<v_S < r_2$},}\\
\frac{r_1(1-r_2)}{(v_B)^2}   &      & {\text{if $r_1< v_B < 1 $ and $v_S=0$}.}
\end{array} \right. $$
$$Pr^{**}(1,0)=r_1(1-r_2).$$
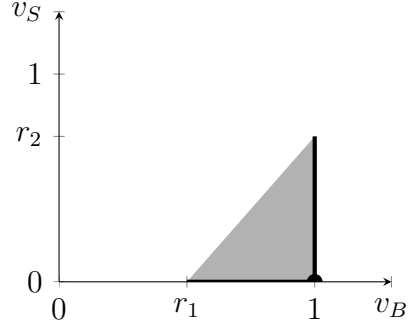
\begin{figure}
\centering
\begin{tikzpicture}
\begin{axis}[
    axis lines = left,
    xmin=0,
        xmax=1.3,
        ymin=0,
        ymax=1.3,
        xtick={0,0.5,1,1.3},
        ytick={0,0.7,1,1.3},
        xticklabels = {$0$, $r_1$, $1$, $v_B$},
        yticklabels = {$0$, $r_2$, $1$, $v_S$},
        legend style={at={(1.1,1)}}
]

\path[name path=axis] (axis cs:0,0) -- (axis cs:1,0);
\path[name path=A] (axis cs: 0.5,0) -- (axis cs:1,0.7);
\path[name path=B] (axis cs:0,0.5) -- (axis cs:1,0.5);
\path[name path=C] (axis cs:0,0) -- (axis cs:1,0);
\addplot[area legend, black!30] fill between[of=A and C,  soft clip={domain=0.5:1}];
\addplot[black, ultra thick] coordinates {(0.5, 0) (1, 0)};
\addplot[black, ultra thick] coordinates {(1, 0.7) (1, 0)};
\node[black,right] at (axis cs:0.97,0.04){};
\node at (axis cs:1,0) [circle, scale=0.5, draw=black,fill=black] {};
\end{axis}
\end{tikzpicture}
\caption{Asymmetric Triangular Value Distribution} \label{fig:S2}
\end{figure}

\begin{remark}
\normalfont The asymmetric triangular value distribution exhibits positive correlation for $r_1< v_B<1$ and $0<v_S< r_2$.\footnote{To see this, note that $\Pi^{**}_S(v_S|v_B)=\frac{(\frac{1-r_1-r_2}{1-r_1}v_B+\frac{r_1r_2}{1-r_1})^2}{(v_B-v_S)^2}$ is decreasing w.r.t. $v_B$ for $v_B \in (r_1,1)$. When $v_B=1$, $\Pi^{**}_S(v_S|v_B)=\frac{1-r_2}{1-v_S}\ge \frac{(1-r_2)^2}{(1-v_S)^2}$, so  the positive correlation breaks when $v_B=1$. Similarly, $\Pi^{**}_B(v_B|v_S)=1-\frac{(\frac{1-r_1-r_2}{r_2}v_S+r_1)^2}{(v_B-v_S)^2}$ is decreasing w.r.t. $v_S$ for $v_S\in (0,r_2)$. When $v_S=0$, $\Pi^{**}_B(v_B|v_S=0)=1-\frac{r_1}{v_B}\le 1-\frac{(r_1)^2}{(v_B)^2}$, so the positive correlation breaks when $v_S=0$.}
\end{remark}
\begin{remark}
\normalfont  If $M_S=0$, then it is common knowledge that the seller's value is 0. When $v_S=r_2=0$, it is straightforward that the asymmetric triangular value distribution is reduced to the worst-case distribution found by \cite{carrasco2018optimal} when the monopolistic seller only knows the expectation of the buyer's value.
\end{remark}
\begin{lemma}\label{l2}
For the asymmetric case, there exists a solution $(r_1,r_2)\in (0,1)^2$ to the system of equations \eqref{eq27} and \eqref{eq28}. In addition, $r_1+r_2\neq 1$. 
\end{lemma}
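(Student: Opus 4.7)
The plan is to apply the Poincar\'e--Miranda theorem (the two-dimensional intermediate value theorem) to the continuous map $\Phi(r_1,r_2):=(H_1(r_1,r_2)-M_B,\,H_2(r_1,r_2)-M_S)$ on a closed sub-rectangle $[\epsilon,1-\epsilon]^2\subset(0,1)^2$, and then separately verify that the resulting solution cannot lie on the diagonal $r_1+r_2=1$.

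The first step is to check that $H_1$ and $H_2$ extend to jointly continuous functions on the whole open square $(0,1)^2$. Away from the diagonal $r_1+r_2=1$ this follows directly from dominated convergence applied to the integrands in \eqref{eq27}--\eqref{eq28}. On the diagonal itself the integrands remain well-defined: setting $r_2=1-r_1$ forces the coefficient $(1-r_1-r_2)/(1-r_1)$ to vanish and reduces the denominators to $(r_1r_2/(1-r_1))^2$, which is bounded below on compact subsets of $(0,1)$, so the integrals take finite values. Equivalently, applying the substitutions $u=\frac{1-r_1-r_2}{1-r_1}v_B+\frac{r_1r_2}{1-r_1}$ (for $H_1$) and $w=\frac{1-r_1-r_2}{r_2}v_S+r_1$ (for $H_2$) gives closed forms involving $\ln\frac{1-r_2}{r_1}$ and powers of $1/(1-r_1-r_2)$, whose apparent poles cancel upon Taylor-expanding the logarithm, confirming continuous extension across the diagonal.

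The second step is the boundary analysis. Using these closed-form expressions, elementary limit computations yield $H_1(r_1,r_2)\to 0$ as $r_1\to 0^+$, $H_1(r_1,r_2)\to 1$ as $r_1\to 1^-$, $H_2(r_1,r_2)\to 0$ as $r_2\to 0^+$, and $H_2(r_1,r_2)\to 1$ as $r_2\to 1^-$, with uniform convergence on compact subsets of $(0,1)$ for the other coordinate. Since $0<M_S<M_B<1$, one may choose $\epsilon>0$ small enough that on the four faces of $[\epsilon,1-\epsilon]^2$ the four inequalities $H_1(\epsilon,\cdot)-M_B<0$, $H_1(1-\epsilon,\cdot)-M_B>0$, $H_2(\cdot,\epsilon)-M_S<0$ and $H_2(\cdot,1-\epsilon)-M_S>0$ hold simultaneously. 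Poincar\'e--Miranda then produces a zero of $\Phi$ in $(\epsilon,1-\epsilon)^2\subset(0,1)^2$, which is the desired $(r_1,r_2)$.

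The third step rules out $r_1+r_2=1$ at this solution. On the diagonal, the substitution $r_2=1-r_1$ collapses the integrand of $H_1$ to $v_B$ and that of $H_2$ to $v_S$ over their respective supports, so direct integration yields $H_1(r_1,1-r_1)+H_2(r_1,1-r_1)=\tfrac{1-r_1^2}{2}+r_1+\tfrac{(1-r_1)^2}{2}=1$; since the asymmetric case stipulates $M_B+M_S\ne 1$, any solution produced in the previous step must have $r_1+r_2\ne 1$. The main obstacle in this plan is the continuity extension in the first step: the closed-form expressions exhibit formal $1/(1-r_1-r_2)$ and $1/(1-r_1-r_2)^2$ singularities, so verifying their removability requires either a careful Taylor-series cancellation or a dominated-convergence argument on the original integral representation.
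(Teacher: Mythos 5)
Your third step (ruling out $r_1+r_2=1$ via $H_1+H_2=1$) is exactly the paper's argument and is fine, and your continuity analysis is sound. The gap is in step two: the Poincar\'e--Miranda boundary conditions on the box $[\epsilon,1-\epsilon]^2$ do not hold for general $0<M_S<M_B<1$, and the reason is the anti-diagonal corner $(\epsilon,1-\epsilon)$.

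Concretely, $H_1$ is strictly increasing in $r_2$ and $H_2$ is strictly increasing in $r_1$ (paper's Claims 4 and 2), so on the face $r_1=\epsilon$ the maximum of $H_1$ is attained at $(\epsilon,1-\epsilon)$, which lies on the diagonal $r_1+r_2=1$ where the continuous extension gives $H_1(\epsilon,1-\epsilon)=\tfrac{1-\epsilon^2+2\epsilon}{2}\to\tfrac12$; likewise on the face $r_2=1-\epsilon$ the minimum of $H_2$ is at the same corner, where $H_2(\epsilon,1-\epsilon)=\tfrac{(1-\epsilon)^2}{2}\to\tfrac12$. Therefore $H_1(\epsilon,\cdot)-M_B<0$ on the whole left face requires (roughly) $M_B>\tfrac12$, and $H_2(\cdot,1-\epsilon)-M_S>0$ on the whole top face requires $M_S<\tfrac12$; at least one fails whenever $(M_B,M_S)$ does not straddle $\tfrac12$. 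The slip is in invoking ``uniform convergence on compact subsets'': that gives uniformity of $H_1(r_1,\cdot)\to0$ on any \emph{fixed} $[\delta,1-\delta]$, but the face $\{r_1=\epsilon\}\times[\epsilon,1-\epsilon]$ changes with $\epsilon$ and its upper corner presses against the diagonal, where $H_1$ does not go to $0$.

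To salvage the degree-theoretic idea you would need a domain that stays clear of that corner (e.g.\ truncated along the diagonal), but then the sign pattern required by Poincar\'e--Miranda becomes unclear because the solution's position relative to the diagonal is not known in advance. The paper sidesteps this by a different, genuinely one-dimensional argument: using monotonicity (Claims 1--4) and the implicit function theorem it defines decreasing curves $r_1=I(r_2)$ (solving $H_1=M_B$) and $r_2=J(r_1)$ (solving $H_2=M_S$), then shows the increasing map $J\circ I$ has a fixed point by pushing $r_2\to1$ and exploiting the algebraic identity $H_1-H_2=\bigl(\tfrac{(1-r_1)^2}{r_2^2}-1\bigr)H_2+r_1(2-r_1)$ together with $M_B-M_S>0$. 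Your plan needs to be restructured along these lines, or supplemented with a careful analysis near the diagonal, before it constitutes a proof.
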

\begin{theorem}\label{t2}
For the asymmetric case, the logarithmic trading mechanism is a maxmin trading mechanism with a profit guarantee of  $r_1(1-r_2)$, and the asymmetric triangular value distribution is a worst-case value distribution.
\end{theorem}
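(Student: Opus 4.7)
The plan is to establish the theorem by constructing a saddle point of the zero-sum game, in complete analogy with the symmetric case treated in Theorem \ref{t1}. Concretely, I will verify the two inequalities of (SP) for the pair $((q^{**},t_B^{**},t_S^{**}),\pi^{**})$: (A) $\pi^{**}$ minimizes the expected profit over $\Pi(M_B,M_S)$ when the intermediary uses the logarithmic mechanism; and (B) the logarithmic mechanism maximizes the expected profit across $\mathcal{D}$ when nature plays $\pi^{**}$. Before either claim, I would carry out the preliminaries: check that $(q^{**},t_B^{**},t_S^{**})\in \mathcal{D}$ by verifying that $q^{**}$ is non-decreasing in $v_B$ and non-increasing in $v_S$ (immediate from the sign of $\tfrac{1-r_1-r_2}{1-r_1}$ versus $\tfrac{1-r_1-r_2}{r_2}$ and monotonicity of $\ln$) and that the proposed $t_B^{**},t_S^{**}$ satisfy the envelope formulas of Proposition \ref{p1}; check that $q^{**}(1,0)=1$; and check that $\pi^{**}\in \Pi(M_B,M_S)$, with the marginals integrating to $1$ being a routine algebraic identity and the first-moment conditions reducing to \eqref{eq27}--\eqref{eq28}, whose solvability is guaranteed by Lemma \ref{l2}.

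For Claim (A) I would appeal to Lemma \ref{l1}: form $t^{**}(v)=t_B^{**}(v)-t_S^{**}(v)$ and observe that the two identical $\ln$ blocks in $t_B^{**}$ and $t_S^{**}$ cancel, leaving an affine function of $(v_B,v_S)$ on $AT$. Hence there exist constants $\lambda_B^{**},\lambda_S^{**},\mu^{**}$ with $t^{**}(v)=\lambda_B^{**}v_B+\lambda_S^{**}v_S+\mu^{**}$ on $AT=\mathrm{supp}(\pi^{**})$, which is exactly the complementary slackness condition \eqref{cs}. Outside $AT$ we have $t^{**}\equiv 0$, and I would verify that the affine expression is nonpositive there, so $t^{**}(v)\ge \lambda_B^{**}v_B+\lambda_S^{**}v_S+\mu^{**}$ for every $v\in V$. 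Then for any feasible $\pi$,
\[
\int t^{**}\,d\pi \;\ge\; \lambda_B^{**} M_B+\lambda_S^{**} M_S+\mu^{**}\;=\;\int t^{**}\,d\pi^{**},
\]
establishing Claim (A) and yielding the profit guarantee; evaluating the linear combination (or equivalently integrating $t^{**}$ against the atom $Pr^{**}(1,0)=r_1(1-r_2)$) gives the value $r_1(1-r_2)$.

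For Claim (B) I would use the weighted-virtual-value representation of Section \ref{s44}, adapted to the two edge pieces of $\mathrm{supp}(\pi^{**})$ on $\{v_B=1\}$ and $\{v_S=0\}$ and to the atom at $(1,0)$. Splitting the support as in the symmetric case, the expected profit of any $(q,t)\in \mathcal{D}$ under $\pi^{**}$ takes the form $\int q(v)\Phi^{**}(v)\,dv+q(1,0)\cdot r_1(1-r_2)$, so it suffices to verify the asymmetric analog of the 0-weighted-virtual-value condition \eqref{eq16}: $\Phi^{**}\equiv 0$ on $AT\setminus\{(1,0)\}$. This splits into three cases matching the three branches of the density of $\pi^{**}$. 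On the interior piece, plugging the guess $S^{**}(v_B,v_S)=r_1(1-r_2)/(v_B-v_S)^2$ into the partial integral equation analogous to \eqref{eq22} makes the equation collapse to $0$ by direct computation; on the two edges, the check reduces to ordinary differential equations analogous to \eqref{eq18} and \eqref{eq20}, which are satisfied by the densities $r_1(1-r_2)/v_B^2$ and $r_1(1-r_2)/(1-v_S)^2$. With $\Phi^{**}\equiv 0$ on the density part of $AT$, the profit depends only on $q(1,0)$, and since $q^{**}(1,0)=1$ and $q\le 1$, the logarithmic mechanism maximizes expected profit with value $r_1(1-r_2)$, giving Claim (B).

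The main obstacle, exactly as in the symmetric case, is the bookkeeping in Claim (B): one must compute the weighted virtual value on three different pieces of $\mathrm{supp}(\pi^{**})$ while correctly handling the atoms $Pr^{**}_B(1)=r_1$ and $Pr^{**}_S(0)=1-r_2$, and verify the asymmetric partial integral equation whose cancellations are less transparent than in the $M_B+M_S=1$ case. The remaining steps (envelope identities, linearity of $t^{**}$ on $AT$, and the moment-matching conditions) are mechanical substitutions once Lemma \ref{l2} is in hand.
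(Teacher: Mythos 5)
Your overall saddle-point structure matches the paper's proof exactly: your Claim (A) corresponds to the paper's Step 2 (the duality/complementary-slackness verification with the explicit multipliers $\lambda_B^{**},\lambda_S^{**},\mu^{**}$), and your Claim (B) to its Step 1 (the 0-weighted-virtual-value argument). Claim (A) is fine as written — in fact your direct argument, that $t^{**}$ has an affine minorant (with equality on $\overline{AT}$ and nonpositivity outside), is the content of the weak-duality side of the paper's strong-duality appeal and is arguably cleaner; the observation that the two logarithmic blocks cancel when forming $t^{**}_B-t^{**}_S$ is correct and is precisely how the paper obtains \eqref{eq29}.

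Claim (B), however, has a genuine gap. You assert that once $\Phi^{**}\equiv 0$ on $AT\setminus\{(1,0)\}$ is verified, ``the profit depends only on $q(1,0)$.'' This is not true: the weighted-virtual-value representation is $\int_V q(v)\Phi^{**}(v)\,dv$ plus the atom term, and a feasible $q\in\mathcal{D}$ is not required to vanish outside $AT$ — it is only required to be monotone. Without checking the sign of $\Phi^{**}$ outside the support, a monotone $q$ placing positive trading probability there could in principle pick up positive weighted virtual value and beat $q^{**}$. You must additionally check that $\Phi^{**}(v)\le 0$ for $v$ with $r_2 v_B-(1-r_1)v_S<r_1 r_2$; the paper states this explicitly (``the weighted virtual value is non-positive for any value profile outside the support''). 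The check is short: for such $v$ one has $\pi^{**}(v)=0$ and $\Pi^{**}_B(v_B,v_S)=0$ (no density with $x\le v_B$ at that $v_S$), while $\Pi^{**}_S(v_S,v_B)\ge 0$, so $\Phi^{**}(v)=-\pi^{**}_S(v_S)-\Pi^{**}_S(v_S,v_B)\le 0$. Adding this line closes the gap and makes your argument match the paper's.
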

\begin{remark}
\normalfont If $M_B+M_S\to 1$, it is straightforward to show that there is a solution in which $r_1+r_2\to 1$. Then by L'H\^opital's rule, the logarithmic trading mechanism converges to the random double auction (that is,  $q^{**}\to q^*$, $t_B^{**}\to t_B^*$, and $t_S^{**}\to t_S^*$). In addition, it is straightforward that the asymmetric triangular value distribution converges to the the symmetric triangular value distribution.
\end{remark}

\section{Deterministic Mechanisms}\label{s6}
In this section, I restrict attention to the class of  deterministic DSIC and EPIR trading mechanisms and  characterize maxmin deterministic trading mechanisms across mechanisms in this class.  Note that  Proposition \ref{p1} still holds, with an additional property that $q(v)$ \footnote{For exposition, I define $q(v)$ to be 0 if $v\notin V$. } is either 0 or 1 for any $v\in V$. Recall that I focus on the case in which $0<M_S<M_B<1$.
\begin{definition}
\normalfont The \textit{trade boundary} of a given deterministic DSIC and EPIR trading mechanism $(q,t_B,t_S)$ is a set of value profiles  $\mathcal{B}:=\{\bar{v}=(\bar{v}_B,\bar{v}_S)\in V| q(\bar{v})=0\footnote{For exposition, I assume that trade does not take place on the trade boundary. As will be clear,  this is to guarantee that a best response for adversarial nature exists. This assumption does not affect the solution and the value of the problem.  Similar assumption is also made  in \cite{kos2015selling}.} \quad \text{and\indent for any small}\quad \epsilon>0, \quad q(\bar{v}_B+\epsilon,\bar{v}_S)=1\quad \text{or}\quad q(\bar{v}_B,\bar{v}_S-\epsilon)=1\}$.
\end{definition}
\indent I observe that  the trade boundary of a  deterministic DSIC and EPIR trading mechanism is non-decreasing (See Figure \ref{fig:S5}).
\begin{observation}
\normalfont If $\bar{v}=(\bar{v}_B, \bar{v}_S)\in \mathcal{B},\bar{v}'= (\bar{v}'_B, \bar{v}'_S) \in \mathcal{B}$ and $\bar{v}_B>\bar{v}_B'$, then $\bar{v}_S\ge \bar{v}_S'$. \footnote{To see this, note that  by the definition of the trade boundary, I have that $q(\bar{v}_B,\bar{v}_S')=1$ because $\bar{v}' \in \mathcal{B}$ and $\bar{v}_B>\bar{v}_B'$. Then,  again by the definition of the trade boundary, I have that $\bar{v}_S\ge \bar{v}_S'$ because $\bar{v}\in \mathcal{B}$.}
\end{observation}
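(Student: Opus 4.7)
The plan is to argue by contradiction. Suppose instead that $\bar{v}_S<\bar{v}_S'$. Since $\bar{v}\in\mathcal{B}$ forces $q(\bar{v})=0$, I will derive $q(\bar{v})=1$ from the trade-boundary condition at $\bar{v}'$ together with the monotonicity of $q$ given by Proposition~\ref{p1}, yielding the contradiction.

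The only tools I will invoke are (i) $q$ is non-decreasing in $v_B$ and non-increasing in $v_S$ (and takes values in $\{0,1\}$ in the deterministic class), and (ii) $\bar{v}'\in\mathcal{B}$ means that for every sufficiently small $\epsilon>0$, at least one of $q(\bar{v}_B'+\epsilon,\bar{v}_S')=1$ or $q(\bar{v}_B',\bar{v}_S'-\epsilon)=1$ holds. The key move is to fix $\epsilon>0$ small enough that both $\epsilon<\bar{v}_B-\bar{v}_B'$ and $\epsilon<\bar{v}_S'-\bar{v}_S$; each bound is strictly positive under the hypothesis $\bar{v}_B>\bar{v}_B'$ and the contradictory assumption $\bar{v}_S<\bar{v}_S'$.

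I then split on which disjunct of (ii) holds for this $\epsilon$. If $q(\bar{v}_B'+\epsilon,\bar{v}_S')=1$, then monotonicity in $v_B$ (using $\bar{v}_B>\bar{v}_B'+\epsilon$) yields $q(\bar{v}_B,\bar{v}_S')=1$, and monotonicity in $v_S$ (using $\bar{v}_S<\bar{v}_S'$) yields $q(\bar{v})=1$. If instead $q(\bar{v}_B',\bar{v}_S'-\epsilon)=1$, then $\bar{v}_S<\bar{v}_S'-\epsilon$ and monotonicity in $v_S$ gives $q(\bar{v}_B',\bar{v}_S)=1$, after which monotonicity in $v_B$ gives $q(\bar{v})=1$. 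Either case delivers $q(\bar{v})=1$, the desired contradiction. There is no substantive obstacle here: the observation is essentially that a $\{0,1\}$-valued monotone function has a monotone jump locus; the only bookkeeping is to shrink $\epsilon$ below both the $v_B$-gap and the $v_S$-gap between $\bar{v}$ and $\bar{v}'$, so that the perturbation guaranteed by the definition of $\mathcal{B}$ lies strictly inside the rectangle with opposite corners $\bar{v}'$ and $\bar{v}$.
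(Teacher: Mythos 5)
Your proof is correct and follows the same route as the paper---monotonicity of $q$ from Proposition \ref{p1} applied to the perturbations appearing in the definition of $\mathcal{B}$. You are, however, more careful than the paper's own footnote. The footnote asserts $q(\bar{v}_B,\bar{v}_S')=1$ ``because $\bar{v}' \in \mathcal{B}$ and $\bar{v}_B>\bar{v}_B'$,'' which implicitly relies on the \emph{first} disjunct, $q(\bar{v}_B'+\epsilon,\bar{v}_S')=1$, firing at $\bar{v}'$. When $\bar{v}'$ lies on a horizontal stretch of the boundary (so that only the second disjunct $q(\bar{v}_B',\bar{v}_S'-\epsilon)=1$ holds), one can have $q(\bar{v}_B,\bar{v}_S')=0$ even with $\bar{v}_B>\bar{v}_B'$---for instance $q(v)=\mathbb{1}[v_B>\tfrac12]\mathbb{1}[v_S<\tfrac12]$ with $\bar{v}'=(\tfrac35,\tfrac12)$ and $\bar{v}_B=\tfrac{7}{10}$---so that intermediate step does not go through verbatim, although the Observation itself remains true. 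Your version avoids this: by shrinking $\epsilon$ below both coordinate gaps and splitting on the two disjuncts, you show that either branch propagates $q=1$ up to $\bar{v}$ via monotonicity and yields the contradiction. That explicit case analysis is precisely the bookkeeping the footnote omits and is what is needed to make the argument airtight; the underlying idea (a $\{0,1\}$-valued monotone trading rule has a monotone switch locus) is the same.
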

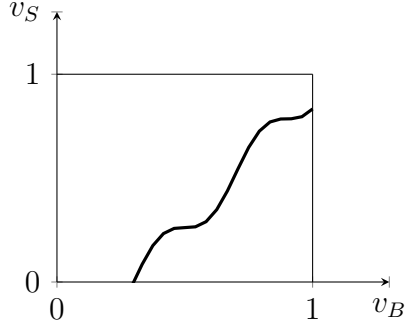
\begin{figure}
\centering
\begin{tikzpicture}
\begin{axis}[
    axis lines = left,
    xmin=0,
        xmax=1.3,
        ymin=0,
        ymax=1.3,
        xtick={0,1,1.3},
        ytick={0, 1,1.3},
        xticklabels = {$0$, $1$, $v_B$},
        yticklabels = {$0$,  $1$, $v_S$},
        legend style={at={(1.1,1)}}
]
\addplot[domain=0:1,color=black, very thick,  name path=X] {(5*pi*(x-0.3)+sin(deg(5*pi*(x-0.3))))/12};
\addplot[color=black, name path=Y] coordinates {(0,1) (1,1)};
\addplot[color=black, name path=Z] coordinates {(1,0) (1,1)};
\end{axis}
\end{tikzpicture}

\caption{The thick black curve is a trade boundary $\mathcal{B}$ that is non-decreasing.} \label{fig:S5}
\end{figure}
\indent The main idea of searching for a maxmin deterministic trading mechanism is as follows. I divide all possible deterministic DSIC and EPIR trading mechanisms into four classes according to the trade boundary. By strong duality, I can work on the dual program. I propose a relaxation of the dual program by ignoring a lot of constraints. The merit of doing so is to have a finite-dimensional linear programming problem.  Then I derive an upper bound of the value of the relaxation and  show that it can be attained by constructing  deterministic DSIC and EPIR trading mechanisms as well as a feasible value distribution.
\begin{theorem}\label{t3}
When $\sqrt{M_S}+\sqrt{1-M_B}< 1$ (See Figure \ref{fig:S3}), any deterministic DSIC and EPIR trading mechanism satisfying the following properties is a maxmin deterministic trading mechanism (See Figure \ref{fig:S4}):\\
(i). $(1-\sqrt{1-M_B},0) \in \mathcal{B}, (1,\sqrt{M_S})\in \mathcal{B}$.\\
(ii). $\mathcal{B}$ is above (including) the line  $\sqrt{M_S}v_B-\sqrt{1-M_B}v_S=\sqrt{M_S}(1-\sqrt{1-M_B})$.\\
(iii). The payment rule and the transfer rule are characterized by Proposition \ref{p1}.\\
The profit guarantee is $(1-\sqrt{M_S}-\sqrt{1-M_B})^2$.
The worst-case value distribution puts probability masses of $\sqrt{1-M_B},\sqrt{M_S}$ and $1-\sqrt{1-M_B}-\sqrt{M_S}$ on the value profiles $(1-\sqrt{1-M_B},0), (1,\sqrt{M_S})$ and $(1,0)$ respectively. \\
\indent When $\sqrt{M_S}+\sqrt{1-M_B}\ge 1$, the Never Trading Mechanism is a maxmin deterministic trading mechanism with a profit guarantee of 0. 
\end{theorem}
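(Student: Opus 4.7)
The plan is to combine the envelope formulas from Proposition \ref{p1} with linear-programming duality, following the roadmap indicated in the paragraph just before the theorem.

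First, for any deterministic DSIC and EPIR mechanism, Proposition \ref{p1} reduces the ex-post profit to a pure function of $q$, and determinism together with the Observation preceding the theorem forces $q\in\{0,1\}$ on a trade region whose boundary $\mathcal{B}$ is non-decreasing. Writing $\underline{v}_B(v_S)=\inf\{v_B:q(v_B,v_S)=1\}$ and $\overline{v}_S(v_B)=\sup\{v_S:q(v_B,v_S)=1\}$, the envelope formula collapses to $t(v)=\underline{v}_B(v_S)-\overline{v}_S(v_B)$ at any $v$ in the trade region, and $t(v)=0$ elsewhere. In particular, the three candidate support points $(1-\sqrt{1-M_B},0)$, $(1,\sqrt{M_S})$ and $(1,0)$ yield the clean values $t=0$, $t=0$ and $t=1-\sqrt{1-M_B}-\sqrt{M_S}$ for any mechanism satisfying (i) and (ii).

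Next, I classify mechanisms into four classes according to which two edges of $[0,1]^2$ the trade boundary $\mathcal{B}$ touches (for instance bottom/right, bottom/top, left/right, and the degenerate full-no-trade case). In each class I invoke strong duality for the inner minimization, as in Lemma \ref{l1}, and \emph{relax} the dual by retaining only the constraints corresponding to the three points above. This converts the dual into a three-constraint linear program in the multipliers $(\lambda_B,\lambda_S,\mu)$, whose optimal value furnishes an upper bound on the profit guarantee of every mechanism in that class. A routine computation shows that the relaxed bound is at most $(1-\sqrt{M_S}-\sqrt{1-M_B})^2$ whenever $\sqrt{M_S}+\sqrt{1-M_B}<1$, and at most $0$ otherwise.

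I would then verify attainment. Feasibility of the proposed three-point distribution follows from a one-line check that its masses sum to $1$ and its marginals have means $M_B$ and $M_S$, with nonnegativity of the mass on $(1,0)$ captured exactly by $\sqrt{M_S}+\sqrt{1-M_B}<1$. Evaluating the ex-post profit of any mechanism satisfying (i)--(iii) against this distribution yields $(1-\sqrt{M_S}-\sqrt{1-M_B})^2$ directly from the closed form above. To show that this distribution actually minimizes expected profit against any such mechanism, I would solve the complementary slackness system $\lambda_B v_B+\lambda_S v_S+\mu=t(v)$ at the three mass points for $(\lambda_B,\lambda_S,\mu)$ and verify the global dual inequality $\lambda_B v_B+\lambda_S v_S+\mu\ge t(v)$ on $V$; condition (ii) is essential here, since it forces $t\equiv 0$ along the segment joining the two zero-profit mass points and therefore allows the affine interpolant of $t$ at the three mass points to dominate $t$ everywhere on the unit square.

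The main obstacle is the classification step — specifically, showing that mechanisms whose trade boundary touches the top or left edge cannot beat the bound. This is where the finite-dimensional dual relaxation pays dividends: each relaxed LP admits a closed-form optimum, and the resulting maxima can be compared to $(1-\sqrt{M_S}-\sqrt{1-M_B})^2$ via elementary inequalities in $\sqrt{M_S}$ and $\sqrt{1-M_B}$. The degenerate regime $\sqrt{M_S}+\sqrt{1-M_B}\ge 1$ is then immediate: the three-point construction would place nonpositive mass on $(1,0)$, so the relaxed bound collapses to zero, which is trivially attained by the Never Trading Mechanism and is unimprovable by EPIR.
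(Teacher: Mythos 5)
Your proposal has the right skeleton (Proposition \ref{p1}'s envelope formula, a classification by which edges the trade boundary touches, a finite-dimensional dual relaxation, and a saddle-point verification), but it contains several genuine gaps that would block the argument as written.

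First, the dual inequality is stated in the wrong direction. Nature minimizes $\int t\,d\pi$ subject to the moment constraints, so the dual is $\sup_{\lambda_B,\lambda_S,\mu}\lambda_B M_B+\lambda_S M_S+\mu$ subject to $\lambda_B v_B+\lambda_S v_S+\mu\le t(v)$ for all $v$. To certify that the three-point distribution is primal-optimal you need the affine function to be \emph{dominated by} $t$ (with equality on the support), not to dominate it. Your phrasing (``verify the global dual inequality $\lambda_B v_B+\lambda_S v_S+\mu\ge t(v)$'' and ``allows the affine interpolant of $t$ at the three mass points to dominate $t$ everywhere'') reverses this and would verify nothing.

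Second, the upper-bound step breaks because you relax the dual by keeping only the constraints at the fixed points $(1-\sqrt{1-M_B},0)$, $(1,\sqrt{M_S})$, $(1,0)$. The clean values $t=0,0,1-\sqrt{1-M_B}-\sqrt{M_S}$ at those points hold \emph{only} for mechanisms that already satisfy (i) and (ii); for an arbitrary bottom/right mechanism whose boundary touches $(c_1,0)$ and $(1,c_2)$, the value of $t$ at your first two points is some mechanism-dependent quantity $\underline{v}_B(\cdot)-\overline{v}_S(\cdot)$ that you have not controlled. The paper instead keeps constraints at the parametric touch points $(c_1,0)$, $(1,c_2)$ (where $t=0$ for every such mechanism) together with the four vertices, derives a bound $K(c_1,c_2)$ depending only on $(c_1,c_2,M_B,M_S)$, and then maximizes $K$ over $c_1,c_2$, landing at $c_1^*=1-\sqrt{1-M_B}$, $c_2^*=\sqrt{M_S}$. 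You cannot skip directly to those values without the optimization that singles them out.

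Third, your treatment of the other classes is undemonstrated, and the paper does not handle them via the dual at all. For the boundaries touching top/left, left/right, or bottom/top, the paper observes from Proposition \ref{p1}(iv) that $t\le 0$ at all four corners of $[0,1]^2$ and then directly exhibits a feasible distribution supported on those corners (different distributions for $M_B+M_S\le 1$ and $>1$) under which expected profit is nonpositive, so the worst case is already $\le 0$ before any duality. Asserting ``a routine computation shows the relaxed bound is at most $(1-\sqrt{M_S}-\sqrt{1-M_B})^2$'' for those classes is not a substitute for that argument. Relatedly, your degenerate-regime argument (``nonpositive mass on $(1,0)$, so the relaxed bound collapses to zero'') is a non sequitur: infeasibility of one candidate distribution does not bound the optimum; the paper gets $0$ from the fact that $K(c_1,c_2)\le 0$ whenever $\sqrt{1-M_B}+\sqrt{M_S}\ge 1$, via the sign analysis of its $Q(c_1,c_2)$.
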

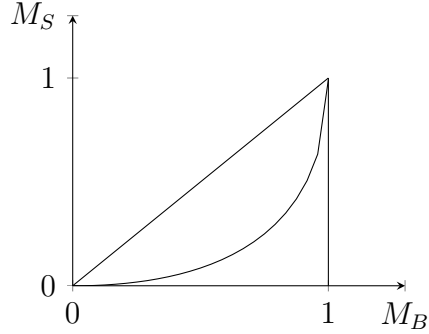
\begin{figure}
\centering
\begin{tikzpicture}
\begin{axis}[
    axis lines = left,
    xmin=0,
        xmax=1.3,
        ymin=0,
        ymax=1.3,
        xtick={0,1,1.3},
        ytick={0,1,1.3},
        xticklabels = {$0$,  $1$, $M_B$},
        yticklabels = {$0$,  $1$, $M_S$},
        legend style={at={(1.1,1)}}
]
\addplot[domain=0:1,color=black, name path=X] {2-x-2*(1-x)^0.5};
\addplot[color=black, name path=Y] coordinates {(0,0) (1,1)};
\addplot[color=black, name path=Z] coordinates {(1,0) (1,1)};
\end{axis}
\end{tikzpicture}

\caption{The black curve is $\sqrt{M_B}+\sqrt{1-M_S}=1$. The revenue guarantee of a maxmin deterministic trading mechanism is positive if and only  $(M_B,M_S)$ lies below  this curve. } \label{fig:S3}
\end{figure}
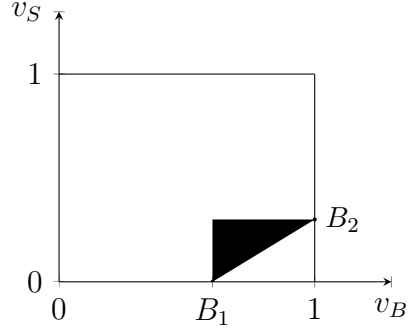
\begin{figure}
\centering
\begin{tikzpicture}
\begin{axis}[
    axis lines = left,
    xmin=0,
        xmax=1.3,
        ymin=0,
        ymax=1.3,
        xtick={0,0.6,1,1.3},
        ytick={0, 1,1.3},
        xticklabels = {$0$, $B_1$, $1$, $v_B$},
        yticklabels = {$0$,  $1$, $v_S$},
        legend style={at={(1.1,1)}}
]
\path[name path=axis] (axis cs:0,0) -- (axis cs:1,0);
\path[name path=A] (axis cs: 0.6,0.3) -- (axis cs:1,0.3);
\path[name path=B] (axis cs:0.6,0) -- (axis cs:1,0.3);
\addplot[area legend, black] fill between[of=A and B,  soft clip={domain=0.6:1}];
\node[black,right] at (axis cs:1,0.3){\small{$B_2$}};
\node at (axis cs:1,0.3) [circle, scale=0.1, draw=black,fill=black] {};
\node at (axis cs:0.6,0) [circle, scale=0.1, draw=black,fill=black] {};
\addplot[color=black, name path=Y] coordinates {(0,1) (1,1)};
\addplot[color=black, name path=Z] coordinates {(1,0) (1,1)};
\end{axis}
\end{tikzpicture}

\caption{$B_1=(\sqrt{1-M_B},0), B_2=(1,\sqrt{M_S})$. If $\sqrt{M_S}+\sqrt{1-M_B}<1$, then  $B_1\in \mathcal{B}, B_2\in \mathcal{B}$, and $\mathcal{B}$ lies in the black region for a maxmin deterministic  trading mechanism.} \label{fig:S4}
\end{figure}
That is, I characterize the  class of maxmin deterministic trading mechanisms for any pair of  expectations with a non-trivial profit guarantee. The worst-case value distribution is discrete, and is the same for the mechanisms in this class. Now I provide examples  of some maxmin deterministic trading mechanisms.\\
\indent \textit{Linear Trading Mechanism}: trade takes place with probability one if  $\sqrt{M_S}v_B-\sqrt{1-M_B}v_S>\sqrt{M_S}(1-\sqrt{1-M_B})$, and conditional on trading, the buyer pays $1-\sqrt{1-M_B}+\sqrt{\frac{1-M_B}{M_S}}v_S$ and the seller receives $\sqrt{\frac{M_S}{1-M_B}}[v_B-(1-\sqrt{1-M_B})]$; otherwise, no trade takes place, and no one pays or receives anything.
\\
\indent \textit{Double Posted-Price  Trading Mechanism}: trade takes place with probability one if  $v_B>1-\sqrt{1-M_B}$ and $v_S< \sqrt{M_S}$, and conditional on trading, the buyer pays $1-\sqrt{1-M_B}$ and the seller receives $\sqrt{M_S}$; otherwise, no trade takes place, and no one pays or receives anything.

\section{Extension and Discussion}\label{s7}
\subsection{Can-hold Case}\label{s71}
Consider a more general model in which the intermediary can hold the asset. To wit, this  only requires that  the   sum of  the buyer's allocation and the seller's allocation  do not exceed 1. Recall that this sum is required to be  1 for the main results.  Formally, the intermediary seeks a trading mechanism $(q_B,q_S,t_B,t_S)$ such that the following  constraints hold:  \[v_Bq_B(v)-t_B(v)\ge v_Bq_B(v_B',v_S)-t_B(v_B',v_S),  \quad \forall  v\in V,v_B'\in V_B;\tag{$DSIC_B$}\]
\[v_Bq_B(v)-t_B(v)\ge 0, \quad \forall v\in V;\tag{$EPIR_B$}\]
\[v_Sq_S(v)+t_S(v)\ge v_Sq_S(v_B,v_S')+t_S(v_B,v_S'),\quad \forall v\in V,v_S'\in V_S ;\tag{$DSIC_S'$}\]
\[v_Sq_S(v)+t_S(v)\ge v_S, \quad \forall v\in V; \tag{$EPIR_S'$}\]\[
 q_B(v)+q_S(v)\le 1,\quad \forall v\in V. \quad\tag{$CH$}\]
 I denote the set of such trading mechanisms as $\mathcal{D}'$\footnote{Note that here the monotonicity constraints are that $q_B(v_B,v_S)$ is non-decreasing w.r.t. $v_B$ for any $v_S$ and $q_S(v_B,v_S)$ is non-decreasing w.r.t. $v_S$ for any $v_B$.}. The intermediary's problem is to seek for a trading mechanism that solves  \[
 \sup_{(q_B,q_S,t_B,t_S)\in \mathcal{D}'}\inf_{\pi\in \Pi(M_B,M_S)}\int_Vt(v)d\pi(v). \tag{MTM'}\label{mtm'}\]
 \begin{theorem}\label{t4}
 For the symmetric (resp, the asymmetric) case, the random double auction (resp, the logarithmic trading mechanism) is a solution to \eqref{mtm'}. 
 \end{theorem}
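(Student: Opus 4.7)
The plan is to use the saddle-point approach again and show that the random double auction paired with $\pi^*$ (resp.\ the logarithmic trading mechanism paired with $\pi^{**}$) from Theorem \ref{t1} (resp.\ Theorem \ref{t2}) remains a saddle point when the mechanism class is enlarged from $\mathcal{D}$ to $\mathcal{D}'$. Since both mechanisms lie in $\mathcal{D}\subseteq \mathcal{D}'$, their profit guarantees against $\Pi(M_B,M_S)$ are unchanged by the enlargement of the mechanism class, and by Theorems \ref{t1} and \ref{t2} we already have $U(\text{RDA},\pi)\ge r^2$ for every $\pi\in\Pi(M_B,M_S)$ (with equality at $\pi^*$), and analogously for the asymmetric case. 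It therefore suffices to establish the upper-bound direction of the saddle point: no mechanism in $\mathcal{D}'$ earns strictly more than $r^2$ (resp.\ $r_1(1-r_2)$) in expectation against $\pi^*$ (resp.\ $\pi^{**}$). I outline the argument for the symmetric case; the asymmetric case is analogous.

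First I would extend revenue equivalence to $\mathcal{D}'$. For any $(q_B,q_S,t_B,t_S)\in\mathcal{D}'$, the envelope theorem together with binding $EPIR_B$ at $v_B=0$ and $EPIR_S'$ at $v_S=1$ pin down $t_B(v)=v_Bq_B(v)-\int_0^{v_B}q_B(x,v_S)dx$ and $t_S(v)=v_S(1-q_S(v))+\int_{v_S}^1(1-q_S(v_B,x))dx$. Swapping the order of integration expresses the expected profit under any $\pi$ as $U=\int_V q_B\Phi_B\,dv+\int_V q_S\Phi_S\,dv-1$, where $\Phi_B(v):=v_B\pi(v)-(\pi_S(v_S)-\Pi_B(v_B,v_S))$, $\Phi_S(v):=v_S\pi(v)+\Pi_S(v_S,v_B)$, and $\Phi_B-\Phi_S$ coincides with the weighted virtual value $\Phi$ of Section \ref{s44}. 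Crucially, $\Phi_S\ge 0$ pointwise since both summands are non-negative, and $\int_V\Phi_S\,dv=M_S+(1-M_S)=1$ for every $\pi\in\Pi(M_B,M_S)$.

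Writing $h(v):=1-q_B(v)-q_S(v)\ge 0$ and substituting $q_S=(1-q_B)-h$, a short algebraic manipulation using $\int_V\Phi_S\,dv=1$ gives
\[U((q_B,q_S),\pi^*)=\int_V q_B(v)\Phi(v)\,dv-\int_V h(v)\Phi_S(v)\,dv\le \int_V q_B(v)\Phi(v)\,dv.\]
It then suffices to bound $\int_V q_B\Phi\,dv\le r^2$ under $\pi^*$. The 0-weighted-virtual-value condition of Section \ref{s51} gives $\Phi(v)=0$ on $ST\setminus\{(1,0)\}$ and a Dirac mass of $r^2$ at $(1,0)$. A case analysis off the trading region---distinguishing $v_S>1-r$, $v_B<r$, the interior slab $\{v_B-v_S<r,\, v_B\in(r,1),\, v_S\in(0,1-r)\}$, and the atom lines $v_B=1$ and $v_S=0$---shows $\Phi(v)\le 0$ off $(1,0)$. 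Combined with $q_B\in[0,1]$, this yields $\int_V q_B\Phi\,dv\le r^2 q_B(1,0)\le r^2$, which delivers the saddle-point property on $\mathcal{D}'\times\Pi(M_B,M_S)$ and hence Theorem \ref{t4}. The main obstacle is the exhaustive sign analysis of $\Phi$ off $(1,0)$: one must carefully track the contributions of the atom lines $v_B=1$ and $v_S=0$ and of the point mass at $(1,0)$ in $\pi^*$ (and likewise for $\pi^{**}$ in the asymmetric case).
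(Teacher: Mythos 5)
Your proposal is correct and follows essentially the same route as the paper's argument in Section \ref{s71}: extend revenue equivalence and the weighted-virtual-value decomposition $U=\int q_B\Phi_B+\int q_S\Phi_S-1$ to $\mathcal{D}'$, then show that under $\pi^*$ (resp.\ $\pi^{**}$) the intermediary cannot gain by holding. Your substitution $q_S=(1-q_B)-h$ with $h\ge 0$ together with $\Phi_S\ge 0$ and $\int\Phi_S\,dv=1$ is a slightly more explicit algebraic packaging of the paper's observation that $\Phi_B=\Phi_S>0$ on the support (except at $(1,0)$), $\Phi_B\le 0\le\Phi_S$ outside, and $\Phi_B(1,0)>\Phi_S(1,0)=0$, but the content is the same; the residual sign work you flag as the ``main obstacle'' ($\Phi\le 0$ off $(1,0)$) is already part of the proofs of Theorems \ref{t1} and \ref{t2}, so no new analysis is needed.
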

That is,  the solution to the more general  problem \eqref{mtm'} coincides with the main results. To see this, first note that the the value of \eqref{mtm'} is weakly higher than the value of \eqref{mtm} because $\mathcal{D}\subset \mathcal{D}'$. I will show that the the value of \eqref{mtm} is weakly higher than the value of \eqref{mtm'}. Indeed, given the constructed joint distribution, the constructed trading mechanism is an optimal mechanism even among this wider class of trading mechanism $\mathcal{D}'$. To show this,  first note that a simple adaption of Proposition \ref{p1} yields an analogous virtual representation of the expected profit for this environment: \[E[t(v)]=\int_v[q_B(v)\Phi_B(v)+q_S(v)\Phi_S(v)]dv-1,\]
where $\Phi_B(v)= 
\pi(v)v_B-(\pi_S(v_S)-\Pi_B(v_B,v_S)) $ and $\Phi_S(v)=\pi(v)v_S+\Pi_S(v_S,v_B)$. Here $\Phi_B(v)$ (resp, $\Phi_S(v)$) is the buyer's (resp, the seller's) weighted virtual value when the value profile is $v=(v_B,v_S)$. Given the constructed joint distribution, $\Phi_B=\Phi_S >0$ for any  value profile in the support except for the highest joint type (1,0), in which $\Phi_B(1,0)>\Phi_S (1,0) =0$; in addition, $\Phi_B\le 0$ and $\Phi_S\ge 0$ for any value profile outside the support. This is true for both the symmetric case and the asymmetric case.  Then any  trading mechanism in $\mathcal{D}'$ will be optimal if 1) the ex-post participation constraints are binding for zero-value buyer and one-value seller, and 2) $q_B=0$ and $q_S=1$ for any value profile outside the support, $q_B+q_S=1$ for any value profile inside the support and $q_B(1,0)=1,q_S(1,0)=0$.  It is straightforward to  see that the constructed trading mechanism\footnote{In this more general model, $q_B=q^*$ and $q_S=1-q^*$ (resp, $q_B=q^{**}$ and $q_S=1-q^{**}$ ) for the symmetric case (resp, the asymmetric case).} is such a mechanism and therefore remains optimal to the constructed joint distribution in this general model. Indeed, given the property that the buyer's weighted virtual value is equal to the seller's weighted virtual value for any value profile in the support except for (1,0), the intermediary does not have an incentive to hold the asset in an optimal trading mechanism.
\subsection{Information Design Problem}\label{s72}
A well-known result in models of private information  is that the distribution of agents' private information is a key determinant of their welfare. For example,  in the environment of  bilateral trade, \cite{myerson1983efficient} consider the independent private value model and show that  the two trading parties' welfare  is not the full surplus for general distributions and the amount of their welfare depends on their distributions of valuations. Indeed, most of the existing models of private information in the environment of bilateral trade assume that the distribution of the two trading parties' private information is exogenous. However, it is conceivable that a  \textit{financial regulator}, e.g., the Security and the Exchange Commission (SEC),  may optimally design the nature of  the private information held by the two trading parties to maximize their welfare, given the fact that their welfare is affected by the distribution of their private information.\\
\indent In this section, I consider an information design problem of a \textit{financial regulator}  whose objective is to maximize   \textit{the total expected gain from trade}, defined as the sum of the traders' expected profits. I assume that the financial regulator can carefully design the private information of the traders by choosing a value distribution $\pi\in \Delta(V)$.   The intermediary, after observing the choice of the distribution but not the realized joint valuations, designs a profit-maximizing trading mechanism across DSIC and EPIR trading mechanisms.  Formally, the financial regulator solves \footnote{If the intermediary has multiple optimal trading mechanisms, I break ties in favor of the financial regulator by selecting one that maximizes the gain from trade for the traders. This is a standard tie-breaking rule in the information design literature (e.g., \cite{kamenica2011bayesian}, \cite{roesler2017buyer}) and \cite{condorelli2020information}.} \[
\sup_{\pi\in \Delta(V)}\int[q^*(v)(v_B-v_S)-t^*(v)]d\pi(v) \tag{MGT}\label{fi}\]
subject to \[(q^*,t_B^*,t_S^*)\in \arg\sup_{(q,t_B,t_S)\in \mathcal{D}}\int_vt(v)d\pi(v).\]

\begin{theorem}\label{t6}
The symmetric triangular value distribution with $r=\frac{1}{2}$ is a solution to \eqref{fi}.
\end{theorem}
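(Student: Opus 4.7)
The plan is to sandwich the regulator's objective $W(\pi) := S(\pi) - V(\pi)$ (total surplus minus the intermediary's optimal profit) between a universal upper bound $W(\pi)\le 1/2$ and the value $1/2$ attained by $\pi^*$. Two ingredients do all the work: EPIR together with setting $q=0$ whenever $v_B<v_S$ forces $S(\pi)\le \mathbb{E}_\pi[(v_B-v_S)^+]$ for every $\pi$; and the random double auction with commission $r=1/2$ is a specific DSIC and EPIR mechanism delivering profit $\tfrac{r}{1-r}\mathbb{E}_\pi[(v_B-v_S-r)^+]=\mathbb{E}_\pi[(v_B-v_S-1/2)^+]$, so $V(\pi)\ge \mathbb{E}_\pi[(v_B-v_S-1/2)^+]$. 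Combining the two yields
\[
W(\pi)\;\le\;\mathbb{E}_\pi[(v_B-v_S)^+]-\mathbb{E}_\pi[(v_B-v_S-1/2)^+]=\mathbb{E}_\pi\bigl[\min\bigl(1/2,(v_B-v_S)^+\bigr)\bigr]\;\le\;\tfrac12.
\]

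Next I would verify that the proposed $\pi^*$ saturates both inequalities. Its support lies in $\{v_B-v_S\ge 1/2\}$, so $\min(1/2,(v_B-v_S)^+)\equiv 1/2$ a.s.\ under $\pi^*$. A direct marginal calculation (Section \ref{s51}) gives $M_B=7/8$, $M_S=1/8$, hence $\mathbb{E}_{\pi^*}[v_B-v_S]=3/4$ and the RDA lower bound evaluates to $1/4$. By Theorem \ref{t1} the intermediary's maximum profit against $\pi^*$ is exactly $r^2=1/4$, certifying that the RDA lower bound is tight at $\pi^*$.

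To convert this into realized welfare, I would invoke the tie-breaking rule and exhibit an intermediary-optimal mechanism that simultaneously attains the surplus bound. Take the deterministic trading rule $q(v)=\mathbb{1}\{v_B-v_S>1/2\}$ with transfers pinned down by the envelope formulas $(ii)$ and $(iii)$ of Proposition \ref{p1}; monotonicity of $q$ guarantees DSIC and EPIR. A one-line envelope computation shows $t(v)=1-(v_B-v_S)$ on the support, hence $\mathbb{E}_{\pi^*}[t]=1-3/4=1/4=V(\pi^*)$, so this mechanism ties for intermediary-optimality. Since the regulator is allowed to break the tie, she picks it, obtaining total surplus $\mathbb{E}_{\pi^*}[(v_B-v_S)\mathbb{1}\{v_B-v_S>1/2\}]=3/4$ and therefore $W(\pi^*)=3/4-1/4=1/2$, matching the universal upper bound.

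The conceptually pivotal move is to pair the RDA commission $r=1/2$ in the lower bound on $V(\pi)$ with a distribution whose support lies above that same threshold, which is precisely what makes both legs of the sandwich tight at once. Everything else is routine: marginal moment computations for $\pi^*$, an envelope check that the "trade-on-support" mechanism is DSIC/EPIR, and a single appeal to Theorem \ref{t1} to identify $V(\pi^*)$. The only place requiring care is the bookkeeping of the atoms of $\pi^*$ at $v_B=1$, $v_S=0$, and $(1,0)$ when evaluating $\mathbb{E}_{\pi^*}[v_B-v_S]$.
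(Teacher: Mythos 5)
Your proposal is correct, and it reaches the paper's conclusion by a more streamlined route. The paper formalizes the argument through two auxiliary information-design problems \eqref{ap1} and \eqref{ap2}, establishes \eqref{ap1}$\ge$\eqref{fi} (Lemma \ref{l3}) and \eqref{ap1}$\equiv$\eqref{ap2} (Lemma \ref{l4}), and then solves \eqref{ap2} by exhibiting a full saddle point (Lemma \ref{l5}), plus the tie-breaking argument (Lemma \ref{l6}). Your sandwich argument compresses Lemmas \ref{l3}--\ref{l5} into one inequality: $S(\pi)\le\mathbb{E}_\pi[(v_B-v_S)^+]$ plays the role of Lemma \ref{l3}, and the identity $(v_B-v_S)^+-(v_B-v_S-\tfrac12)^+=\min(\tfrac12,(v_B-v_S)^+)$ is precisely the computation the paper performs in Step~2 of Lemma~\ref{l5}'s proof. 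What you avoid is the saddle-point scaffolding: you never need to verify that the random double auction best-responds to $\pi^*$ in the auxiliary game (Step~1 of Lemma~\ref{l5}), because you certify the universal upper bound by fixing the RDA with $r=\tfrac12$ as a profit \emph{lower} bound for every $\pi$ simultaneously, rather than as an equilibrium strategy against a particular $\pi^*$. The tightness step and tie-breaking check match the paper's Lemma~\ref{l6} and your envelope computation of $t(v)=1-(v_B-v_S)$ on the support is correct. One small imprecision: you attribute $S(\pi)\le\mathbb{E}_\pi[(v_B-v_S)^+]$ to ``EPIR together with setting $q=0$ whenever $v_B<v_S$,'' but this bound requires neither EPIR nor any claim about the optimal $q$ on $\{v_B<v_S\}$ — it follows immediately from $q(v)\in[0,1]$, since $q(v)(v_B-v_S)\le(v_B-v_S)^+$ pointwise for any $q$.
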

\indent That is, the constructed symmetric triangular value  distribution with $r=\frac{1}{2}$ is indeed a solution to the financial regulator's information design problem \eqref{fi}.\\
\indent Recall that a symmetric triangular value distribution has the property that the weighted virtual value is zero  for any value profile in the support except for the value profile $(1,0)$. This property has two implications. First, it implies that  an \textit{efficient} \footnote{Precisely, trade takes place with probability one for any report in the support of the symmetric triangular value distribution, and 0 otherwise.} trading mechanism is a best response for the intermediary. Second, it implies that in a best response,   the intermediary does not discriminate across all value profiles in the support but the value profile $(1,0)$. These two implications render a symmetric triangular value distribution a good candidate as a solution. Under a symmetric triangular value distribution, the total expected gain from trade  is  the difference between the first-best   gain from trade and the expected profit of the intermediary.  If $r$ is high, then the first-best  gain from trade is high, but the expected profit of the intermediary is  high as well. In the extreme case where $r=1$, the symmetric triangular value distribution is reduced to a point mass on the value profile $(1,0)$, under which the first-best gain from trade and the expected profit of the intermediary both are one, resulting in a zero total expected gain from trade. The optimal $r=\frac{1}{2}$  maximizes the difference between the first-best gain from trade and the expected profit of the intermediary.\\
\indent Under the symmetric triangular value distribution with $r=\frac{1}{2}$,  the first-best gain from trade is $\frac{3}{4}$ and it is shared equally among the buyer, the seller and the intermediary, so that each obtains an expected profit of $\frac{1}{4}$. The total expected gain from trade is $\frac{1}{2}$.\\
\indent To show that this distribution  is a solution to the problem \eqref{fi}, I construct two equivalent auxiliary problems whose values are weakly higher than that of  the problem \eqref{fi}. Then I show that the values of the auxiliary problems are $\frac{1}{2}$ by constructing a saddle point. Details are relegated to Appendix \ref{ad}.\\ 
\indent The information design problem \eqref{fi} is closely related to \cite{condorelli2020information} who  consider a  \textit{buyer-optimal} information design problem:  the buyer can choose the probability distribution
of her valuation for the good to maximize her profit. The seller, after observing the buyer’s choice of the distribution
but not the realized valuation, designs a revenue-maximizing selling mechanism.  The problem \eqref{fi}  may be interpreted as a \textit{traders-optimal} information design problem.  Critically, trade is efficient under the solution in either problem. 
\section{Concluding Remarks}\label{s8}
In this paper, I construct a \textit{novel} trading mechanism, a random double auction, and show that it maximizes the profit guarantee for a profit-maximizing intermediary who only knows the expectations of the traders' values for the symmetric case. I extend my analysis to characterizing a  maxmin trading mechanism  for the asymmetric case. The key step for the results is to construct a saddle point. The construction method may be of independent interest and  useful for other design problems, e.g., multidimensional 
Bayesian persuasion, and  other robust optimization problems.  To my knowledge, this paper is the first to provide a complete  maxmin solution across all  dominant-strategy incentive compatible and ex-post individually rational mechanisms.  In addition, the profit guarantee is positive for any non-trivial pair of  expectations  in which  the expectation of the buyer's value exceeds the expectation of the seller's value. In contrast, the profit guarantee of a maxmin deterministic trading mechanism is 0 for a positive measure of non-trivial pairs of  expectations. 
\appendix
\section{Proofs for Section \ref{s4}} \label{aa}
\subsection{Proof of Proposition \ref{p1}}
(i). Dominant-strategy incentive compatibility (DSIC) for a type $v_B$ of $B$ requires that for any $v_S$ and $v_B'\neq v_B$: $$v_Bq(v_B,v_S) - t_B(v_B,v_S)\ge v_Bq(v_B',v_S) - t_B(v_B',v_S). $$DSIC for a type $v_B'$ of $B$  requires that for any $v_S$ and $v_B\neq v_B'$:$$ v_B'q(v_B',v_S) - t_B(v_B',v_S)\ge v_B'q(v_B,v_S) - t_B(v_B,v_S).$$Adding the two inequalities, I have that: $$(v_B-v_B')(q(v_B,v_S)-q(v_B',v_S))\ge 0.$$It follows that $q(v_B,v_S)\ge q(v_B',v_S)$ whenever $v_B>v_B'$ .\\
Similarly, DSIC for a type $v_S$ of $S$ requires that for any $v_B$ and $v_S'\neq v_S$: $$v_S(1-q(v_B,v_S)) + t_B(v_B,v_S)\ge v_S(1-q(v_B,v_S')) + t_S(v_B,v_S').  $$DSIC for a type $v_S'$ of $S$ requires that for any $v_B$ and $v_S\neq v_S'$:$$ v_S'(1-q(v_B,v_S')) +t_S(v_B,v_S')\ge v_S'(1-q(v_B,v_S)) + t_S(v_B,v_S).$$Adding the two inequalities, I have that: $$(v_S-v_S')(q(v_B,v_S')-q(v_B,v_S))\ge 0.$$It follows that $q(v_B,v_S)\le q(v_B,v_S')$ whenever $v_S>v_S'$ .\\\\
(ii). Fix $v_S$, and define $$U_B(v_B):=v_Bq(v_B,v_S)-t_B(v_B,v_S).$$ 
By the first two inequalities in (i), I obtain that $$(v_B'-v_B)q(v_B,v_S)\le U_B(v_B')-U_B(v_B)\le (v_B'-v_B)q(v_B',v_S).$$
Therefore $U_B(v_B)$ is Lipschitz, hence absolutely continuous w.r.t. $v_B$ and therefore differentiable w.r.t. $v_B$ almost everywhere. Then applying the envelope theorem to the above inequality at each point of differentiability, I obtain that $$
\frac{dU_B(v_B)}{dv_B}=q(v_B,v_S).$$Then I have that $$t_B(v_B,v_S)= v_Bq(v_B,v_S)-\int_{0}^{v_B}q(x,v_S)dx-U_B(0).$$
Note that $U_B(0)\ge 0$ by the EPIR constraint. If $U_B(0)>0$, then I can reduce it to 0 so that I can increase the payment from $B$ for  any value profile in which the seller's value is $v_S$.  And the profit guarantee  will be weakly greater. Thus, when searching for a maxmin trading mechanism, it is without loss of generality to let $U_B(0)=0$. Then I obtain that  $t_B(v_B,v_S)=v_Bq(v_B,v_S)-\int_{0}^{v_B}q(x,v_S)dx$.\\
(iii).  Similarly, fix $v_B$, and define $$U_S(v_S):=v_S(1-q(v_B,v_S))+t_S(v_B,v_S).$$ 
By the fourth and fifth inequalities in (i), I obtain that $$(v_S'-v_S)(1-q(v_B,v_S))\le U_S(v_S')-U_S(v_S)\le (v_S'-v_S)(1-q(v_B,v_S')).$$ Therefore $U_S(v_S)$ is Lipschitz, hence absolutely continuous w.r.t. $v_S$ and therefore differentiable w.r.t. $v_S$ almost everywhere. Then applying the envelope theorem to the above inequality at each point of differentiability, I obtain that $$
\frac{dU_S(v_S)}{dv_S}=1-q(v_B,v_S).$$Then I have that $$t_B(v_B,v_S)= U_S(1)-v_S(1-q(v_B,v_S))-\int_{v_S}^{1}q(v_B,x)dx.$$
Note that $U_S(1)\ge 1$ by the EPIR constraint. If $U_S(1)>1$, then I can reduce it to 1 so that I can decrease the payment to $S$ for any value profile in which the buyer's value is $v_B$.  And the profit guarantee will be weakly greater. Thus, when searching for a maxmin trading mechanism, it is without loss of generality to let $U_S(1)=1$. Then I obtain that  $t_S(v)=1-(1-q(v))v_S-\int_{v_S}^1(1-q(v_B,x)dx=q(v)v_S+\int_{v_S}^1q(v_B,x)dx$.\\
(iv). This is implied by (ii) and (iii).
\subsection{  Proof of Lemma \ref{l1}} 
Given a DSIC and EPIR trading mechanism $(q,t_B,t_S)$, the primal minimization problem of adversarial nature is as follows (with dual variables in the bracket): 
 \[\inf_{\pi}\int t(v)d\pi(v)\tag{P}\label{p}\] subject to 
    \[\int v_Bd\pi(v)=M_B,\quad (\lambda_B)\]
    \[\int v_Sd\pi(v)=M_S,\quad (\lambda_S)\]
    \[\int d\pi(v)=1. \quad (\mu)\]
It has the following  dual maximization problem:
    \[\sup_{\lambda_B,\lambda_S,\mu \in \mathcal{R}}\lambda_B M_B+\lambda_S M_S+\mu\tag{D}\label{D}\] subject to
    \[\lambda_Bv_B+\lambda_Sv_S+\mu \le t(v),\quad\forall v\in V.\]
Note that the value of \eqref{p} is bounded by 1 as $t(v)\le 1$. In addition, the  joint distribution that puts all probability masses on the value profile $(M_B,M_S)$  is in the interior of the primal cone. Then by Theorem 3.12 in \cite{anderson1987linear}, strong duality holds. Then, by the complementary slackness, \eqref{cs} holds. 
\section{Proofs for Section \ref{s5}}\label{ab}
\subsection{ Proof of Theorem \ref{t1}}
\textit{Step 1}: The random double auction maximizes the expected profit under the symmetric triangular value distribution.  To show this, first note that  \eqref{eq16} holds by construction. In addition,  the weighted virtual value  is non-positive for any value profile outside the support and positive for the value profile $(1,0)$.   Then any  DSIC and EPIR trading mechanism in which 1) ex-post participation constraints are binding for zero-value buyer and one-value seller, and 2) trade does not take place  when $v_B - v_S< r$ and trade takes place with probability one when $(v_B,v_S) =  (1,0)$ is an optimal trading mechanism. It is straightforward to see that the random double auction is such a mechanism.\\
\textit{Step 2}:  The symmetric triangular value distribution minimizes the expected profit under the random double auction. I use the duality theory to show this. Note that the symmetric triangular value distribution is a feasible value distribution by construction.  By the weighted virtual value representation, the value of \eqref{p} given the random double auction and the symmetric triangular value distribution is $Pr(1,0)\times (1-0)=r^2$. Second,  the constraints in \eqref{D} hold for all value profiles. To see this,  note  that for any value profile $v=(v_B,v_S)$  in the support (or $v\in ST$),   $\lambda_B^* v_B+\lambda_S^* v_S + \mu^*  =t^*(v)$ by \eqref{eq3}. Also,  for any value profile $v=(v_B,v_S)$  in which $v_B-v_S=r$,  $\lambda_B^* v_B+\lambda_S^* v_S + \mu^*  =0=t^*(v)$ because $\lambda_B^*=\frac{r}{1-r}$, $\lambda_S^*=-\frac{r}{1-r}$ and $\mu^*=-\frac{r^2}{1-r}$. Then,  for any value profile $v=(v_B,v_S)$ in which $v_B-v_S<r$, $\lambda_B^* v_B+\lambda_S^* v_S + \mu^* < 0=t^*(v)$ because $\lambda_B^*>0$ and $\lambda_S^*<0$.  Finally,  the value of \eqref{D} given the constructed dual variables is equal to $r^2$ by simple calculation.
\subsection{Proof of Lemma \ref{l2}}
I start from establishing the following four claims regarding some properties of the functions $H_1(r_1,r_2)$ and $H_2(r_1,r_2)$, which play a crucial role in establishing Lemma \ref{l2}. First, by simple calculation, I have that for $(r_1,r_2)\in (0,1)^2$, 
\[\label{eq57}
    H_1(r_1,r_2)=\frac{(1-r_2)r_1(1-r_1)^2}{(1-r_1-r_2)^2}\ln{\frac{1-r_2}{r_1}}-\frac{r_1r_2(1-r_1)}{1-r_1-r_2}+r_1,\tag{B.2.1}
\]
\[\label{eq42}
    H_2(r_1,r_2)=\frac{r_1(1-r_2)r_2^2}{(1-r_1-r_2)^2}\ln{\frac{1-r_2}{r_1}}-\frac{r_1r_2^2}{1-r_1-r_2}.\tag{B.2.2}
\]
 First note that $H_1$ and $H_2$ are not well-defined when $0<r_1=1-r_2<1$. Using   L'H\^{o}pital's rule, it is straightforward to show that  $\lim_{1-r_2\to r_1}H_1(r_1,r_2)=\frac{1-r_1^2+2r_1}{2}$ and  $\lim_{1-r_2\to r_1}H_2(r_1,r_2)=\frac{(1-r_1)^2}{2}$ . I thus define $H_1(r_1,r_2):=\lim_{1-r_2\to r_1}H_1(r_1,r_2)$ and  $H_2(r_1,r_2):=\lim_{1-r_2\to r_1}H_2(r_1,r_2)$ when $0<r_1=1-r_2<1$. This makes  $H_1$ and $H_2$ continuous on $(0,1)^2$. In addition, using L'H\^{o}pital's rule,  it is straightforward to show that $\lim_{r_1\to 0} H_1(r_1,r_2)=0$ for $r_2\in (0,1)$, $\lim_{r_1\to 1}H_1(r_1,r_2)=1$ for $r_2\in (0,1)$, $\lim_{r_2\to 0}H_1(r_1,r_2)=r_1-r_1\ln{r_1}$ for $r_1\in (0,1)$, $\lim_{r_2\to 1}H_1(r_1,r_2)=1$ for $r_1\in (0,1)$, $\lim_{r_1\to 0}H_2(r_1,r_2)=0$ for $r_2\in (0,1)$,  $\lim_{r_1\to 1}H_2(r_1,r_2)=(1-r_2)\ln{(1-r_2)}+r_2$ for $r_2\in (0,1)$, $\lim_{r_2\to 0}H_2(r_1,r_2)=0$ for $r_1\in (0,1)$ and $\lim_{r_2\to 1} H_1(r_1,r_2)=1$ for $r_1\in (0,1)$. Therefore I define $H_1(r_1,r_2)$ and $H_2(r_1,r_2)$ as follows. \[
H_1(r_1,r_2)=
\left\{
\begin{array}{lll}
\frac{(1-r_2)r_1(1-r_1)^2}{(1-r_1-r_2)^2}\ln{\frac{1-r_2}{r_1}}-\frac{r_1r_2(1-r_1)}{1-r_1-r_2}+r_1     &      & { \text{if $(r_1,r_2)\in (0,1)^2$ and $r_1+r_2\neq 1$},}\\
\frac{1-r_1^2+2r_1}{2}     &      & {\text{if}\quad 0<r_1=1-r_2<1,}\\
0      &      & {\text{if $r_1=0$ an  $r_2\in (0,1)$},}
\\
1      &      & {\text{if $r_1=1$ an  $r_2\in (0,1)$},}
\\
r_1-r_1\ln{r_1}      &      & {\text{if $r_2=0$ an  $r_1\in (0,1)$},}
\\
1      &      & {\text{if $r_2=1$ an  $r_1\in (0,1)$}.}
\end{array} \right.\]
\[
H_2(r_1,r_2)=
\left\{
\begin{array}{lll}
\frac{r_1(1-r_2)r_2^2}{(1-r_1-r_2)^2}\ln{\frac{1-r_2}{r_1}}-\frac{r_1r_2^2}{1-r_1-r_2}    &      & { \text{if $(r_1,r_2)\in (0,1)^2$ and $r_1+r_2\neq 1$},}\\
\frac{(1-r_1)^2}{2}    &      & {\text{if}\quad 0<r_1=1-r_2<1,}\\
0      &      & {\text{if $r_1=0$ an  $r_2\in (0,1)$},}
\\
(1-r_2)\ln{(1-r_2)}+r_2      &      & {\text{if $r_1=1$ an  $r_2\in (0,1)$},}
\\
0     &      & {\text{if $r_2=0$ an  $r_1\in (0,1)$},}
\\
1      &      & {\text{if $r_2=1$ an  $r_2\in (0,1)$}.}
\end{array} \right.\]

\begin{claim}\label{cl3}
Fix any $r_2\in [0,1)$, $H_1(r_1,r_2)$ is strictly increasing in $r_1$. Moreover, for any $r_2\in (0,1)$,  $\lim_{r_1 \to 1-r_2}\frac{\partial H_1(r_1,r_2)}{\partial r_1}$ exists and is positive.   In addition, for any $r_2\in [0,1)$, as $r_1\to 1$, $H_1(r_1,r_2)\rightarrow 1$.
\end{claim}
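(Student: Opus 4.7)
My plan is to express $H_1$ through the CDF of the buyer's marginal and work with the resulting derivative formula. Integrating the density $\pi_B^{**}$ yields, for $v \in (r_1, 1)$,
\begin{equation*}
F_B(v; r_1, r_2) = \frac{(1-r_2)(1-r_1)(v-r_1)}{(1-r_1-r_2)v + r_1 r_2},
\end{equation*}
so that $H_1(r_1, r_2) = 1 - \int_{r_1}^1 F_B(v; r_1, r_2)\, dv$. Since $F_B(r_1; r_1, r_2) = 0$, Leibniz's rule combined with a quotient-rule calculation for $\partial F_B/\partial r_1$ gives
\begin{equation*}
\frac{\partial H_1}{\partial r_1} = \int_{r_1}^1 \frac{(1-r_2)\, f(v)}{D^2}\, dv, \quad f(v) := (1-r_1)^2 v - r_2(v-r_1)^2,\ D := (1-r_1-r_2)v + r_1 r_2.
\end{equation*}

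For strict monotonicity, $D$ is linear in $v$ and strictly positive at both endpoints ($D|_{v=r_1} = r_1(1-r_1)$, $D|_{v=1} = (1-r_1)(1-r_2)$), hence $D > 0$ on $[r_1, 1]$. To show $f > 0$ on $[r_1, 1]$ I compute $f(r_1) = (1-r_1)^2 r_1 > 0$ and $f(1) = (1-r_1)^2 (1-r_2) > 0$; because $f$ is concave in $v$ (its second derivative is $-2r_2 \leq 0$), it lies at or above its strictly positive chord, so $f > 0$ throughout. The integrand is therefore strictly positive and $\partial H_1/\partial r_1 > 0$.

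For the limit at $r_1 = 1-r_2$ with $r_2 \in (0, 1)$, the above derivative identity (a priori valid only for $r_1 \neq 1-r_2$) extends continuously through the removable singularity: at $r_1 = 1-r_2$ the quantity $D$ reduces to the nonzero constant $r_2(1-r_2)$, and both $f$ and $D$ remain bounded, jointly continuous functions of $(r_1, v)$. Hence the limit of $\partial H_1/\partial r_1$ equals the integral evaluated at $r_1 = 1-r_2$; substituting $u = v - (1-r_2)$ converts it into an elementary polynomial integral on $[0, r_2]$ which evaluates to $\frac{r_2(6-5r_2)}{6(1-r_2)}$, strictly positive on $(0, 1)$ since $6-5r_2 \geq 1$.

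For $\lim_{r_1 \to 1} H_1 = 1$, I argue term-by-term from the closed form. When $r_2 \in (0, 1)$, the first term vanishes because $(1-r_1)^2 \to 0$ while $(1-r_1-r_2)^2 \to r_2^2 > 0$ and the logarithm stays bounded; the second term vanishes because of the factor $1-r_1$ and the fact that $1-r_1-r_2 \to -r_2 \neq 0$; and the third term is $r_1 \to 1$. When $r_2 = 0$, the piecewise value $H_1(r_1, 0) = r_1 - r_1 \ln r_1$ tends to $1$ via $\lim_{r_1 \to 1} r_1 \ln r_1 = 0$. The key insight powering all three parts is the clean factorization $\partial F_B/\partial r_1 = -(1-r_2) f(v)/D^2$ with $f$ concave and positive at the endpoints; recognizing this through the quotient-rule algebra is the main obstacle, but once it is in hand, strict monotonicity and the middle limit follow transparently, and the $r_1 \to 1$ limit is routine.
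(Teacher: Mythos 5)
Your proof is correct, and it takes a genuinely different route from the paper's. The paper differentiates the explicit rational-plus-logarithm closed form for $H_1$, obtaining (B.2.3), and then reduces positivity to a one-parameter inequality via the substitution $\beta = (1-r_2)/r_1$, which is in turn split into two auxiliary inequalities ((B.2.7) and (B.2.8)) each established by two rounds of calculus (analyzing $f(\beta) = \ln\beta - 2(\beta-1)/(\beta+1)$ and $h(\beta) = (1-3\beta)\ln\beta + 2\beta(\beta-1)$). You instead exploit the probabilistic structure: $H_1$ is the mean of the buyer's marginal, so $H_1 = 1 - \int_{r_1}^1 F_B\,dv$, and then the quotient-rule factorization $\partial F_B/\partial r_1 = -(1-r_2) f(v)/D^2$, with $f(v) = (1-r_1)^2 v - r_2(v-r_1)^2$ and $D = (1-r_1-r_2)v + r_1 r_2$, reduces everything to the observation that a concave quadratic positive at both endpoints is positive throughout and that a linear function positive at both endpoints is positive throughout. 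This sidesteps the $\beta$-substitution and both calculus lemmas entirely; it also makes the middle limit transparent, since the integral representation has no singularity at $r_1 = 1-r_2$ (where $D$ becomes the positive constant $r_2(1-r_2)$) and one can just substitute directly. One minor point worth flagging: your evaluation $\lim_{r_1\to 1-r_2}\partial H_1/\partial r_1 = \frac{r_2(6-5r_2)}{6(1-r_2)}$ is in fact correct; the paper's stated value $\frac{r_2(6-5r_2)}{1-r_2}$ is off by a factor of $6$ (a typo, confirmed by the consistency check $\partial H_1/\partial r_1 - \partial H_1/\partial r_2 = r_2$ along $r_1 + r_2 = 1$ using Claim B.2's $\frac{(1-r_1)^2}{6r_1}$). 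Both values are positive, so the conclusion of the claim is unaffected, but your version is the accurate one.
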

\begin{proof}[Proof of Claim \ref{cl3}]
When $r_2=0$, $H_1(r_1,r_2)=r_1-r_1\ln{r_1}$. This is an strictly increasing function because  $\frac{\partial H_1(r_1,r_2)}{\partial r_1}=-\ln{r_1}>0$. In addition,  by  L'H\^{o}pital's rule, $\lim_{r_1\to 1}H_1(r_1,r_2)=1$. Thus, Claim \ref{cl3} holds when $r_2=0$. When $0<r_2<1$, I already have that $\lim_{r_1\to 1}H_1(r_1,r_2)=1$. Now 
taking the first order derivative w.r.t. $r_1$ to \eqref{eq57},  I obtain that 
\[\label{eq58}
    \frac{\partial H_1(r_1,r_2)}{\partial r_1}=\frac{(1-r_1)(1-r_2)}{(1-r_1-r_2)^2}[(1-3r_1+\frac{2r_1(1-r_1)}{1-r_1-r_2})\ln{\frac{1-r_2}{r_1}}-2r_2].\tag{B.2.3}
\]
Then to show the first part of Claim \ref{cl3}, it suffices to show that if $(r_1,r_2) \in (0,1)^2$ and $r_1+r_2\neq 1$,
\[\label{eq59}
    (1-3r_1+\frac{2r_1(1-r_1)}{1-r_1-r_2})\ln{\frac{1-r_2}{r_1}}-2r_2>0. \tag{B.2.4}
\]
Let $\beta:=\frac{1-r_2}{r_1}$, then $\beta\in (0,1)\cup (1,\infty)$. Plugging $r_2=1-\beta r_1$ into \eqref{eq59}, it suffices to show that for any $\beta \in (0,1)\cup (1,\infty)$, \[\label{eq60}
    (1-3r_1+\frac{2(1-r_1)}{\beta-1})\ln{\beta}-2(1-\beta r_1)>0. \tag{B.2.5}
\]
Slightly rewriting \eqref{eq60}, it suffices to show that for any $\beta\in (0,1)\cup (1,\infty)$,
\[\label{eq61}
    \frac{\beta+1}{\beta-1}\ln{\beta}-2 +(-\frac{3\beta-1}{\beta-1}\ln{\beta}+2\beta)r_1>0. \tag{B.2.6}
\]
Then, it suffices to show that for  any $\beta\in (0,1)\cup (1,\infty)$, the following two inequalities hold:
\[\label{eq62}
    \frac{\beta+1}{\beta-1}\ln{\beta}-2>0, \tag{B.2.7}
\]
\[\label{eq63}
    -\frac{3\beta-1}{\beta-1}\ln{\beta}+2\beta>0. \tag{B.2.8}
\]
Now to prove \eqref{eq62}, it suffices to show that $f(\beta):=\ln{\beta}-\frac{2(\beta-1)}{\beta+1}>0$ for $\beta\in (1,\infty)$
and $f(\beta)<0$ for $\beta\in (0,1)$. Taking the first order derivative to $f(\beta)$, I obtain that 
\[\label{eq49}
    f'(\beta)=\frac{(\beta-1)^2}{\beta(\beta+1)^2}.\tag{B.2.9}
\]
Therefore, $f(\beta)$ is strictly increasing. Note that $f(1)=0$. Thus, I proved \eqref{eq62}.  To prove \eqref{eq63}, it suffices to show that   $h(\beta):=(1-3\beta)\ln{\beta}+2\beta(\beta-1)>0$ for $\beta\in (1,\infty)$
and $h(\beta)<0$ for $\beta\in (0,1)$. Taking the first order derivative to $h(\beta)$, I obtain that 
\[\label{eq64}
    h'(\beta)=4\beta-3\ln{\beta}+\frac{1}{\beta}-5. \tag{B.2.10}
\]
Now taking the second order derivative to  $h(\beta)$, I obtain that 
\[\label{eq65}
    h''(\beta)=\frac{(4\beta+1)(\beta-1)}{\beta^2}. \tag{B.2.11}
\]
Note that $h''(\beta)>0$ when $\beta>1$, $h''(\beta)<0$ when $\beta<1$ and $h''(1)=0$. This implies that $h'(\beta)$ is minimized at $\beta=1$. Note that $h'(1)=0$. This implies that $h(\beta)$ is strictly increasing. Finally, note that $h(1)=0$. This implies that \eqref{eq63} holds. \\
\indent Using L'H\^{o}pital's rule, I have that $\lim_{r_1 \to 1-r_2}\frac{\partial H_1(r_1,r_2)}{\partial r_1}=\frac{r_2(6-5r_2)}{1-r_2}>0$ for $r_2\in (0,1)$. 
\end{proof}
\begin{claim}\label{cl4}
Fix any $r_1\in (0,1) $, $H_1(r_1,r_2)$ is strictly increasing in $r_2$. Moreover, for any $r_1\in (0,1)$, $\lim_{r_2 \to 1-r_1}\frac{\partial H_1(r_1,r_2)}{\partial r_2}$ exists and is positive.  In addition, for any $r_1\in (0,1) $, as $r_2 \to 1$, $H_1(r_1,r_2)\rightarrow 1$.

\end{claim}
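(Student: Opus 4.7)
The plan is to mirror the strategy used in the proof of Claim~\ref{cl3}, differentiating the explicit formula \eqref{eq57} with respect to $r_2$ for fixed $r_1$. Writing $A:=r_1(1-r_1)^2$, $B:=1-r_1-r_2$ and $C:=1-r_2$, a routine computation (using $2C-B=1+r_1-r_2$ and observing that $\partial_{r_2}[-r_1 r_2(1-r_1)/B]=-A/B^2$) yields
$$
\frac{\partial H_1(r_1,r_2)}{\partial r_2}
=\frac{r_1(1-r_1)^2}{(1-r_1-r_2)^2}\left[\frac{1+r_1-r_2}{1-r_1-r_2}\ln\frac{1-r_2}{r_1}-2\right].
$$
Introducing the very same substitution $\beta:=(1-r_2)/r_1\in(0,1)\cup(1,\infty)$ used in Claim~\ref{cl3}, one has $1-r_1-r_2=r_1(\beta-1)$ and $1+r_1-r_2=r_1(\beta+1)$, and the bracketed factor collapses to $\frac{\beta+1}{\beta-1}\ln\beta-2$. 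This is precisely the expression \eqref{eq62} already shown to be strictly positive for every admissible $\beta$ in the proof of Claim~\ref{cl3}, so $\partial H_1/\partial r_2>0$ and strict monotonicity in $r_2$ follows at once.

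For the boundary derivative at $r_2=1-r_1$, which corresponds to $\beta\to 1$, a third-order Taylor expansion of $\ln\beta$ about $\beta=1$ gives $\frac{\beta+1}{\beta-1}\ln\beta-2=\frac{(\beta-1)^2}{6}+O\bigl((\beta-1)^3\bigr)$. Since $(1-r_1-r_2)^2=r_1^2(\beta-1)^2$ cancels the leading $(\beta-1)^2$ in the bracket, the apparent singularity is removable and one obtains the positive limit
$$
\lim_{r_2\to 1-r_1}\frac{\partial H_1(r_1,r_2)}{\partial r_2}=\frac{(1-r_1)^2}{6r_1}>0.
$$
The limit $\lim_{r_2\to 1}H_1(r_1,r_2)=1$ is already built into the piecewise definition of $H_1$, but it is easy to re-derive directly from \eqref{eq57}: as $r_2\to 1$, $(1-r_2)\ln\tfrac{1-r_2}{r_1}\to 0$, the term $-r_1r_2(1-r_1)/(1-r_1-r_2)\to 1-r_1$, and the trailing $+r_1$ sums to $1$.

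The main step is really the derivative reorganization and the substitution; no new logarithmic inequality has to be established from scratch, since the critical bound $(\beta+1)\ln\beta/(\beta-1)>2$ on $(0,1)\cup(1,\infty)$ was already verified in Claim~\ref{cl3}. The only technical subtlety is keeping track of the sign of $B=1-r_1-r_2$, which flips as $r_2$ crosses $1-r_1$; dividing through by $(1-r_1-r_2)^2$ (always positive) and working with the single-variable expression in $\beta$ makes the sign analysis uniform on both sides of the removable singularity.
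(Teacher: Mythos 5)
Your proposal is correct and follows essentially the same route as the paper: differentiate $H_1$ in $r_2$, factor out the positive prefactor, substitute $\beta=(1-r_2)/r_1$ to reduce the bracket to $\frac{\beta+1}{\beta-1}\ln\beta-2$, and invoke the inequality already established in Claim~\ref{cl3}. Your bracketed factor $\frac{1+r_1-r_2}{1-r_1-r_2}$ is algebraically identical to the paper's $-1+\frac{2(1-r_2)}{1-r_1-r_2}$ in \eqref{eq66}, and your Taylor expansion at $\beta=1$ is just a small cosmetic substitute for the paper's L'H\^{o}pital computation of the boundary derivative.
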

\begin{proof}[Proof of Claim \ref{cl4}]
When $0<r_1<1 $, I already have that $\lim_{r_2\to 1}H_1(r_1,r_2)=1$. Now taking the first order derivative w.r.t. $r_2$ to \eqref{eq57}, with some algebra,  I obtain that 
\[\label{eq66}
    \frac{\partial H_1(r_1,r_2)}{\partial r_2}=\frac{(1-r_1)^2r_1}{(1-r_1-r_2)^2}[(-1+\frac{2(1-r_2)}{1-r_1-r_2})\ln{\frac{1-r_2}{r_1}}-2]. \tag{B.2.12}
\]
Then it suffices to show that if $(r_1,r_2) \in (0,1)^2$ and $r_1+r_2\neq 1$,
\[\label{eq67}
    (-1+\frac{2(1-r_2)}{1-r_1-r_2})\ln{\frac{1-r_2}{r_1}}-2>0.\tag{B.2.13}
\]
Plugging $r_2=1-\beta r_1$ into \eqref{eq67}, it suffices to show that for any $\beta\in (0,1)\cup (1,\infty)$,
\[\label{eq68}
    \frac{\beta+1}{\beta-1}\ln{\beta}-2>0. \tag{B.2.14}
\]
This is exactly \eqref{eq62} and has been shown in the Proof of Claim \ref{cl3}.\\
\indent Using L'H\^{o}pital's rule, I have that $\lim_{r_2 \to 1-r_1}\frac{\partial H_1(r_1,r_2)}{\partial r_2}=\frac{(1-r_1)^2}{6r_1}>0$ for $r_1\in (0,1)$. 
\end{proof}
\begin{claim}\label{cl2}
Fix any $r_2\in (0,1)$, $H_2(r_1,r_2)$ is strictly increasing in $r_1$. Moreover, for $r_2\in (0,1)$,  $\lim_{r_1 \to 1-r_2}\frac{\partial H_2(r_1,r_2)}{\partial r_1}$ exists and is positive. 
\end{claim}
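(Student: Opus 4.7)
The plan is to mirror the strategy used in the proofs of Claims \ref{cl3} and \ref{cl4}: compute the partial derivative of $H_2$ with respect to $r_1$ in closed form on the open region where $r_1+r_2\neq 1$, factor out an obviously positive prefactor, and reduce the remaining bracket to the one-variable inequality $\frac{\beta+1}{\beta-1}\ln\beta-2>0$, which was already proved in the course of Claim \ref{cl3}. Concretely, writing $u:=1-r_1-r_2$ and differentiating \eqref{eq42}, a routine calculation (using $\partial u/\partial r_1=-1$ and $r_1+u=1-r_2$) should yield
\[
\frac{\partial H_2(r_1,r_2)}{\partial r_1}
=\frac{r_2^2(1-r_2)}{u^2}\Bigl[\tfrac{r_1+(1-r_2)}{u}\ln\tfrac{1-r_2}{r_1}-2\Bigr].
\]

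The key step is the change of variables $\beta:=(1-r_2)/r_1$. Since $u=r_1(\beta-1)$ and $r_1+(1-r_2)=r_1(1+\beta)$, the bracket becomes exactly
\[
\tfrac{\beta+1}{\beta-1}\ln\beta-2,
\]
which is strictly positive for every $\beta\in(0,1)\cup(1,\infty)$ by inequality \eqref{eq62} established in the proof of Claim \ref{cl3}. Thus $\partial H_2/\partial r_1>0$ on $(0,1)^2\setminus\{r_1+r_2=1\}$. Since $H_2$ was extended continuously at $r_1=1-r_2$, strict monotonicity on the whole interval $r_1\in(0,1)$ follows from continuity.

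For the one-sided limit as $r_1\to 1-r_2$, both $u\to 0$ and $\beta\to 1$, so the displayed expression is $0/0$. I would expand $\ln\beta$ around $\beta=1$: writing $\beta=1+\epsilon$ gives
\[
\tfrac{\beta+1}{\beta-1}\ln\beta-2=\tfrac{2+\epsilon}{\epsilon}\bigl(\epsilon-\tfrac{\epsilon^2}{2}+\tfrac{\epsilon^3}{3}-\cdots\bigr)-2=\tfrac{\epsilon^2}{6}+O(\epsilon^3),
\]
while $u^2=r_1^2\epsilon^2$. Combining, $\partial H_2/\partial r_1\to r_2^2(1-r_2)/(6r_1^2)$, and evaluating at $r_1=1-r_2$ gives the clean limit $\tfrac{r_2^2}{6(1-r_2)}>0$ for every $r_2\in(0,1)$, which is exactly what is needed. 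Equivalently one can apply L'Hôpital's rule twice to the bracketed quotient; the Taylor approach just makes the cancellation transparent.

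There is no real obstacle here: the only thing to get right is the algebraic simplification after the $\beta$-substitution, and in particular to notice that the bracketed factor is \emph{the same} quantity \eqref{eq62} that already appeared in Claim \ref{cl3}, so no new analytic inequality is needed. The limit computation is routine Taylor expansion, and the continuous extension at $r_1=1-r_2$ then upgrades the strict monotonicity on the punctured interval to strict monotonicity on all of $(0,1)$.
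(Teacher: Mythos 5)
Your proposal is correct and follows essentially the same route as the paper: compute $\partial H_2/\partial r_1$ in closed form, factor out the positive prefactor $\frac{r_2^2(1-r_2)}{(1-r_1-r_2)^2}$, substitute $\beta=(1-r_2)/r_1$ to reduce the bracket to $\frac{\beta+1}{\beta-1}\ln\beta-2>0$ (which is the paper's \eqref{eq62} from Claim~\ref{cl3}), and then evaluate the limit at $r_1\to 1-r_2$. The only cosmetic differences are that your bracket $\frac{r_1+(1-r_2)}{u}$ is the paper's $1+\frac{2r_1}{u}$ written slightly differently (they agree since $1-r_2=r_1+u$), and that you obtain the limit $\frac{r_2^2}{6(1-r_2)}$ by a Taylor expansion of $\ln\beta$ at $\beta=1$ where the paper invokes L'H\^{o}pital's rule; your added remark about bridging strict monotonicity across the continuously extended point $r_1=1-r_2$ is a useful small clarification that the paper leaves implicit.
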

\begin{proof}[Proof of Claim \ref{cl2}]
Taking the first order derivative w.r.t. $r_1$ to \eqref{eq42},  I obtain that 
\[\label{eq54}
    \frac{\partial H_2(r_1,r_2)}{\partial r_1}=\frac{(1-r_2)r_2^2}{(1-r_1-r_2)^2}[(1+\frac{2r_1}{1-r_1-r_2})\ln{\frac{1-r_2}{r_1}}-2]. \tag{B.2.15}
\]
Then it suffices to show that if $(r_1,r_2) \in (0,1)^2$ and $r_1+r_2\neq 1$,
\[\label{eq55}
    (1+\frac{2r_1}{1-r_1-r_2})\ln{\frac{1-r_2}{r_1}}-2>0.\tag{B.2.16}
\]
Plugging $r_2=1-\beta r_1$ into \eqref{eq55}, it suffices to show that for any $\beta\in (0,1)\cup (1,\infty)$,
$ \frac{\beta+1}{\beta-1}\ln{\beta}-2>0$, 
which is exactly \eqref{eq62} and has been shown in the Proof of Claim \ref{cl3}. \\
\indent Using L'H\^{o}pital's rule, I have that $\lim_{r_1 \to 1-r_2}\frac{\partial H_2(r_1,r_2)}{\partial r_1}=\frac{(r_2)^2}{6(1-r_2)}>0$ for $r_2\in (0,1)$. 
\end{proof}
\begin{claim}\label{cl1}
Fix any $r_1\in (0,1]$, $H_2(r_1,r_2)$ is strictly increasing in $r_2$. Moreover, for any $r_1\in (0,1)$, $\lim_{r_2 \to 1-r_1}\frac{\partial H_2(r_1,r_2)}{\partial r_2}$ exists and is positive.  In addition, for any $r_1\in (0,1]$, as $r_2 \to 1$, $H_2(r_1,r_2)\rightarrow 1$.
\end{claim}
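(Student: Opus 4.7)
The plan is to mirror the proof of Claim \ref{cl2} (the analogue for $\partial H_2/\partial r_1$), exploiting the parallel structure. On the open region $(r_1,r_2)\in (0,1)^2$ with $r_1+r_2\ne 1$, I would differentiate the closed-form expression \eqref{eq42} directly and, after grouping the logarithmic and non-logarithmic terms, substitute $\beta=(1-r_2)/r_1$ to rewrite the monotonicity claim as the positivity (for $\beta>1$; negativity for $\beta<1$) of a bracket $\Xi(r_1,\beta)$ of the form $\ln\beta\cdot\bigl[r_1\beta^2+(1-3r_1)\beta+1\bigr]-2(1-r_1)(\beta-1)$ divided by $(\beta-1)^3$, with a positive overall prefactor $r_2/r_1$.

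The main step --- and where I expect the real work to be --- is the algebraic splitting identity
\[
r_1\beta^2+(1-3r_1)\beta+1=(1-r_1)(\beta+1)+r_1(\beta-1)^2,
\]
which rewrites $\Xi$ as a positive linear combination (with weights $1-r_1$ and $r_1$) of the already-controlled quantity $(\beta+1)\ln\beta-2(\beta-1)$, whose sign is pinned down by \eqref{eq62}, and the manifestly signed quantity $(\beta-1)^2\ln\beta$. For $\beta>1$ both summands are strictly positive; for $\beta<1$ both are strictly negative; in either case $\Xi/(\beta-1)^3>0$, so $\partial H_2/\partial r_2>0$ on the open region. Without this split, one would be forced to analyze the sign of $\beta(\beta-3)\ln\beta+2(\beta-1)$ separately, which is possible but substantially messier.

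The remaining cases are routine. For the degenerate curve $r_1+r_2=1$, a Taylor expansion with $\epsilon=\beta-1$ yields $(\beta+1)\ln\beta-2(\beta-1)=\epsilon^3/6+O(\epsilon^4)$ and $(\beta-1)^2\ln\beta=\epsilon^3+O(\epsilon^4)$; combined with the prefactor $r_2/r_1\to (1-r_1)/r_1$, I expect
\[
\lim_{r_2\to 1-r_1}\frac{\partial H_2(r_1,r_2)}{\partial r_2}=\frac{(1-r_1)(1+5r_1)}{6r_1},
\]
which is strictly positive for $r_1\in(0,1)$ and, together with continuity of $H_2$, upgrades strict monotonicity across the boundary $r_1+r_2=1$. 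For $r_1=1$, the explicit formula $H_2(1,r_2)=(1-r_2)\ln(1-r_2)+r_2$ differentiates to $-\ln(1-r_2)>0$ on $(0,1)$. For the convergence $H_2(r_1,r_2)\to 1$ as $r_2\to 1$: when $r_1\in(0,1)$, set $u=(1-r_2)/r_1\to 0$ in \eqref{eq42}; the first summand is $O(u\ln u)\to 0$ while the second summand $-r_1 r_2^2/(1-r_1-r_2)\to 1$; when $r_1=1$, $(1-r_2)\ln(1-r_2)\to 0$ and $r_2\to 1$ give the limit $1$ directly.
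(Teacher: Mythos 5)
Your proof is correct, and its backbone is the same as the paper's: differentiate the closed form of $H_2$, substitute $\beta=(1-r_2)/r_1$, and reduce strict positivity of $\partial H_2/\partial r_2$ to positivity of an expression affine in $r_1$ whose two ``pieces'' each have the right sign. Where you differ is the choice of basis for that affine decomposition, and your choice is genuinely cleaner. The paper groups the bracket as a constant in $r_1$ plus $r_1$ times a coefficient, namely
\[
\tfrac{\beta+1}{\beta-1}\ln\beta-2 \;+\; r_1\Bigl(\tfrac{\beta^2-3\beta}{\beta-1}\ln\beta+2\Bigr),
\]
and must then establish a second nontrivial sign lemma, $g(\beta):=(\beta^2-3\beta)\ln\beta+2(\beta-1)\gtrless 0$ for $\beta\gtrless 1$, by two rounds of differentiation (the paper's (B.2.22)--(B.2.24)). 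Your split $r_1\beta^2+(1-3r_1)\beta+1=(1-r_1)(\beta+1)+r_1(\beta-1)^2$ instead writes the bracket as a weight-$(1-r_1,r_1)$ combination of $(\beta+1)\ln\beta-2(\beta-1)$ (controlled by the already-proved (B.2.7)) and $(\beta-1)^2\ln\beta$, whose sign is immediate. Since the affine decomposition in $r_1$ is unique, the two approaches are algebraically equivalent --- your $r_1$-component minus your $(1-r_1)$-component recovers exactly the paper's coefficient --- but your choice of basis vectors lands the second piece on something manifestly signed, eliminating the auxiliary $g(\beta)$ analysis entirely. The boundary limit $\frac{(1-r_1)(1+5r_1)}{6r_1}$ you obtain by Taylor expansion agrees with the paper's L'H\^opital computation, and your handling of $r_1=1$ and the $r_2\to1$ limit (via $u\ln u\to0$) is correct and matches the paper in substance.
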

\begin{proof}[Proof of Claim \ref{cl1}]
When $r_1=1$, $H_2(r_1,r_2)=(1-r_2)\ln{(1-r_2)}+r_2$. This is an strictly increasing function because  $\frac{\partial H_2(r_1,r_2)}{\partial r_2}=-\ln{(1-r_2)}>0$. In addition,  by  L'H\^{o}pital's rule, $\lim_{r_2\to 1}H_2(r_1,r_2)=1$. Thus, Claim \ref{cl1} holds when $r_1=1$. When $0<r_1<1$, I already have that $\lim_{r_2\to 1}H_1(r_1,r_2)=1$. Now taking the first order derivative w.r.t. $r_2$ to \eqref{eq42},   I obtain that 
\[\label{eq43}
    \frac{\partial H_2(r_1,r_2)}{\partial r_2}=\frac{r_1r_2}{(1-r_1-r_2)^2}[(2-3r_2+\frac{2r_2(1-r_2)}{1-r_1-r_2})\ln{\frac{1-r_2}{r_1}}-2(1-r_1)].\tag{B.2.17}
\]
Then to show the first part of Claim \ref{cl1}, it suffices to show that if $(r_1,r_2) \in (0,1)^2$ and $r_1+r_2\neq 1$,
\[\label{eq44}
    (2-3r_2+\frac{2r_2(1-r_2)}{1-r_1-r_2})\ln{\frac{1-r_2}{r_1}}-2(1-r_1)>0. \tag{B.2.18}
\]
 Plugging $r_2=1-\beta r_1$ into \eqref{eq44}, it suffices to show that for any $\beta\in (0,1)\cup (1,\infty)$,
\[\label{eq45}
    (3\beta r_1-1+\frac{2\beta(1-\beta r_1)}{\beta-1})\ln{\beta}-2(1-r_1)>0.\tag{B.2.19}
\]
Slightly rewriting \eqref{eq45}, it suffices to show that for any $\beta \in (0,1)\cup (1,\infty)$,
\[ \label{eq46}
    \frac{\beta+1}{\beta-1}\ln{\beta}-2 +(\frac{\beta^2-3\beta}{\beta-1}\ln{\beta}+2)r_1>0.\tag{B.2.20}
\]
Then, it suffices to show that for  any $\beta \in (0,1)\cup (1,\infty)$, the following two inequalities hold:
\[ \label{eq47}
    \frac{\beta+1}{\beta-1}\ln{\beta}-2>0,\tag{B.2.21}
\]
\[\label{eq48}
    \frac{\beta^2-3\beta}{\beta-1}\ln{\beta}+2>0.\tag{B.2.22}
\]
Note that \eqref{eq47} is exactly  \eqref{eq62}, which has been shown in the Proof of Claim \ref{cl3}.  To prove \eqref{eq48}, it suffices to show that  $g(\beta):=(\beta^2-3\beta)\ln{\beta}+2(\beta-1)>0$ for $\beta\in (1,\infty)$
and $g(\beta)<0$ for $\beta\in (0,1)$. Taking the first order derivative to $g(\beta)$, I obtain that 
\[\label{eq50}
    g'(\beta)=(2\beta-3)\ln{\beta}+\beta-1. \tag{B.2.23}
\]
Now taking the second order derivative to $g(\beta)$, I obtain that 
\[\label{eq51}
    g''(\beta)=2\ln{\beta}-\frac{3}{\beta}+3.\tag{B.2.24}
\]
Note that $g''(\beta)$ is strictly increasing and $g''(1)=0$. This implies that $g'(\beta)$ is minimized at $\beta=1$. Note that $g'(1)=0$. This implies that $g(\beta)$ is strictly increasing. Finally, note that $g(1)=0$. This implies that \eqref{eq48} holds. \\
\indent Using L'H\^{o}pital's rule, I have that $\lim_{r_2 \to 1-r_1}\frac{\partial H_2(r_1,r_2)}{\partial r_2}=\frac{(1-r_1)(5r_1+1)}{6r_1}>0$ for $r_1\in (0,1)$. 
\end{proof}

I am now ready to prove Lemma \ref{l2}. Fix any $(M_B,M_S)$ in which $0<M_S<M_B<1$ and $M_B+M_S\neq 1$. By Claim \ref{cl3}, \ref{cl4} and the Intermediate Value Theorem, I have that  for any $r_2\in [0,1)$, there exists a unique  $I(r_2)\in (0,1)$ such that $r_1=I(r_2)$ is a solution to $H_1(r_1,r_2)=M_B$. In addition, $I(r_2)$ is a strictly decreasing function. Moreover, by the Implicit Function Theorem\footnote{The Implicit Function Theorem applies for any $r_2\in [0,1)$ because by Claim \ref{cl3} and \ref{cl4}, $\frac{\partial H_1(I(r_2),r_2)}{\partial r_1}>0$ and $\frac{\partial H_1(I(r_2),r_2)}{\partial r_2}>0$ for any $r_2\in [0,1)$.  }, $I(r_2)$ is continuous at each $r_2\in [0,1)$.   By Claim \ref{cl2}, \ref{cl1} and the Intermediate Value Theorem, I have that for any $r_1\in (0,1]$, there exists a  unique $J(r_1)\in (0,1)$ such that $r_2=J(r_1)$ is a solution to $H_2(r_1,r_2)=M_S$.  In addition, $J(r_1)$ is a strictly decreasing function. Moreover, by the Implicit Function Theorem\footnote{The Implicit Function Theorem applies for any $r_1\in (0,1]$ because by Claim \ref{cl2} and \ref{cl1}, $\frac{\partial H_2(r_1,J(r_1))}{\partial r_1}>0$ and $\frac{\partial H_2(r_1,J(r_1))}{\partial r_2}>0$ for any $r_1\in (0,1]$.  }, $J(r_1)$ is continuous at each $r_1\in (0,1]$. Thus it suffices to prove that there exists $r_2\in (0,1)$ such that 
\[\label{eq70}
    J(I(r_2))=r_2. \tag{B.2.25}
\]
Note that $J(I(r_2))$ is a continuous and strictly increasing function for $r_2\in [0,1)$. Also note that $J(I(0))\in (0,1)$ because $I(0) \in (0,1)$ and $J(r_1) \in (0,1)$ when $r_1 \in (0,1)$. Now, by the Intermediate Value Theorem,  it suffices to show that there exists some $r_2\in (0,1)$ such that 
\[\label{eq71}
    J(I(r_2))\le r_2. \tag{B.2.26}
\]
Because $J$ is strictly decreasing, it is equivalent to showing that  there exists some $r_2\in (0,1)$ such that 
\[\label{eq72}
    I(r_2)\ge J^{-1}(r_2). \tag{B.2.27}
\]
By Claim \ref{cl3}, this is equivalent to showing that there exists some $r_2\in (0,1)$ such that 
\[\label{eq73}
    H_1(J^{-1}(r_2),r_2)\le M_B. \tag{B.2.28}
\]
Let $\epsilon:= M_B-M_S >0$. I observe a relationship between the two functions $H_1$ and $H_2$ when $(r_1,r_2)\in (0,1)^2$: 
\[\label{eq74}
    H_1(r_1,r_2)-H_2(r_1,r_2)= (\frac{(1-r_1)^2}{r_2^2}-1)H_2(r_1,r_2)+r_1(2-r_1). \tag{B.2.29}
\]
Note that when $r_2\to 1$, $J^{-1}(r_2) \rightarrow 0$. To see this, suppose not, then by Claim \ref{cl1}, $H_2(J^{-1}(r_2),r_2)\rightarrow 1$ when $r_2\to 1$, a contradiction to $H_2(J^{-1}(r_2),r_2)= M_S<1$. Then by \eqref{eq74}, as $r_2\to 1$, 
\[
\begin{split}
H_1(J^{-1}(r_2),r_2)-M_S & = H_1(J^{-1}(r_2),r_2)-H_2(J^{-1}(r_2),r_2) \\
 & = (\frac{(1-J^{-1}(r_2))^2}{(r_2)^2}-1)H_2(J^{-1}(r_2),r_2)+J^{-1}(r_2)(2-J^{-1}(r_2))\\
 & =  (\frac{(1-J^{-1}(r_2))^2}{(r_2)^2}-1)M_S+J^{-1}(r_2)(2-J^{-1}(r_2))\\
 & \rightarrow (\frac{(1-0)^2}{1^2}-1)M_S+0(2-0)\\
  & = 0.
\end{split}
\]
This implies that  there exists some $r_2\in (0,1)$ such that 
\[\label{eq75}
    |H_1(J^{-1}(r_2),r_2)-M_S|\le \frac{\epsilon}{2}.\tag{B.2.30}
\]
Note that \eqref{eq75} implies \eqref{eq73} because $H_1(J^{-1}(r_2),r_2)\le M_S+\frac{\epsilon}{2}<M_S+\epsilon=M_B$. 
\indent Finally,  suppose that  $r_1+r_2=1$ for the solution $(r_1,r_2)$, then $M_B+M_S=H_1(r_1,r_2)+H_2(r_1,r_2)=1$ by the definition of $H_1(r_1,r_2)$ and $H_2(r_1,r_2)$, a contradiction to the assumption that $M_B+M_S\neq 1$. Therefore, I have that $r_1+r_2\neq 1$ for the solution $(r_1,r_2)$. 
\subsection{Proof of Theorem \ref{t2}}
\textit{Step 1}: The logarithmic trading mechanism maximizes the expected profit under the asymmetric triangular value distribution. To show this, first note that  \eqref{zw2} holds by construction. In addition,  the weighted virtual value  is non-positive for any value profile outside the support and positive for the value profile $(1,0)$.   Then any  DSIC and EPIR trading mechanism in which 1) ex-post participation constraints are binding for zero-value buyer and one-value seller, and 2) trade does not take place when $r_2v_B - (1-r_1) v_S<r_1r_2$ and trade takes place with probability one when $(v_B,v_S) =  (1,0)$ is an optimal trading mechanism. It is straightforward to see that the logarithmic trading mechanism is such a mechanism.\\
\textit{Step 2}:  The asymmetric triangular value distribution minimizes the expected profit under the logarithmic trading mechanism. I use the duality theory to show this. Note that the asymmetric triangular value distribution is a feasible value distribution by construction.  By the weighted virtual value representation, the value of \eqref{p} given the random double auction and the symmetric triangular value distribution is $Pr(1,0)\times (1-0)=r_1(1-r_2)$. Second,  the constraints in \eqref{D} hold for all value profiles. To see this,  note  that for any value profile $v=(v_B,v_S)$  inside the support (or $v\in AT$),  $\lambda_B^
{**}v_B+\lambda_S^{**} v_S + \mu^{**}  = t^{**}(v)$ by \eqref{eq29}. Also,  for any value profile $v=(v_B,v_S)$ in which $r_2v_B-(1-r_1)v_S=r_1r_2$,  $\lambda_B^
{**}v_B+\lambda_S^{**} v_S + \mu^{**}  = 0=t^{**}(v)$ because $\lambda_B^
{**}=\frac{1-r_1-r_2}{(1-r_1)\ln{\frac{1-r_2}{r_1}}}$, $\lambda_S^
{**}=-\frac{1-r_1-r_2}{r_2\ln{\frac{1-r_2}{r_1}}}$ and $\mu^
{**}=-\frac{r_1(1-r_1-r_2)}{(1-r_1)\ln{\frac{1-r_2}{r_1}}}$.  Then,  for any value profile $v=(v_B,v_S)$  in which $r_2v_B-(1-r_1)v_S<r_1r_2$, $\lambda_B^{**} v_B+\lambda_S^{**} v_S + \mu^{**} < 0=t^{**}(v)$ because $\lambda_B^{**}>0$ and $\lambda_S^{**}<0$.  Finally,  the value of \eqref{D} given the constructed dual variables is equal to $r_1(1-r_2)$ by simple calculation. The details of the constructed dual variables as well as the characterization are given in Appendix \ref{ae}.
\section{Proof for Section \ref{s6}: Theorem \ref{t3}}\label{ac}
\textbf{\textit{Step 1: Narrow down the search to a class of mechanisms.}}\\
\indent I divide all deterministic DSIC and EPIR trading mechanisms that satisfy the properties stated in Proposition \ref{p1} into the following four classes:\\
\textit{Class 1}: the trade boundary touches on the value profiles $(c_1,1)$ and $(0,c_2)$ for some $0\le c_1\le 1,  0\le c_2\le 1$.\\
\textit{Class 2}: the trade boundary touches on the value profiles $(0,c_1)$ and $(1,c_2)$ for some $0\le c_1\le 1,  0\le c_2\le 1$.\\
\textit{Class 3}: the trade boundary touches on the value profiles $(c_1,0)$ and $(c_2,1)$ for some $0\le c_1\le 1,  0\le c_2\le 1$.\\
\textit{Class 4}: the trade boundary touches on the value profiles $(c_1,0)$ and $(1,c_2)$ for some $0< c_1\le 1,  0\le c_2< 1$\footnote{The cases where $c_1=0$ are included in \textit{Class 2}, and the cases where $c_2=1$ are included in \textit{Class 3}. }.\\
\indent By $(iv)$ of Proposition \ref{p1}, I can show that the profit from the four vertices $(0,0),(0,1),(1,0), (1,1)$ will never be strictly positive for any mechanism in  \textit{Class 1}, \textit{Class 2} or \textit{Class 3}. To see this, note that  for any mechanism in  \textit{Class 1}: $t(0,0)=0-c_2=-c_2\le 0$, $t(0,1)=0$, $t(1,0)=(1-0)\cdot 1-1-1=-1<0$, $t(1,1)=(1-1)\cdot 1-(1-c_1)\cdot 1=-(1-c_1)\le 0$; for any mechanism in \textit{Class 2}: $t(0,0)=0-c_1=-c_1\le 0$, $t(0,1)=0$, $t(1,0)=(1-0)\cdot 1-1-c_2=-c_2\le 0$, $t(1,1)= 0$;  for any mechanism in \textit{Class 3}: $t(0,0)=0$, $t(0,1)=0$, $t(1,0)=(1-0)\cdot 1-(1-c_1)-1=-(1-c_1)\le 0$, $t(1,1)= 0-(1-c_2)=-(1-c_2)\le 0$.\\
\indent When $M_B+M_S\le 1$, consider the joint distribution that puts probability  masses $M_B$, $M_S$ and $1-M_B-M_S$ on the value profiles (1,0), (0,1) and (0,0) respectively. It is straightforward  to verify that this is a feasible value distribution; moreover, the profit under this joint distribution cannot be strictly positive.  When $M_B+M_S> 1$,  consider the joint distribution that puts probability masses of  $1-M_S$, $1-M_B$ and $M_B+M_S-1$ on the value profiles (1,0), (0,1) and (1,1) respectively. It is straightforward to verify that this is a feasible value distribution; moreover,  the profit under this joint distribution cannot be strictly positive. Therefore, I can restrict attention to \textit{Class 4} only.\\
\textbf{\textit{Step 2: Identify an upper bound of the profit guarantee.}}\\
\indent I propose a relaxation of \eqref{D} by ignoring many constraints. Specifically, the only remaining ones are the constraints for  four vertices (0,0), (1,0), (0,1) and (1,1) as well as two value profiles $(c_1,0)$ and $(1,c_2)$ on the trade boundary. Formally, I have the following relaxed problem: \[
\max_{\lambda_B,\lambda_S,\mu \in \mathcal{R}}\lambda_BM_B +\lambda_S M_S+\mu \tag{D'}\label{D'}\]subject to
\[\label{eq76}
    \mu \le 0, \tag{C.1.1}
\]
\[\label{eq77}
    \lambda_B c_1 +\mu \le 0, \tag{C.1.2}
\]
\[\label{eq78}
    \lambda_B +\lambda_S c_2 +\mu \le 0, \tag{C.1.3}
\]
\[\label{eq79}
    \lambda_S +\mu \le 0, \tag{C.1.4}
\]
\[\label{eq80}
    \lambda_B +\lambda_S +\mu \le 0, \tag{C.1.5}
\]
\[\label{eq81}
    \lambda_B +\mu \le c_1-c_2. \tag{C.1.6}
\]
Note that the value of \eqref{D'}, denoted by $val$\eqref{D'},  is weakly greater than the value of \eqref{D}. Now I will find an upper bound of the value of \eqref{D'} across $0<c_1\le 1$ and $0\le c_2<1$,  and show that it is attainable by constructing deterministic DSIC and EPIR trading mechanisms and a feasible value distribution.   I discuss four cases:\\
\textit{Case 1}: $\lambda_B\le 0, \lambda_S\le 0$. Note that  by \eqref{eq76}, $val$\eqref{D'}$\le 0$ for any $0<c_1\le 1$ and $0\le c_2<1$.\\
\textit{Case 2}: $\lambda_B\ge 0, \lambda_S\ge 0$. Note that  by \eqref{eq76}, \eqref{eq80} and that $M_B>M_S$, 
\[
    \begin{split}
       \lambda_BM_B +\lambda_S M_S+\mu & \le (\lambda_B+\lambda_S)M_B+\mu \\
       & = (\lambda_B+\lambda_S+\mu)M_B+\mu(1-M_B)\\ 
       & \le 0.
    \end{split}
\]
Thus, $val$\eqref{D'}$\le 0$ for any $0<c_1\le 1$ and $0\le c_2<1$.\\
\textit{Case 3}: $\lambda_B\le 0, \lambda_S\ge 0$. By the same argument as in \textit{Case 2}, $val$\eqref{D'}$\le 0$ for any $0<c_1\le 1$ and $0\le c_2<1$.\\
\textit{Case 4}: $\lambda_B\ge 0, \lambda_S\le 0$. If $c_1=1$, then $
       \lambda_BM_B +\lambda_S M_S+\mu  \le \lambda_B+\mu  \le 0$
 where the first inequality follows from  $\lambda_B\ge 0 \ge \lambda_S$ and the second inequality follows from \eqref{eq77}. If $c_2=0$, then $
       \lambda_BM_B +\lambda_S M_S+\mu \le \lambda_B+\mu  
       \le 0$ 
 where the first inequality follows from  $\lambda_B\ge 0 \ge \lambda_S$ and the second inequality follows from \eqref{eq78}. If $c_1\le c_2$, then $
       \lambda_BM_B +\lambda_S M_S+\mu \le \lambda_B+\mu  
       \le 0$  where the first inequality follows from  $\lambda_B\ge 0 \ge \lambda_S$ and the second inequality follows from \eqref{eq81}.
Therefore,  I can restrict attention to $0<c_2 < c_1<1$, because otherwise  the profit guarantee cannot be strictly positive.  Now I am left with \eqref{eq77}, \eqref{eq78} and \eqref{eq81} as they imply the other three constraints. Note that at least one of \eqref{eq77}, \eqref{eq78} and \eqref{eq81} is binding, otherwise I can increase the value of \eqref{D'}, for example,  by increasing $\lambda_B$ by a small amount. I thus discuss three situations:\\
$(a): \lambda_B c_1 +\mu = 0$.\\
\indent Plugging $\lambda_B=-\frac{\mu}{c_1}$ into \eqref{eq78} and \eqref{eq81}, I obtain that
\[\label{eq82}
    \lambda_S \le \frac{1-c_1}{c_1c_2}\mu,\tag{C.1.7}
\]
\[\label{eq83}
    \mu \ge -\frac{c_1(c_1-c_2)}{1-c_1}. \tag{C.1.8}
\]
Then I have that
\[
    \begin{split}
       \lambda_BM_B +\lambda_S M_S+\mu & = -\frac{\mu}{c_1}M_B+\lambda_S M_S+\mu \\
       & \le -\frac{\mu}{r_1}M_B+\frac{1-c_1}{c_1c_2}\mu M_S+\mu\\ 
       & = (-\frac{M_B}{c_1}+1+\frac{1-c_1}{c_1c_2} M_S)\mu \\ & \le \max\{0, \frac{c_1-c_2}{1-c_1}M_B-\frac{c_1-c_2}{c_2}M_S-\frac{c_1(c_1-c_2)}{1-c_1}\}, 
    \end{split}
\]
where the first inequality follows from \eqref{eq82} and the second inequality follows from \eqref{eq83}.\\
$(b): \lambda_B +\mu =c_1-c_2$.\\
\indent Plugging $\lambda_B=c_1-c_2-\mu$ into \eqref{eq77} and \eqref{eq78},  I obtain that 
\[\label{eq84}
    \lambda_S \le -\frac{c_1-c_2}{c_2}\mu,\tag{C.1.9}
\]
\[\label{eq85}
    \mu \le -\frac{c_1(c_1-c_2)}{1-c_1}.\tag{C.1.10}
\]
Then I have that
\[
    \begin{split}
       \lambda_BM_B +\lambda_S M_S+\mu & = (c_1-c_2-\mu)M_B+\lambda_S M_S+\mu \\
       & = (c_1-c_2)M_B+\lambda_S M_S+(1-M_B)\mu 
       \\ & \le  \frac{c_1-c_2}{1-c_1}M_B-\frac{c_1-c_2}{c_2}M_S-\frac{c_1(c_1-c_2)}{1-c_1},
    \end{split}
\]
where the inequality follows from \eqref{eq84} and \eqref{eq85}.\\
$(c): \lambda_B + \lambda_S c_2 +\mu =0$.\\
\indent Plugging $\lambda_B=-\mu-\lambda_S c_2$ into \eqref{eq77} and \eqref{eq81}, I obtain that 
\[
    \lambda_S \ge -\frac{c_1-c_2}{c_2}, \tag{C.1.11}\label{c111}
\]
\[
    \mu \le \frac{c_1c_2}{1-c_1}\lambda_S. \tag{C.1.12}\label{c112}
\]
Then I have that
\[
    \begin{split}
       \lambda_BM_B +\lambda_S M_S+\mu & = (-\lambda_S c_2-\mu)M_B+\lambda_S M_S+\mu \\
       & = (M_S-c_2M_B)\lambda_S+(1-M_B)\mu \\
       & \le  [M_S-c_2M_B+\frac{c_1c_2}{1-c_1}(1-M_B)]\lambda_S
       \\ & \le   \max\{0, \frac{c_1-c_2}{1-c_1}M_B-\frac{c_1-c_2}{c_2}M_S-\frac{c_1(c_1-c_2)}{1-c_1}\},
    \end{split}
\]
where the first inequality follows from \eqref{c112} and the second inequality follows from \eqref{c111}.\\
\indent Let $K(c_1,c_2):=\frac{c_1-c_2}{1-c_1}M_B-\frac{c_1-c_2}{c_2}M_S-\frac{c_1(c_1-c_2)}{1-c_1}$. I am now solving for $\max_{0<c_2<c_1<1}K(c_1,c_2)$. Fix an arbitrary $c_2\in (0,1)$.  Taking the  first order derivative w.r.t. $c_1$, I obtain that 
\[\label{eq88}
    \frac{\partial K(c_1,c_2)}{\partial c_1}=
    \frac{1}{(1-c_1)^2}[(1-c_2)M_B+(1-\frac{M_S}{c_2})(1-c_1)^2-1-c_2]. \tag{C.1.13}
\]
Let $Q(c_1,c_2):=(1-c_2)M_B+(1-\frac{M_S}{c_2})(1-c_1)^2-1-c_2$. If $M_S\ge c_2$, then $Q(c_1,c_2)\le Q(1,c_2)\le 0$, so $K(c_1,c_2)\le K(c_2,c_2)=0$ for any  $c_1\in (c_2,1)$. If $M_S < c_2$, then $Q (c_1,c_2)$ is decreasing in $c_1$. Note that
\[
\begin{split}
    Q(c_2,c_2) & = (1-c_2)M_B-\frac{M_S}{c_2}(1-c_2)^2-c_2(1-c_2)+c_2 \\
       & = (1-c_2)(M_B+M_S-(\frac{M_S}{c_2}+c_2)).
\end{split}
\]
Note that $M_B+M_S-(\frac{M_S}{c_2}+c_2)\le M_B+M_S-2\sqrt{M_S}$. Therefore, if $M_B+M_S-2\sqrt{M_S}\le 0$, $Q(c_1,c_2)\le 0$ for any  $c_1\in(c_2,1)$. Then $K(c_1,c_2)\le K(c_2,c_2)=0$ for any  $c_1\in (c_2,1)$. If  $M_B+M_S-2\sqrt{M_S} > 0$, solving for $Q(c^*_1(c_2),c_2)=0$, I obtain that (I ignore the other solution which exceeds 1)
\[\label{eq89}
   c^*_1(c_2)=1-\sqrt{\frac{(1-c_2)(1-M_B)}{1-\frac{M_S}{c_2}}}.\tag{C.1.14} 
\]
If $c^*_1(c_2)\le c_2$, then again $K(c_1,c_2)\le K(c_2,c_2)=0$ for any  $c_1\in (c_2,1)$. Now if 
$ c^*_1(c_2)> c_2$, 
which, by some algebra, is equivalent to  that
\[\label{eq91}
    \sqrt{(1-c_2)(1-\frac{M_S}{c_2})}> \sqrt{1-M_B}, \tag{C.1.15}
\]
then $K(c_1,c_2)$ is maximized at $c_1=c^*_1(c_2)$, and the maximized value  $K(c_1^*(c_2),c_2)$, by some algebra, is equal to
\[\label{eq92}
    (\sqrt{(1-c_2)(1-\frac{M_S}{c_2})}-\sqrt{1-M_B})^2. \tag{C.1.16}
\]
Note that \eqref{eq92} is maximized at $c_2^*=\sqrt{M_S}$. And  $c_1^*(c_2^*)=1-\sqrt{1-M_B}$. Note that $c_1^*(c_2^*)>c_2^*$ is equivalent to that $\sqrt{1-M_B}+\sqrt{M_S}<1 $, which, by some algebra, is equivalent to that $M_B+M_S-2\sqrt{M_S} > 0$. Thus, I have found the solution $c_1^*=c_1^*(c_2^*)=1-\sqrt{1-M_B}$ and $c_2^*=\sqrt{M_S}$ when $M_B+M_S-2\sqrt{M_S} > 0$. And the maximized value  $K(c_1^*,c_2^*)=(1-\sqrt{M_S}-\sqrt{1-M_B})^2$.\\
\textbf{\textit{Step 3: Show that the upper bound is attainable.}}\\
\indent The last step is to construct deterministic trading mechanisms whose profit guarantee is $(1-\sqrt{M_S}-\sqrt{1-M_B})^2$ when $M_B+M_S-2\sqrt{M_S} > 0$ (equivalently, when $\sqrt{1-M_B}+\sqrt{M_S}<1$). Consider any deterministic trading mechanism satisfying $(i)$, $(ii)$ and $(iii)$ of Theorem \ref{t3}. Let
$\lambda_B=\frac{1-\sqrt{1-M_B}-\sqrt{M_S}}{\sqrt{1-M_B}}$,$
\lambda_S=-\frac{1-\sqrt{1-M_B}-\sqrt{M_S}}{\sqrt{M_S}}$ and $\mu=-\frac{(1-\sqrt{1-M_B}-\sqrt{M_S})(1-\sqrt{1-M_B})}{\sqrt{1-M_B}}$. I will show that they are feasible for the original dual problem \eqref{D}.\\
\indent Note that the constraint for the value profile (1,0) holds with equality by construction. Then the constraint holds for any \textit{interior} value profile\footnote{A value profile in which trade takes place with probability one is referred to as an interior value profile.}. Indeed,  the constraint is the most stringent for the value profile (1,0) because the trade boundary is non-decreasing. To see this, note that the constraint for any interior value profile $(v_B,v_S)$ is equivalent to that
\[\label{eq94}
    \lambda_Bv_B+b_1(v_B)+\lambda_S v_S - b_2(v_S)+\mu \le 0,\tag{C.1.17}
\]
where $(v_B,b_1(v_B))$ and $(b_2(v_S),v_S)$ are in the trade boundary. Then the L.H.S. of \eqref{eq94} is maximized at (1,0) because that $\lambda_B >0>\lambda_S$ and that $b_1$ as well as $b_2$ are non-decreasing. For any \textit{exterior} value profile\footnote{A value profile in which trade does not take place is referred to as an exterior value profile. Note that by the definition of the trade boundary, a value profile on the trade boundary is also an exterior value profile.}, the constraint also holds if $(ii)$ of Theorem \ref{t3} holds. To see this, note that  given the constructed $\lambda_B$, $\lambda_S$ and $\mu$,  $\lambda_B v_B + \lambda_S v_S+\mu=0$ for the value profiles $(1-\sqrt{1-M_B},0)$ and  $(1,\sqrt{M_S})$. Then, by linearity,  $\lambda_B v_B + \lambda_S v_S+\mu=0$ for any value profile  on the line linking $(1-\sqrt{1-M_B},0)$ and  $(1,\sqrt{M_S})$.  Therefore, if $(ii)$ of Theorem \ref{t3} holds, the constraint also holds for any exterior value profile because that $\lambda_B >0>\lambda_S$ and that the trade boundary is non-decreasing. Finally,  the value of \eqref{D} under the constructed dual variables   is $(1-\sqrt{M_S}-\sqrt{1-M_B})^2$ by simple calculation.\\
\indent Now consider the joint distribution described in Theorem \ref{t3}.   First,  it is straightforward to verify that it is a feasible value distribution. Second, given any trading mechanism satisfying $(i)$, $(ii)$ and $(iii)$ of Theorem \ref{t3}, the value of \eqref{p} is
$(1-\sqrt{M_S}-\sqrt{1-M_B})^2$ under the joint distribution by simple algebra. This finishes the proof.
\begin{remark}
\normalfont If $M_S=0$, then it is common knowledge that $v_S=0$. It is straightforward to show that the trading mechanism in which trade takes place with probability one if and only if $v_B> 1-\sqrt{1-M_B}$ is a maxmin deterministic trading mechanism. The worst-case value distribution puts a probability mass of $\sqrt{1-M_B}$ and $1-\sqrt{1-M_B}$ on the value profile $(1-\sqrt{1-M_B},0)$ and $(1,0)$ respectively.  If $M_B=1$, then it is common knowledge that $v_B=1$. It is straightforward to show that the trading mechanism in which trade takes place with probability one if and only if $v_S< \sqrt{M_S}$ is a maxmin deterministic trading mechanism. The worst-case value distribution puts  probability masses of $\sqrt{M_S}$ and $1-\sqrt{M_S}$ on the value profiles $(1,\sqrt{M_S})$ and $(1,0)$ respectively.
\end{remark}
\section{Proofs for Section \ref{s7}}\label{ad}
\subsection{Proof of Theorem \ref{t4}}
Theorem \ref{t4} has already been shown in Section \ref{s71}.
\subsection{Proof of Theorem \ref{t6}}
First, consider the following  auxiliary  problem:
 \[
\sup_{\pi\in \Delta(V)}\int_v[(v_B-v_S)\mathbb{1}_{v_B\ge v_S}-t^*(v)]d\pi(v) \tag{AP-1}\label{ap1}\]
subject to \[(q^*,t_B^*,t_S^*)\in \arg\sup_{(q,t_B,t_S)\in \mathcal{D}}\int_vt(v)d\pi(v).\]
\begin{lemma}\label{l3}
The value of the problem \eqref{ap1} is weakly higher than that of the problem \eqref{fi}.
\end{lemma}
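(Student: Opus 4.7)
\textbf{Proof plan for Lemma \ref{l3}.} The plan is to show the inequality by a pointwise domination of integrands, using the fact that the two problems have exactly the same feasibility constraint on $\pi$ and on the induced optimal mechanism $(q^*,t_B^*,t_S^*)$. In other words, I will verify that, pointwise in $v$,
\[
q^*(v)(v_B-v_S) \le (v_B-v_S)\mathbb{1}_{v_B\ge v_S},
\]
so that after subtracting $t^*(v)$ from both sides and integrating against any admissible $\pi$, the \eqref{ap1}-objective weakly exceeds the \eqref{fi}-objective at that same $\pi$. Taking the supremum over admissible $\pi$ then yields $\eqref{ap1}\ge \eqref{fi}$.

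First I would observe that the constraint set in the two problems is identical: in both, $\pi$ ranges over $\Delta(V)$ and $(q^*,t_B^*,t_S^*)$ is required to be an optimal DSIC and EPIR mechanism for the intermediary under $\pi$ (with the same tie-breaking convention in favor of the traders). Thus any $\pi$ together with its associated $(q^*,t_B^*,t_S^*)$ is admissible for both problems simultaneously, and I can directly compare the two objectives evaluated at the same pair.

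Next I would establish the pointwise inequality above by case analysis on the sign of $v_B-v_S$. If $v_B\ge v_S$, then $(v_B-v_S)\mathbb{1}_{v_B\ge v_S}=v_B-v_S\ge q^*(v)(v_B-v_S)$ because $q^*(v)\in[0,1]$. If $v_B<v_S$, then $(v_B-v_S)\mathbb{1}_{v_B\ge v_S}=0$ and $q^*(v)(v_B-v_S)\le 0$ since $q^*(v)\ge 0$ and $v_B-v_S<0$. In both cases the pointwise inequality holds, so subtracting the common term $t^*(v)$ and integrating against $\pi$ gives
\[
\int[q^*(v)(v_B-v_S)-t^*(v)]\,d\pi(v)\le \int[(v_B-v_S)\mathbb{1}_{v_B\ge v_S}-t^*(v)]\,d\pi(v).
\]
Taking the supremum of both sides over admissible $\pi$ (with the associated $(q^*,t_B^*,t_S^*)$) yields $\eqref{fi}\le\eqref{ap1}$, which is the claim.

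There is no real obstacle here: the argument is essentially that replacing the actual trading probability $q^*(v)$ by the first-best trading rule $\mathbb{1}_{v_B\ge v_S}$ can only increase the gains-from-trade term, while the transfer term is unchanged. The only point that deserves a careful sentence is that the constraint coupling $\pi$ to $(q^*,t_B^*,t_S^*)$ is the same in both problems, so one does not need to re-optimize the intermediary's mechanism when moving from \eqref{fi} to \eqref{ap1}.
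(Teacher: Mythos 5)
Your proposal is correct and takes exactly the same approach as the paper's one-line proof, which cites precisely the boundedness $q^*(v)\in[0,1]$; you merely spell out the pointwise case analysis and the observation that the constraint coupling $\pi$ to $(q^*,t_B^*,t_S^*)$ is identical in both problems.
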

\begin{proof}
This follows from  trading probability being bounded from below by 0 and bounded from above by 1.
\end{proof}
Now consider another auxiliary problem: \[
\sup_{\pi\in \Delta(V)}\int[(v_B-v_S)\mathbb{1}_{v_B\ge v_S}-t^*(v)]d\pi(v) \tag{AP-2}\label{ap2}\]
subject to \[(q^*,t_B^*,t_S^*)\in \arg\inf_{(q,t_B,t_S)\in \mathcal{D}}\int[(v_B-v_S)\mathbb{1}_{v_B\ge v_S}-t(v)]d\pi(v).\]
\begin{lemma}\label{l4}
The problems \eqref{ap1} and \eqref{ap2} are equivalent.
\end{lemma}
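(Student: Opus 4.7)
\textbf{Proof proposal for Lemma \ref{l4}.} The plan is to observe that the inner $\arg\inf$ in problem \eqref{ap2} reduces to the inner $\arg\sup$ in problem \eqref{ap1}, because the only term in the AP-2 constraint that distinguishes it from AP-1 is a quantity that does not depend on the mechanism. Once this identification is made, both the outer objective and the constraint set coincide, and equivalence follows.

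First I would fix an arbitrary $\pi\in\Delta(V)$ and split the integrand appearing in the AP-2 constraint:
\[
\int\bigl[(v_B-v_S)\mathbb{1}_{v_B\ge v_S}-t(v)\bigr]d\pi(v)
= \int (v_B-v_S)\mathbb{1}_{v_B\ge v_S}\,d\pi(v)-\int t(v)\,d\pi(v).
\]
The first term is a constant in the mechanism $(q,t_B,t_S)$, since it depends only on $\pi$. Therefore a mechanism minimizes the full expression over $\mathcal{D}$ if and only if it maximizes $\int t(v)d\pi(v)$ over $\mathcal{D}$. Thus
\[
\arg\inf_{(q,t_B,t_S)\in\mathcal{D}}\int\bigl[(v_B-v_S)\mathbb{1}_{v_B\ge v_S}-t(v)\bigr]d\pi(v)
= \arg\sup_{(q,t_B,t_S)\in\mathcal{D}}\int t(v)\,d\pi(v),
\]
which is precisely the inner constraint set in \eqref{ap1}. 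Hence for every $\pi$, the set of admissible intermediary mechanisms in \eqref{ap2} is the same as that in \eqref{ap1}.

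Finally, the outer objective $\int[(v_B-v_S)\mathbb{1}_{v_B\ge v_S}-t^*(v)]d\pi(v)$ is literally identical in \eqref{ap1} and \eqref{ap2}, so once the admissible pairs $(\pi,(q^*,t_B^*,t_S^*))$ coincide, the two maximization problems are the same, and in particular share the same value. There is no genuine obstacle here; the only thing to watch is that the rewriting is valid pointwise in $\pi$, which it is since the decoupled constant term depends only on $\pi$ and not on $(q,t_B,t_S)$. The reason for introducing \eqref{ap2} despite its trivial equivalence with \eqref{ap1} is that its formulation, in which the intermediary minimizes the regulator's own objective, is the form convenient for the subsequent saddle-point argument used to evaluate the common value.
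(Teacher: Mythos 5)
Your proof is correct and takes essentially the same approach as the paper: you observe that the outer objectives in \eqref{ap1} and \eqref{ap2} are identical, and that the inner constraint sets coincide because $\int(v_B-v_S)\mathbb{1}_{v_B\ge v_S}\,d\pi(v)$ is a constant in the mechanism, so minimizing the AP-2 integrand over $\mathcal{D}$ is the same as maximizing $\int t(v)\,d\pi(v)$ over $\mathcal{D}$. The paper states exactly this in one sentence; your write-up simply makes the decomposition explicit.
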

\begin{proof}
This follows from   their objective functions being the same and   their constraints being equivalent.
\end{proof}
\begin{lemma}\label{l5}
The symmetric triangular value distribution with $r=\frac{1}{2}$ is a solution to \eqref{ap2}. In addition, the value of \eqref{ap2} is $\frac{1}{2}$.
\end{lemma}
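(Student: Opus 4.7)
My plan is to establish the claim by constructing a saddle point of the auxiliary zero-sum game underlying \eqref{ap2}. Let $U(\pi, M) := \int[(v_B - v_S)\mathbb{1}_{v_B \ge v_S} - t(v)]\, d\pi(v)$ for $M = (q,t_B,t_S) \in \mathcal{D}$ with $t = t_B - t_S$, where the regulator maximizes over $\pi \in \Delta(V)$ and the intermediary minimizes over $M$. Because $\inf_M U(\pi, M) = \int (v_B - v_S)\mathbb{1}_{v_B \ge v_S}\, d\pi - \sup_M \int t\, d\pi$, the inner minimization coincides with the intermediary's profit maximization, so $\sup_\pi \inf_M U(\pi, M)$ equals the value of \eqref{ap2}. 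A Nash equilibrium of this zero-sum game therefore pins down both the value and a maximizer at once. My candidate is $\pi^*$, the symmetric triangular value distribution with $r = 1/2$, paired with $M^*$, the random double auction with $r = 1/2$.

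The inequality $U(\pi^*, M^*) \le U(\pi^*, M)$ for every $M \in \mathcal{D}$ is inherited directly from Step 1 of the proof of Theorem \ref{t1}: under the symmetric triangular distribution the random double auction maximizes the intermediary's expected profit over $\mathcal{D}$, which is exactly what minimizing $U(\pi^*, \cdot)$ demands, so nothing new needs to be checked here.

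The reverse inequality $U(\pi, M^*) \le U(\pi^*, M^*)$ I would handle by a pointwise ex-post argument. Substituting $r = 1/2$ into the closed-form expression for the random double auction yields $t^*(v) = v_B - v_S - 1/2$ on $\{v_B - v_S > 1/2\}$ and $t^*(v) = 0$ elsewhere, so the integrand $\phi(v) := (v_B - v_S)\mathbb{1}_{v_B \ge v_S} - t^*(v)$ equals $1/2$ on $\{v_B - v_S > 1/2\}$, equals $v_B - v_S$ on $\{0 \le v_B - v_S \le 1/2\}$, and equals $0$ on $\{v_B < v_S\}$. In particular $\phi(v) \le 1/2$ pointwise, so $U(\pi, M^*) \le 1/2$ for every $\pi \in \Delta(V)$. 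Because $\pi^*$ is supported on $\{v_B - v_S \ge 1/2\}$ and its atom at $(1,0)$ also satisfies $v_B - v_S > 1/2$, we obtain $\int \phi\, d\pi^* = 1/2$, which saturates the bound.

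Combining the two inequalities through the standard minimax chain gives $\sup_\pi \inf_M U(\pi, M) = U(\pi^*, M^*) = 1/2$, so $\pi^*$ is a solution to \eqref{ap2} and the value of \eqref{ap2} is $1/2$. The only conceptually subtle step is guessing the correct saddle-point candidate; once $\pi^*$ and $M^*$ are in hand, the verification reduces to a citation of Theorem \ref{t1} plus a one-line pointwise comparison $\phi \le 1/2$, so I do not anticipate any real technical obstacle.
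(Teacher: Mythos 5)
Your proof is correct and follows essentially the same saddle-point route as the paper: Step 1 (optimality of the random double auction) cites Theorem \ref{t1} exactly as the paper does, and Step 2 reduces to the same pointwise computation that the integrand equals $\tfrac12$ on $\{v_B-v_S\ge\tfrac12\}$, equals $v_B-v_S$ on the strip, and vanishes below the diagonal, so any distribution supported in $\{v_B-v_S\ge\tfrac12\}$ is a best response. The only cosmetic difference is that you explicitly derive $t^*(v)=v_B-v_S-\tfrac12$ on the trading region from the quadratic payment rules, whereas the paper states the resulting integrand directly.
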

\begin{proof}
\textit{Step 1}: The random double auction with $r=\frac{1}{2}$ minimizes $\int[(v_B-v_S)\mathbb{1}_{v_B\ge v_S}-t(v)]d\pi(v)$ under the symmetric triangular value distribution with $r=\frac{1}{2}$.  To show this, note that it is equivalent to showing that  the random double auction with $r=\frac{1}{2}$ maximizes the intermediary's expected profit under the symmetric triangular value distribution with $r=\frac{1}{2}$. This is a direct implication of \textit{Step 1} in the proof of Theorem \ref{t1}.\\
\textit{Step 2}:  The symmetric triangular value distribution $r=\frac{1}{2}$ maximizes $\int[(v_B-v_S)\mathbb{1}_{v_B\ge v_S}-t(v)]d\pi(v)$ under the random double auction with $r=\frac{1}{2}$. To show this, note that  under the random double auction with $r=\frac{1}{2}$, I have that  \[
(v_B-v_S)\mathbb{1}_{v_B\ge v_S}-t(v)=
\left\{
\begin{array}{rcl}
\frac{1}{2}     &      & {\normalfont \text{if}\quad v_B-v_S\ge \frac{1}{2},}\\
v_B-v_S      &      & {\normalfont \text{if}\quad 0<v_B-v_S<\frac{1}{2},}
\\
0     &      & {\normalfont \text{if}\quad v_B-v_S\le 0.}
\end{array} \right.\]
Therefore, any value distribution whose support is contained in  $\{v\in V|v_B-v_S\ge \frac{1}{2}\}$ maximizes $\int[(v_B-v_S)\mathbb{1}_{v_B\ge v_S}-t(v)]d\pi(v)$ under the random double auction with $r=\frac{1}{2}$. It is straightforward that the symmetric triangular value distribution with $r=\frac{1}{2}$ is such a value distribution.\\
\indent The first statement of this lemma follows immediately from these two steps and the properties of  a saddle point.\\
\indent Finally, the value of the problem \eqref{ap2} is $\frac{1}{2}$ by simple calculation.
\end{proof}
\begin{lemma}\label{l6}
The value of the problem \eqref{fi} is at least $\frac{1}{2}$. 
\end{lemma}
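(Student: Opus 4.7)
The plan is to exhibit a specific feasible distribution that already achieves traders' welfare $\tfrac{1}{2}$, namely the symmetric triangular value distribution $\pi^{*}$ with $r=\tfrac{1}{2}$ (which corresponds to $M_B=\tfrac{7}{8}$ and $M_S=\tfrac{1}{8}$). Because \eqref{fi} is a supremum, producing any feasible $\pi$ attaining this value is enough.

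The first step is to identify the intermediary's best response under $\pi^{*}$ once the tie-breaking rule is invoked. By Step 1 of the proof of Theorem \ref{t1}, the 0-weighted-virtual-value condition \eqref{eq16} makes the intermediary indifferent across all DSIC and EPIR mechanisms that (a) bind the ex-post participation constraints for the zero-value buyer and the one-value seller, (b) do not trade outside $ST$, and (c) trade with probability one at $(1,0)$. In particular, the \emph{efficient} mechanism $q^{\mathrm{eff}}(v)=\mathbb{1}\{v_B-v_S>\tfrac{1}{2}\}$, together with the envelope payments of Proposition \ref{p1}, lies in this class. Among all such optimal mechanisms, $q^{\mathrm{eff}}$ maximizes $\int q(v)(v_B-v_S)\,d\pi^{*}$, since it puts the full trading probability on the whole support. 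Hence the tie-breaking rule selects (a mechanism payoff-equivalent to) $q^{\mathrm{eff}}$.

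The second step is a direct computation of traders' welfare under $(q^{\mathrm{eff}},t^{\mathrm{eff}})$ at $\pi^{*}$. By Theorem \ref{t1}, every optimal mechanism yields the intermediary an expected profit equal to $r^{2}=\tfrac{1}{4}$. Since $\pi^{*}$ is supported inside $ST\subset\{v_B>v_S\}$, the first-best gain from trade under $\pi^{*}$ equals $\int(v_B-v_S)\,d\pi^{*}=M_B-M_S=2r-r^{2}=\tfrac{3}{4}$, using the symmetric-case identity $r=1-\sqrt{1-(M_B-M_S)}$. Therefore
\[
\int\bigl[q^{\mathrm{eff}}(v)(v_B-v_S)-t^{\mathrm{eff}}(v)\bigr]\,d\pi^{*}(v)\;=\;(M_B-M_S)-r^{2}\;=\;\tfrac{3}{4}-\tfrac{1}{4}\;=\;\tfrac{1}{2},
\]
which establishes the lemma.

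The only step that could be considered delicate is confirming that $q^{\mathrm{eff}}$ genuinely lies in the intermediary's best-response correspondence at $\pi^{*}$; but this reduces entirely to \eqref{eq16}, which is verified for $\pi^{*}$ in Section \ref{s51}, so no new technical obstacle arises beyond what the proof of Theorem \ref{t1} already handled. As a sanity check, the expression $(M_B-M_S)-r^{2}=2r(1-r)$ is maximized at $r=\tfrac{1}{2}$, which both explains the choice of $r$ and, combined with Lemmas \ref{l3}--\ref{l5}, indicates that the bound $\tfrac{1}{2}$ is in fact tight.
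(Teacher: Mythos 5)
Your proof is correct and follows essentially the same route as the paper: fix the symmetric triangular distribution with $r=\tfrac12$, use the 0-weighted-virtual-value condition to place the efficient trading rule in the intermediary's best-response set so the tie-break applies, and evaluate the objective at $\tfrac12$. The only cosmetic difference is that you unpack the paper's ``simple calculation'' explicitly as $(M_B-M_S)-r^2 = 2r(1-r) = \tfrac12$, which is a clean way to see it and also explains why $r=\tfrac12$ is chosen; this does not alter the argument.
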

\begin{proof}
Recall that given the the symmetric triangular value distribution with $r=\frac{1}{2}$,  the intermediary is indifferent between trading and no trading for any value profile in the support except for $(1,0)$, so an \textit{efficient} trading mechanism\footnote{Precisely, trade takes place with probability one for any report in the support of the symmetric triangular value distribution with $r=\frac{1}{2}$, and 0 otherwise.} is an optimal trading mechanism. The value of the objective function under the symmetric triangular value distribution with $r=\frac{1}{2}$ and the efficient trading mechanism is $\frac{1}{2}$ by simple calculation.
\end{proof}
\indent Theorem \ref{t6} follows immediately from Lemma \ref{l3}, \ref{l4}, \ref{l5} and \ref{l6}.
\section{Illustration of Theorem \ref{t2}}\label{ae}
\subsection{Characterization of Logarithmic Trading Mechanism}
For the asymmetric case, it is natural to attach different weights to the submitted bid price and the submitted ask price.  I thus  form an educated guess of the trading region in a maxmin trading mechanism: trade takes place with positive probability if and only if the difference between a \textit{weighted}  bid (true value of the buyer) $r_2\cdot v_B$ and a (different) \textit{weighted}  ask (true value of the seller) $(1-r_1)\cdot v_S$ exceeds a threshold $r_1r_2>0$, or $r_2v_B-(1-r_1)v_S>r_1r_2$. In addition, again,  the support of a worst-case value distribution coincides with the trading region (including the boundary). Together with $(iv)$ of Proposition \ref{p1}, the complementary slackness condition (\ref{cs}) can be expressed as follows: for any $(v_B,v_S)\in AT$,
\[ 
\lambda_B^{**}v_B+\lambda_S^{**}v_S+\mu^{**} = (v_B-v_S)q^{**}(v_B,v_S)-\int_{\frac{1-r_1}{r_2}v_S+r_1}^{v_B} q^{**}(x,v_S)dx-\int_{v_S}^{\frac{r_2}{1-r_1}(v_B-r_1)} q^{**}(v_B,x)dx.\tag{CS-2}\label{eq29}
\]
Now I solve for  the trading rule $q^{**}$. Similarly,  I  first take the first order derivatives with respect to $v_B$ and $v_S$ respectively, and I obtain that for any $(v_B,v_S)\in AT$,
\[ \label{eq30}
(v_B-v_S)\frac{\partial q^{**}(v_B,v_S)}{\partial v_B}-\frac{\partial \int_{v_S}^{\frac{r_2}{1-r_1}(v_B-r_1)} q^{**}(v_B,x)dx }{\partial v_B}=\lambda_B^{**}, \tag{FOC-B-2}
\]
\[ \label{eq31}
(v_B-v_S)\frac{\partial q^{**}(v_B,v_S)}{\partial v_S}-\frac{\partial \int_{\frac{1-r_1}{r_2}v_S+r_1}^{v_B} q^{**}(x,v_S)dx }{\partial v_S}=\lambda_S^{**}. \tag{FOC-S-2}
\]
Then, I take the cross partial derivative, with some algebra, I obtain that
\[ \label{eq32}
(v_B-v_S)\frac{\partial q^{**}(v_B,v_S)}{\partial v_B\partial v_S}=0.
\]
Thus, $q^{**}(v_B,v_S)$ is separable,  which can be expressed  as the sum of two functions $f^{**}$ and $g^{**}$: for any $(v_B,v_S)\in AT$,
\[\label{eq33}
    q^{**}(v_B,v_S)=f^{**}(v_B)+g^{**}(v_S).\tag{E.1.1}
\]
Again, the separable nature is crucial for solving \eqref{eq29}.
Plugging \eqref{eq33} into \eqref{eq30} and \eqref{eq31}, I obtain that  for any $(v_B,v_S)\in AT$,
\[\label{eq34}
[(1-\frac{r_2}{1-r_1})v_B+\frac{r_1r_2}{1-r_1}](f^{**})'(v_B)-\frac{r_2}{1-r_1}[f^{**}(v_B)+g^{**}(\frac{r_2}{1-r_1}(v_B-r_1))]=\lambda_B^{**},\tag{E.1.2}
\]
\[\label{eq35}
[(\frac{1-r_1}{r_2}-1)v_S+r_1](g^{**})'(v_S)+\frac{1-r_1}{r_2}[f^{**}(\frac{1-r_1}{r_2}v_S+r_1)+g^{**}(v_S)]=\lambda_S^{**}.\tag{E.1.3}
\]
Note that $f^{**}(v_B)+g^{**}(\frac{r_2}{1-r_1}(v_B-r_1))=0$ and  that $f^{**}(\frac{1-r_1}{r_2}v_S+r_1)+g^{**}(v_S)=0$ because trade does not take place in the boundary of the trading region, i.e.,  $q^{**}(v_B,v_S)=0$ for $r_2v_B-(1-r_1)v_S=r_1r_2$.  Then it is straightforward to  solve for $f^{**}(v_B)$ and $g^{**}(v_S)$, and I obtain that 
\[\label{eq36}
    f^{**}(v_B)=\frac{(1-r_1)\lambda_B^{**}}{1-r_1-r_2}\ln{[(1-\frac{r_2}{1-r_1})v_B+\frac{r_1r_2}{1-r_1})]}+c_B^{**},\tag{E.1.4}
\]
\[\label{eq37}
    g^{**}(v_S)=\frac{r_2\lambda_S^{**}}{1-r_1-r_2}\ln{[(\frac{1-r_1}{r_2}-1)v_S+r_1]}+c_S^{**},\tag{E.1.5}
\]
where $c_B^{**}$ and $c_S^{**}$ are some constants. Observe that $$
g^{**}(\frac{r_2(v_B-r_1)}{1-r_1})=\frac{r_2\lambda^{**}_S}{1-r_1-r_2}\ln{[(1-\frac{r_2}{1-r_1})v_B+\frac{r_1r_2}{1-r_1}]}+c_S^{**}.$$
Then, again, using  that $q^{**}(v_B,v_S)=0$ for $r_2v_B-(1-r_1)v_S=r_1r_2$,  I  have  that
\[\label{eq38}
   (1-r_1)\lambda_B^{**}+r_2\lambda_S^{**}=c_B^{**}+c_S^{**}=0.\tag{E.1.6}
\]
Now plugging \eqref{eq36},\eqref{eq37} and \eqref{eq38} into \eqref{eq33}, I obtain that for any $(v_B,v_S)\in AT$,
\[\label{eq39}
    q^{**}(v_B,v_S)=\frac{(1-r_1)\lambda_B^{**}}{1-r_1-r_2}[\ln(\frac{1-r_1-r_2}{1-r_1}v_B+\frac{r_1r_2}{1-r_1})-\ln(\frac{1-r_1-r_2}{r_2}v_S+r_1)].
\]
Likewise,  to solve for $\lambda_B^{**}$, I let $q^{**}(1,0)$ be $1$ and obtain that   $\lambda_B^{**}=\frac{1-r_1-r_2}{(1-r_1)\ln{\frac{1-r_2}{r_1}}}$. So far I have obtained the trading rule $q^{**}$\footnote{Plugging the trading rule $q^{**}$ into \eqref{eq29}, it is straightforward that $\mu^{**}=-\frac{r_1(1-r_1-r_2)}{(1-r_1)\ln{\frac{1-r_2}{r_1}}}$.}. 
The payment rule $t_B^{**}$ (resp, the transfer rule $t_S^{**}$) is then characterized by $(ii)$ (resp, $(iii)$) of Proposition \ref{p1}. 
\subsection{Characterization of Asymmetric Triangular Value Distribution}
Similar to the symmetric case, I impose a \textit{0-weighted-virtual-value condition} on the joint distribution, stating that weighted virtual value is 0 for any value profile in the support except for the highest joint type. Formally, \[
  \Phi(v)=0,  \quad \forall v\in AT\backslash\{(1,0)\}.\tag{ZWVV-2} \label{zw2}
\] The construction procedure for the joint distribution is exactly the same. Therefore I omit it. Note that the marginal distribution no longer has a uniform distribution part since $v_B-v_S$ is no longer a constant on the  boundary of the trading region due to different weights for the buyer and the seller. The final step is  to make sure that the constructed joint distribution has the known expectations.  Given the marginal distributions for the buyer and the seller, I have a system of two equations \eqref{eq27} and \eqref{eq28}. Lemma \ref{l2} states that a solution exists  for the asymmetric case, details of which are given in  Appendix \ref{ab}.
\section{The Expectation of Buyer's Value is 1}\label{af}
Note that in this case, it is common knowledge that  $v_B=1$. The mechanism now consists of three one-dimensional functions: the trading rule, the payment rule and the transfer rule depend on the seller's submitted ask price only.  Consider the following mechanism. If $a<1-\iota$,
\[
q(a)=
    1-\frac{\ln{(1-a)}}{\ln{\iota}},\]
    \[t_B(a)=1-\frac{\ln{(1-a)}}{\ln{\iota}},\]
    \[t_S(a)=1+\frac{1-\iota-\ln{(1-a)}-a}{\ln{\iota}},\]
where $\iota\in (0,1)$ is the unique solution to \[\iota \ln{\iota}+1-\iota=M_S.\]
If $a\ge 1-\iota$,
\[q(a)=t_B(a)=t_S(a)=0.\] 
Consider the following  distribution of the seller's value: if $v_S\in (0,1-\iota]$, the density function $\pi_S(v_S)=\frac{\iota}{(1-v_S)^2}$; if $v_S=0$, the probability mass $Pr_S(0)=\iota$. \\
\indent It is straightforward to show that this mechanism is a maxmin trading mechanism and this distribution is a worst-case value distribution by a similar argument to \cite{carrasco2018optimal}. 
\begin{remark}
\normalfont Recall the logarithmic trading mechanism. Note that  $q^{**}(1,a)=\frac{1}{\ln\frac{1-r_2}{r_1}}[\ln{(1-r_2)}-\ln(\frac{1-r_1-r_2}{r_2}a+r_1)]$. When $r_1=1$, it is straightforward that it is reduced to the above mechanism. Also, When $v_B=r_1=1$, it is straightforward that  the asymmetric triangular value distribution is reduced to the above distribution. 
\end{remark}

\bibliographystyle{apalike}
\bibliography{abc}
% Appendix

% Head 1

% In the interest of anonymization, please do not include acknowledgements in your submission.
%
%\begin{acks}
%
%	The authors would like to thank Dr. Maura Turolla of Telecom
%	Italia for providing specifications about the application scenario.
%
%	The work is supported by the \grantsponsor{GS501100001809}{National

\end{document}